\newtheorem{theorem}{Theorem}
\newtheorem{proposition}[theorem]{Proposition}
\theoremstyle{definition}
\newtheorem{definition}[theorem]{Definition}
\newtheorem{example}[theorem]{Example}
\newtheorem{remark}[theorem]{Remark}
\newtheorem{corollary}[theorem]{Corollary}
\definecolor{darkblue}{rgb}{0.0,0.0,0.3}
\definecolor{ballblue}{rgb}{0.13, 0.67, 0.8}
\definecolor{bleudefrance}{rgb}{0.19, 0.55, 0.91}
\definecolor{celestialblue}{rgb}{0.29, 0.59, 0.82}
\definecolor{dartmouthgreen}{rgb}{0.05, 0.5, 0.06}
\definecolor{dukeblue}{rgb}{0.0, 0.0, 0.61}
\definecolor{bostonuniversityred}{rgb}{0.8, 0.0, 0.0}
\definecolor{darkseagreen}{rgb}{0.56, 0.74, 0.56}
\definecolor{lightseagreen}{rgb}{0.13, 0.7, 0.67}
\definecolor{mediumseagreen}{rgb}{0.24, 0.7, 0.44}
\definecolor{mediumelectricblue}{rgb}{0.01, 0.31, 0.59}
\begin{document}

\title{A many-to-one job market: more about the core and the competitive salaries\thanks{A. Atay is a Serra H\'{u}nter Fellow. A. Atay and M. N\'{u}\~{n}ez gratefully acknowledge financial support by the Spanish Ministerio de Ciencia e Innovaci\'{o}n through grant PID2020-113110GB-100/AEI/10.130339/501100011033, by the Generalitat de Catalunya through grant 2021-SGR-00306. T. Solymosi gratefully acknowledges financial support from the Hungarian National Research, Development and Innovation Office via the grant NKFI K-146649. This work supported by National Science Foundation under Grant No DMS-1928930 while Ata Atay was in residence at the Mathematical Science Research Institute in Berkeley, California, during the Fall 2023 semester.}}
\author{
Ata Atay\thanks{
Departament de Matem\`{a}tica Econ\`{o}mica, Financera i Actuarial, and Barcelona Economics Analysis Team (BEAT), Universitat de Barcelona, Spain. E-mail: \href{mailto:aatay@ub.edu}{aatay@ub.edu}}
\and 
Marina N\'{u}\~{n}ez\thanks{Departament de Matem\`{a}tica Econ\`{o}mica, Financera i Actuarial, and Barcelona Economics Analysis Team (BEAT), Universitat de Barcelona, Spain. E-mail: \href{mailto:mnunez@ub.edu}{mnunez@ub.edu}}
\and 
Tam\'{a}s Solymosi \thanks{Corvinus Center for Operations Research, and Department of Operations Research and Actuarial Sciences, Corvinus University of Budapest. E-mail: \href{mailto:tamas.solymosi@uni-corvinus.hu}{tamas.solymosi@uni-corvinus.hu}}}

\date{\today}
\maketitle
\begin{abstract}
This paper studies many-to-one assignment markets, or matching markets with wages. Although it is well-known that the core of this model is non-empty, the structure of the core has not been fully investigated. To the known dissimilarities with the one-to-one assignment game, we add that the bargaining set does not coincide with the core and the kernel may not be included in the core. 
Besides, not all extreme core allocations can be obtained by means of a lexicographic maximization or a lexicographic minimization procedure, as it is the case in the one-to-one assignment game.

The maximum and minimum competitive salaries are characterized in two ways: axiomatically and by means of easily verifiable properties of an associated 
directed graph.
Regarding the remaining extreme core allocations of the many-to-one assignment game,  we propose a lexicographic procedure that, for each order on the set of workers, sequentially maximizes or minimizes each worker's competitive salary. This procedure provides all extreme vectors of competitive salaries, that is all extreme core allocations.
\vspace{0.15cm}\\
\noindent \textbf{Keywords:} Many-to-one assignment markets $\cdot$ extreme core allocations $\cdot$ side-optimal allocations $\cdot$ kernel $\cdot$ core \vspace{0.15cm}\\
\noindent \textbf{JEL Classification:} C71 $\cdot$ C78 $\cdot$ D47\vspace{0.15cm}\\
\noindent \textbf{Mathematics Subject Classification (2010):} 05C57 $\cdot$ 91A12 $\cdot$ 91A43
\end{abstract}

\section{Introduction}

We consider a two-sided market, one side formed by a set of firms and the other by a set of workers. Firms want to hire many workers, up to each firm's capacity, but each worker can work for only one firm. Each firm places a non-negative value on each worker, and workers may have a reservation value. Since we assume firms value groups of workers additively, the main data of the market is the value that each firm-worker pair can attain when matched. This value can be transferred by means of the salary the firm pays to each worker it hires.
A matching is an assignment of a group of workers to each firm and a coalitional game is introduced where the worth of a coalition is the highest value that can be obtained by matching firms and workers in the coalition without violating the capacity of each firm. A natural solution concept in this setting is the core, which is the set of allocations of the total value of the market that cannot be improved upon by any coalition. 

The many-to-one assignment market is an extension of the well-known one-to-one \emph{assignment game} introduced by \cite{ss71} to study two-sided markets where there are indivisible goods which are traded between sellers and buyers in exchange for money. 
In their model,  each buyer wants at most one unit of good, and each seller owns exactly one indivisible good.\footnote{\citep{nr15} is a survey on assignment markets and games.} 
In the setting of two-sided assignment markets where each agent has a unit capacity, \cite{ss71} show that the core is always non-empty and each core element is supported by competitive prices (or salaries). Furthermore, the core has a complete lattice structure. Hence, in our job-market setting,  there exists a unique firm-optimal (worker-optimal) core allocation such that the payoff of every firm (worker) is at least as good as under any other core allocation. Moreover, there is an opposition of interest between the two sides of the market when comparing two core allocations: all agents in one side agree on which of the two they prefer. As a consequence, the best core allocation for one side is the worst for the other side of the market. \cite{d82} and \cite{l83} prove that in the firm-optimal core allocation each firm attains its marginal payoff and in the worker-optimal core allocation each worker attains her marginal payoff. As a consequence of that, the optimal stable rules,  that given an assignment market select the core allocation that is optimal for one side of the market, cannot be manipulated by the agents of that side.

One-to-one assignment market games can be extended to many-to-many assignment markets in two different ways. The first one is known as the multiple-partners assignment game \citep{s92} and each agent can establish several partnerships, as many as its capacity allows, but each of them with a different partner. In the second extension, sometimes known as the transportation game \citep{ssetal01}, an agent may establish more than one partnership with a same agent of the opposite side. In both cases the core is proved to be  non-empty. Notice that our many-to-one situation lies in the intersection of these two extensions since the unitary capacity of agents on one side rules out the possibility of more than one partnership between a same firm-worker pair.


Most of the existing results for one-to-one assignment markets cannot be extended to the many-to-many assignment markets (\citealp{so02}; \citealp{s07}): some core allocations may not be supported by competitive prices and 
the core may not be a lattice. Nevertheless, for many-to-one assignment markets the core has a lattice structure based on the partial order on the set of payoffs to the side of the market where agents have unitary capacity. This guarantees in that case the existence of an optimal core allocation for each side of the market,  However, the firm-optimal stable rule may no longer be non-manipulable by the firms.

We can consider two types of many-to-one assignment markets, depending on which side of the market has unitary capacity. This does not affect the core, but makes a big difference when competitive equilibria are considered. If the unitary capacity is on the side that posts prices, that is sellers or workers, then we are in the many-to-one model of \cite{so02} and the core coincides with the set of competitive equilibrium payoff vectors. When the unitary capacity is on the agents that report a demand given some prices, that is, buyers or firms, then we are in the many-to-one model of \cite{k76}, and the core may strictly contain the set of competitive equilibrium payoff vectors (CE payoff vectors) that now coincides with the set of solutions of the dual linear program that finds an optimal matching. In this paper we focus on the first case, the job market with unitary capacity workers of \cite{so02}, and only in the last section we show that, after some adjustments, parallel results can be obtained for Kaneko's buyer-seller market, where buyers have unitary capacity.



In the first part of the paper, we focus on the core of the many-to-one assignment market games and consider other related set-solution concepts different from the core to show that more dissimilarities appear with respect to the one-to-one case.  We prove that, differently from the one-to-one assignment market, the kernel may not be a subset of the core for many-to-one assignment markets. Therefore, the core and the classical bargaining set do not coincide (Example \ref{ex:kernel} and Corollary \ref{cor:kernel}). The loss of this coincidence means that the core is somehow a less robust solution in these markets, since those allocations that do not have a justified objection could be taken into account when looking for a distribution of the worth of the grand coalition.


However, if we want the allocation of the total worth of the market to be supported by competitive prices (or salaries in our case), then we must restrict our attention again to the set of core payoffs, that coincides with the set of competitive equilibrium payoffs. Generically this set contains infinitely many payoff vectors, although little is known about its dimension and geometric structure.
Applications such as some auction markets, usually consider only the  maximum competitive prices (salaries) rule and the minimum  competitive prices (salaries) rule. We begin by providing axiomatic characterizations of these rules on the domain of many-to-one assignment markets, based on those for the optimal stable rules of the general (many-to-many) multiple-partners assignment market in \citep{dn22}. This is complemented with another characterization based on easily verifiable properties of an associated directed graph, called the tight digraph (more details below). 


Beyond these two extreme core allocations or salary vectors, there are usually many other, and they give an idea of the extension of the core and hence the possibilities of cooperative agreements. To this end, we aim to study the extreme core allocations. First, we observe that, unlike the one-to-one case, in an extreme core allocation it may be the case that no agent achieves his/her marginal contribution  and moreover extreme core allocations are not obtained by a lexicographic minimization procedure or by lexicographic maximization procedure as it is the case for one-to-one assignment markets. See  \citep{ietal07} and \citep{ns17}.  

Based on the projection of the core to the space of workers' payoffs (salaries), and given a competitive salary vector, we define a digraph, the tight digraph, where the set of nodes is the set of workers augmented by a node representing their outside option and the directed arcs are determined by the constraints of the set of competitive salaries that are tight at that given vector. Then, we show that a competitive salary vector is an extreme point if and only if the base-graph of the tight digraph (where the direction of the arcs are ignored) is connected (Theorem \ref{thm:ext_core_char} (A)). It implies that at an extreme competitive salary vector  
there is a worker with zero salary or a worker with a salary that equals the total surplus it creates with a firm under an optimal matching. We also provide a necessary and sufficient condition for each side-optimal allocation in terms of the tight digraph (Theorem \ref{thm:ext_core_char} (B)). 

After that, for each order on the set of workers, we define a payoff vector where each worker sequentially maximizes or minimizes its competitive salary, preserving what has been allocated to its predecessors. Making use of the tight digraph, we show that this set of max-min vectors includes all the extreme competitive salary vectors of the many-to-one assignment market. This gives a procedure for the computation of these extreme points and consequently allows for a representation of the entire core.

Besides the one-to-one assignment game, the literature contains results regarding the set of extreme core allocations for other related combinatorial models, such as ordinal two-sided markets \citep{bb00,bb02} and minimum cost spanning tree games \citep{tvp17}. 

Before concluding, we move to the other many-to-one assignment market, let us say a buyer-seller market where buyers have unitary demand. Our results on the core trivially apply to this case, simply focusing on the projection of the core to the buyers' payoffs. We also provide a description of the competitive equilibrium payoff vectors that allows for a characterization of their extreme points by means of an extended tight graph.

The paper is organized as follows. In Section \ref{sec:prel}, some preliminaries on transferable utility games are provided. Section \ref{sec:model} introduces many-to-one assignment markets and games. In Section \ref{sec:core_geo}, we introduce our results on the core, the kernel, and the bargaining set. We provide an axiomatic characterization of the maximum and minimum competitive salary vectors in Section \ref{sec:max-min}. Another characterization, in terms of properties of associated tight digraphs, is given in Section \ref{extreme_vectors_digraph} as a special case for the characterization of any extreme competitive salary vector.
Based on this, we describe a lexicographic procedure to obtain all extreme vectors of competitive salaries, or extreme core allocations, in Section \ref{sec:extreme_char}.
We consider a special subclass of many-to-one assignment markets and provide some positive results in Section \ref{dominant_diagonal_markets}. Section \ref{sec:kaneko} extends our previous results to the reverse many-to-one model in \citep{k76}, where buyers have unitary demand, and Section \ref{sec:remarks} concludes.

\section{Notations and definitions}
\label{sec:prel}
A \textit{transferable utility (TU) cooperative game} $(N,v)$ is a pair where $N$ is a non-empty, finite set of \textit{players (or agents)} and $v:2^{N}\rightarrow\mathbb{R}$ is a \textit{coalitional function} satisfying $v(\emptyset)=0$.  The number $v(S)$ is regarded as the worth of the coalition $S\subseteq N$. We identify the game with its coalitional function since the player set $N$ is fixed throughout the paper. The game $(N,v)$ is called \textit{superadditive} if $S\cap T=\emptyset$ implies $v(S\cup T)\geq v(S)+v(T)$ for every two coalitions $S,T\subseteq N$. 
Coalition $R\subseteq N$ is called \emph{inessential} in game $v$ if it has a nontrivial partition $R=S\cup T$ with $S,T\neq\emptyset$ and $S\cap T=\emptyset$ such that $v(R)\leq v(S)+v(T)$. Notice that in a superadditive game the weak majorization can only happen as equality. Those non-empty coalitions which are not inessential are called \emph{essential}. Note that the single-player coalitions are essential in any game, and any inessential coalitional value can be weakly majorized by the value of a partition composed only of essential coalitions.  

Given a game $(N,v)$, a \textit{payoff allocation} $x\in\mathbb{R}^{N}$ represents the payoffs to the players. The total payoff to coalition $S\subseteq N$ is denoted by $x(S)=\sum_{i\in S} x_{i}$, in particular $x(\emptyset)=0$, for throughout the paper we keep the convention that summing over the empty-set gives zero. In a game $v$, we say the payoff allocation $x$ is \textit{efficient}, if $x(N)=v(N)$. The set of \emph{imputations}, denoted by $I(v)$, consists of all efficient payoff vectors that are \emph{individually rational}, that is, $x_{i}\ge v(\{i\})$ for all $i\in N$. The core $C(v)$ is the set of imputations that are \emph{coalitionally rational}, that is, $x(S)\ge v(S)$ for all $S\subseteq N$. Observe that all the coalitional rationality conditions for inessential coalitions are implied by the inequalities related to essential coalitions, hence can be ignored: the core and the essential-core are always the same.     

Given a game $(N,v)$, the game $(N, v^*)$ defined by $v^* (S)=v(N)-v(N\setminus S)$ for all $S\subseteq N$ is called the {\em dual game}. 
Notice that $v^*(\emptyset)=0$ and $v^*(N)=v(N)$ for any game $(N,v)$.
It is easily seen that the core of any coalitional game coincides with the \emph{anticore} of its dual game, that is, 
\begin{equation}\label{anticore}
\mathbf{C}(v)=\mathbf{C}^*(v^*):=\{x\in\mathbb{R}^N :\, x(N)=v^*(N), \;x(S)\leq v^*(S) \; \forall S\subseteq N\}. 
\end{equation}
It follows that if $i\in N$ is a \emph{null player} in game $v$ (i.e. $v(S\cup \{i\})=v(S)$ for all $S\subseteq N$, in particular, $v(\{i\})=0$), its payoff is $x_i=0$ in any core allocation $x\in C(v)$. Indeed, then $v(N)=v(N\setminus \{i\})+v(\{i\})\leq x(N\setminus \{i\})+x_i=x(N)=v(N)$, implying both $x(N\setminus \{i\})= v(N\setminus \{i\})$ and $x_i=v(i)=0$.

An order on the set of players $N$ is a bijection $\sigma  :\{1, 2, \ldots , n\}\rightarrow N$, where for all $i\in\{1,2,\ldots,n\}$, $\sigma_{i}=\sigma(i)$ is the player that occupies position $i$. For a given order $\sigma$, $P^{\sigma}_{i}=\{ j \in N\mid \sigma^{-1}(j)<\sigma^{-1}(i)\}$ denotes the set of predecessors of agent $i\in N$. For each order $\sigma$ on the player set $N$ of game $(N,v)$, a \emph{marginal payoff vector} $m^{\sigma, v}$ is defined by $m^{\sigma,v}_{\sigma_i}=v(P^{\sigma}_{\sigma_{i}}\cup\{\sigma_{i}\})-v(P^{\sigma}_{\sigma_{i}})$ for all $i\in N$. Whenever a marginal payoff vector is in the core, then it is an extreme core allocation.

\cite{hetal02} showed that each extreme core allocation of an assignment game is a marginal payoff vector. Nevertheless, the opposite implication only holds in convex assignment games.  

\cite{ns17} studied other  lexicographic allocation procedures for coalitional games looking for a characterization and a computation procedure of their extreme core points. Given a game $(N,v)$ and over the set $\mathbf{Ra}^*(N,v)=\{x\in\mathbb{R}^{N}:x(S)\leq v(N)-v(N\setminus S) \text{ for all } S\subseteq N\}$ of dual coalitionally rational payoff vectors, the following lexicographic maximization procedure is proposed:  
for any order $\sigma$ of the players, the {\em $\sigma$-lemaral} vector
$\overline{r}^{\sigma,v}\in\mathbb{R}^N$ is defined by, for all $i\in\{1,2,\ldots,n\}$,
\begin{equation}\label{lemaral-max}
\overline{r}_{\sigma_i}^{\sigma,v}=\max\left\{x_{\sigma_i}\;:\; x\in \mathbf{Ra}^*(N,v) ,\, x_{\sigma_l}=\overline{r}_{\sigma_l}^{\sigma,v}\;\forall l\in\{1,\ldots,i-1\} \right\},
\end{equation}
which trivially leads to
\begin{equation}\label{lemaral}
\overline{r}_{\sigma_i}^{\sigma,v} =  
\min\left\{v^*(Q\cup\{\sigma_i\})-\overline{r}^{\sigma,v}(Q) \; :\; Q\subseteq P_{\sigma_i}^{\sigma}\right\}.
\end{equation}
 Notice that in the $\sigma$-lemaral vector, the first player in the order maximizes its payoff on the set ${\bf Ra}^*$, the second player maximizes its payoff over those dual coalitional rational payoff vectors that allocate $\overline{r}_{\sigma_1}^{\sigma, v}$ to the first player, and so on. It is proved in \citep{ns17} that the set of extreme core allocations of a one-to-one assignment game coincides with the set of lemaral vectors.

\section{The many-to-one assignment market and game}
\label{sec:model}
We consider a market where there are two types of agents: a finite set of firms $F=\{f_1,f_2,\ldots, f_m\}$ and a finite set of workers $W=\{w_1, w_2 ,\ldots, w_n\}$ where the number of firms $m$ can be different from the number of workers $n$. Let $N=F\cup W$ be the set of all agents. We sometimes denote a generic firm and a generic worker by $i$ and $j$, respectively. 
Each firm $i\in F$ values hiring worker $j\in W$ by $h_{ij}\ge 0$, and each worker $j\in W$ has a reservation value $t_j\ge 0$. Hence, a pair of firm $i\in F$ and worker $j\in W$ can generate a non-negative income $a_{ij}=\max\{h_{ij}-t_j,0\}$. If $h_{ij}\ge t_j$, the income $a_{ij}$ is obtained when firm $i\in F$ hires worker $j\in W$, and it is shared by means of the salary $y_j$ that the firm pays to the worker.  The valuation matrix denoted by $A=(a_{ij})_{(i,j)\in F\times W}$ represents the pairwise income for each possible firm-worker pair. Each firm $i\in F$ would like to hire up to $r_{i}\ge 0$ workers and each worker $j\in W$ can work for at most one firm. Then, as long as we do not perform a strategic analysis, a many-to-one assignment market is the quadruple $\gamma=(F,W,A,r)$. When we analyze whether an agent may profit from misrepresenting its true valuations, we describe the market by $\gamma=(F,W,h,t,r)$.

A \emph{matching} $\mu$ for the market $\gamma=(F,W,A,r)$ is a set of $F\times W$ pairs such that each firm $i\in F$ appears in at most $r_{i}$ pairs and each worker $j\in W$  in at most one pair. We denote by $\mathcal{M}(F,W,r)$ the set of matchings for market $\gamma$. A matching $\mu\in\mathcal{M}(F,W,r)$ is \textit{optimal} for $\gamma$ if $\sum\limits_{(i,j)\in\mu} a_{ij} \geq \sum\limits_{(i,j)\in\mu'} a_{ij}$
holds for any other matching $\mu'\in\mathcal{M}(F,W,r)$. We denote by $\mathcal{M}_{A}(F,W,r)$ the set of optimal matchings for the market $\gamma$. Given a matching $\mu\in\mathcal{M}(F,W,r)$, the set of workers matched to firm $i\in F$ under $\mu$ is $\mu(i)=\{ j\in W\mid (i,j)\in \mu\}$.
It will be convenient to denote the set of workers unmatched under $\mu$ by $\mu(f_0)$, that is $\mu(f_0)=W\setminus \bigcup_{i\in F} \mu(i)$. Observe that $i\neq k\in F$ implies $\mu(i)\cap \mu(k)=\emptyset$, hence $\mu(f_0)\cup \bigcup_{i\in F} \mu(i)=W$ is a partition of the set of workers.

Given a many-to-one assignment market $\gamma=(F,W,A,r)$, we define the income maximization linear programming problem by
\begin{eqnarray}
\label{LP}
\mathcal{V}(F,W)=\max & \sum\limits_{i\in F}\sum\limits_{j\in W} a_{ij}x_{ij}  \\
\text{ s. t. } &  \sum\limits_{j\in W} x_{ij}\leq r_{i}, & i\in F \nonumber \\
&  \sum\limits_{i\in F} x_{ij}\leq 1, & j\in W \nonumber \\
&  x_{ij}\geq 0, & (i,j)\in F\times W. \nonumber   
\end{eqnarray}
It is well known that any variable in any basic feasible solution of this LP problem with integer right hand sides is integral, hence, by the worker capacity inequalities, 0 or 1.
Consequently, the relation $(i,j)\in\mu$ $\leftrightarrow$ $x_{ij}=1$ defines a bijection between the set of basic feasible solutions to this LP problem and the set of matchings $\mu\in\mathcal{M}(F,W,r)$. Henceforth, the optimum value of (\ref{LP}) gives the maximum of the sum of values of the matched pairs while respecting the capacities of firms. 

Given market $\gamma=(F,W,A,r)$, we also apply the above notation and terminology for any submarket $\gamma_{(S,T)}=(S,T,A_{(S,T)},r_S)$ with $S\subseteq F$, $T\subseteq W$, and accordingly restricted payoff matrix $A_{(S,T)}$ and capacity vector $r_S$. 

Now, let us associate a coalitional game with transferable utility (TU-game) with this type of two-sided matching markets. Given a many-to-one assignment market $\gamma=(F,W,A,r)$, its associated \emph{many-to-one assignment game} is the pair $(N,v_{\gamma})$ where $N=F\cup W$ is the set of players and the coalitional function is given by $v_{\gamma}(S\cup T)=\max\limits_{\mu\in\mathcal{M}(S,T,r_S)}\sum\limits_{(i,j)\in\mu}a_{ij}$ for all $S\subseteq F$ and $T\subseteq W$.\footnote{When no confusion arises, for a given market $\gamma$, we denote its corresponding coalitional function by $v$ instead of $v_{\gamma}$.}
For brevity, we denote coalition $S\cup T$ with $S\subseteq F$ and $T\subseteq W$ by $(S,T)$, in particular, one-sided coalitions by $(\emptyset,T)$ and $(S,\emptyset)$. 
As the union of matchings for disjoint coalitions is a matching for the union of the coalitions, i.e. $\mu\in\mathcal{M}(S,T,r_S)$ and $\mu'\in\mathcal{M}(S',T',r_{S'})$ with $S\cap S'=\emptyset$ and $T\cap T'=\emptyset$ implies $\mu\cup\mu'\in\mathcal{M}(S\cup S',T\cup T',r_{S\cup S'})$, it easily follows that many-to-one assignment games are superadditive.
On the other hand, if $\nu\in\mathcal{M}(S,T,r_S)$ is an optimal matching for coalition $(S,T)$, that is $v_{\gamma}(S,T)=\sum\limits_{(i,j)\in\nu}a_{ij}=\sum\limits_{i\in S} \sum\limits_{j\in\nu(i)} a_{ij}$, then it follows from $v_{\gamma}(i,\nu(i))=\sum\limits_{j\in\nu(i)} a_{ij}$ for all $i\in S$ that $v_{\gamma}(S,T)=\sum\limits_{i\in S}v_{\gamma}(i,\nu(i))$.
Since $(S,T)=(\emptyset,\nu(f_0)) \cup \bigcup\limits_{i\in S} (i,\nu(i))$ where $\nu(f_0)$ denotes the unmatched workers in $T$ under $\nu$, and $v_{\gamma}(\nu(f_0))=0=\sum\limits_{j\in\nu(f_0)} v_{\gamma}(j)$,
we get the following observations.
\begin{proposition}
\label{inessentials}
In many-to-one assignment games, the following types of coalitions are inessential:
\begin{itemize}
\item any coalition containing at least two firms, 
\item any single-firm coalition containing more workers than the capacity of the firm,  
\item any one-sided coalition containing at least two players.
\end{itemize}
\end{proposition}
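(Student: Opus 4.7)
The plan is to exploit superadditivity (just established above the statement) together with the decomposition $v_\gamma(S,T) = \sum_{i\in S} v_\gamma(\{i\},\nu(i))$ derived for any optimal matching $\nu$ of $(S,T)$. Because the game is superadditive, the defining inequality $v(R)\le v(S)+v(T)$ of an inessential coalition can only hold as equality on a nontrivial bipartition; so for each of the three types I aim to exhibit an explicit partition whose parts' worths sum to exactly $v_\gamma(R)$.

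First I would dispatch one-sided coalitions: whenever $S\subseteq F$ with $|S|\ge 2$, or $T\subseteq W$ with $|T|\ge 2$, the worth is $0$ by definition of $v_\gamma$ (no firm-worker pair inside), so \emph{any} split into two non-empty parts works and the two worths add to $0$. Next, for a single-firm coalition $(\{i\},T)$ with $|T|>r_i$, I pick an optimal matching $\nu$ for $(\{i\},T)$; since $|\nu(i)|\le r_i<|T|$, there exists $j^{\ast}\in T\setminus\nu(i)$. I then use the bipartition $(\{i\},T\setminus\{j^{\ast}\})\cup(\emptyset,\{j^{\ast}\})$; both parts are non-empty, the matching $\nu$ remains feasible for the first part so $v_\gamma(\{i\},T\setminus\{j^{\ast}\})\ge v_\gamma(\{i\},T)$ (and the reverse is trivial by dropping a worker from a supermarket), while $v_\gamma(\emptyset,\{j^{\ast}\})=0$. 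Finally, for a coalition $(S,T)$ with $|S|\ge 2$, I take any bipartition $S=S_1\cup S_2$ with both parts non-empty, fix an optimal matching $\nu$ for $(S,T)$, and set $T_1=\bigcup_{i\in S_1}\nu(i)$, $T_2=T\setminus T_1$. Then $(S_1,T_1)$ and $(S_2,T_2)$ are both non-empty, and using the restriction of $\nu$ as a feasible matching on each side together with the decomposition formula,
\[
v_\gamma(S_1,T_1)+v_\gamma(S_2,T_2)\;\ge\;\sum_{i\in S_1}v_\gamma(\{i\},\nu(i))+\sum_{i\in S_2}v_\gamma(\{i\},\nu(i))\;=\;v_\gamma(S,T).
\]

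I do not expect a substantive obstacle; the only thing to monitor is that each partition is genuinely nontrivial. In the multi-firm case, choosing $S=S_1\cup S_2$ with both parts non-empty automatically places a firm on each side; in the over-capacity single-firm case the strict inequality $|T|>r_i$ guarantees a worker $j^{\ast}$ outside $\nu(i)$, which is exactly what keeps both pieces of the partition non-empty. Thus all three classes of coalitions listed in the proposition admit a witnessing bipartition, so they are inessential.
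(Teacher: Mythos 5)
Your proof is correct and follows essentially the same route as the paper: the paper derives the decomposition $v_\gamma(S,T)=\sum_{i\in S}v_\gamma(i,\nu(i))$ from an optimal matching $\nu$ and notes that $(S,T)$ splits into the single-firm pieces $(i,\nu(i))$ plus the unmatched workers, which is exactly the witnessing bipartition machinery you use. You merely spell out the explicit two-part partitions for each of the three cases, which the paper leaves implicit.
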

\noindent
Consequently, in an $(m+n)$-player many-to-one assignment game, among the $2^{m+n}-1$ non-empty coalitions, at most $\sum_{i=1}^{m} \sum_{t=1}^{r_i} \binom{n}{t} \leq 2^{n}-2$ coalitions can be essential. However, this exponential upper bound is sharp (if all $a_{ij}$ pairwise income values are positive, $m=2$, and $n=r_1+r_2$).  
We will see in Proposition~\ref{core:description_workers} that the core is already described by a quadratic number of easily identifiable essential coalitions.

As in any coalitional game, the main concern is how to share the worth of the grand coalition (the total income) among all agents. To do so, we focus on the solution concept known as the \textit{core}. Different than one-to-one assignment games, where the only essential coalitions are the individual ones and the mixed-pairs, here (see Proposition \ref{inessentials}), instability may arise from a group of workers and a firm that can be better off by recontracting among themselves instead of their prescribed agreements.

\subsection{Core and competitive salaries}


In order to investigate the relationship between the core of  many-to-one assignment markets and other set-valued solution concepts,  we need to analyze closely the structure of the core of these games. Given a many-to-one assignment market $\gamma=(F,W,A,r)$ and $\mu\in\mathcal{M}_{A}(F,W,r)$ an optimal matching, $(x,y)\in\mathbb{R}^{F}_{+}\times \mathbb{R}^{W}_{+}$ is in the core $C(v_{\gamma})$ of the associated game if and only if for every firm $i\in F$, 
\begin{equation}\label{core:descripton}
x_{i}+\sum\limits_{j\in T}y_{j}\geq \sum\limits_{j\in T} a_{ij}=v_{\gamma}(i,T) \mbox{ for all }T\subseteq W \mbox{ with }|T|\leq r_{i} \mbox{(with equality for $T=\mu(i)$)}
\end{equation} 
and the payoff to unassigned firms or workers is zero.

The above description of the core of a many-to-one assignment game is based on Proposition~\ref{inessentials} and the general equivalence of the core and the essential-core. As we remarked there, this description is still of exponential size (in the number of players). Next, we present a quadratic-size equivalent description of the core, just in terms of the workers' payoffs. It rests on the observation that in the essential-core description (\ref{core:descripton}) only those single-firm coalitions are needed for which $|T\cap\mu(i)|=r_i-1$.

For brevity of exposition, first we balance the model, if needed. In case the total capacity of the firms $\sum_{i\in F} r_i$ exceeds the number of workers $n$, we introduce $\sum_{i\in F} r_i -n>0$ dummy workers who can only generate zero income with any firm. Exclusively from this situation, in case of $n>\sum_{i\in F} r_i$, we introduce a dummy firm, say $f_0$, requiring at most $r_0=n-\sum_{i\in F} r_i >0$ number of workers, but who can only generate zero income with any worker. 
Technically, if needed, we extend matrix $A$ with $\sum_{i\in F} r_i-n>0$ full 0 columns, or with one full 0 row.
This clearly means that we extend the associated many-to-one assignment game $v$ with one or more null players.
Since the core payoff to any null player $j$ is $x_j=0$, the core of the original game is precisely the $x_j=0$ section of the core of the extended game, we can assume without loss of generality that the market $\gamma=(F,W,A,r)$ is \emph{capacity-balanced}, i.e. $n=\sum_{i\in F} r_i$ holds. To keep the exposition simple, we do not introduce any new notation for the possible extended models.

Given a capacity-balanced market $\gamma=(F,W,A,r)$ ($n=\sum_{i\in F} r_i$), we will always assume, due to the non-negativity of matrix $A$ without loss of generality, that in any optimal matching $\mu\in\mathcal{M}_A(F,W,r)$ any firm $i\in F$ is assigned precisely $r_i$ workers ($|\mu(i)|=r_i$), and no worker is unmatched under $\mu$.
For any worker $j\in W$, let $j^\mu\in F$ denote the unique firm for which $j\in\mu(j^\mu)$ holds, that is $j^\mu=\mu^{-1}(j)$.  

The next description of the core in terms of the workers' payoffs follows easily from (\ref{core:descripton}) and is a simplification of the one already given in \citep{so02} for the general, not  necessarily capacity-balanced, market.

\begin{proposition}
\label{core:description_workers}
Given a capacity-balanced many-to-one assignment market $\gamma=(F,W,A,r)$, let $\mu\in\mathcal{M}_{A}(F,W,r)$ be an optimal matching. Then, $(x,y)\in\mathbb{R}^{F}\times \mathbb{R}^{W}$ is in the core of the associated game $C(v_{\gamma})$ if and only if
\begin{enumerate}[(i)]
\item $0\le y_{j}\leq a_{j^\mu j}$ \, for any $j\in W$;
\item $y_k-y_j\ge a_{j^{\mu}k}-a_{j^{\mu}j}$ \, for any $j,k\in W$ such that $j^\mu \neq k^\mu$;
\item $x_{i}=\sum\limits_{j\in\mu(i)}(a_{ij}-y_{j})$ for all $i\in F$.
\end{enumerate}
\end{proposition}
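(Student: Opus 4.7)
My plan is to derive the statement directly from the essential-core description (\ref{core:descripton}) by reducing its exponentially many inequalities to the quadratically many conditions (i)--(ii) together with the equality (iii). The key observation is that in the capacity-balanced setting every firm $i$ is fully matched ($|\mu(i)|=r_i$), so any deviation set $T$ with $|T|\le r_i$ can be obtained from $\mu(i)$ by removing a subset and swapping in an equal-or-smaller subset of workers matched to other firms.

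For the forward direction, (iii) is just the $T=\mu(i)$ equality in (\ref{core:descripton}); the bound $y_j\le a_{j^\mu j}$ in (i) follows by applying (\ref{core:descripton}) with $i=j^\mu$ and $T=\mu(j^\mu)\setminus\{j\}$ and subtracting the $T=\mu(j^\mu)$ equality, while $y_j\ge 0$ is the individual rationality inequality $y_j\ge v_\gamma(\{j\})=0$; finally (ii) comes from the swap coalition $T=(\mu(j^\mu)\setminus\{j\})\cup\{k\}$, which has cardinality $r_{j^\mu}$ since $k^\mu\ne j^\mu$, after again subtracting off the $\mu(j^\mu)$-equality.

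For the converse, I need to recover $x_i+\sum_{j\in T}y_j\ge\sum_{j\in T}a_{ij}$ for every firm $i$ and every $T\subseteq W$ with $|T|\le r_i$. I split $T=T_1\cup T_2$ with $T_1=T\cap\mu(i)$, $T_2=T\setminus\mu(i)$, and set $L=\mu(i)\setminus T_1$. Using (iii) to eliminate $x_i$ and cancelling the $T_1$-terms, the inequality reduces to $\sum_{l\in L}(a_{il}-y_l)\ge\sum_{j\in T_2}(a_{ij}-y_j)$. Capacity balance yields $|T_2|=|T|-|T_1|\le r_i-|T_1|=|L|$, so I can choose an injection $\pi\colon T_2\to L$. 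For each $j\in T_2$, $\pi(j)\in\mu(i)$ gives $\pi(j)^\mu=i\ne j^\mu$, and (ii) with $(j,k)\leftarrow(\pi(j),j)$ yields $a_{i\pi(j)}-y_{\pi(j)}\ge a_{ij}-y_j$. Summing over $T_2$ and using that (i) makes $a_{il}-y_l\ge 0$ for the leftover indices $l\in L\setminus\pi(T_2)$, the required inequality follows. Individual rationality of firms and global efficiency follow from (iii) together with the optimality of $\mu$, since $\sum_{i\in F}x_i+\sum_{j\in W}y_j=\sum_{(i,j)\in\mu}a_{ij}=v_\gamma(N)$.

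The main obstacle is the converse direction, specifically the ``injection plus pointwise swap'' step: once the decomposition $T=T_1\cup T_2$ is set up, the algebra is routine, but the conceptual point is that (ii) encodes precisely a single-worker swap deviation, and capacity balance is what allows an arbitrary deviation $T$ to be decomposed into $|T_2|$ independent single swaps against distinct workers in $L$. Without $|\mu(i)|=r_i$ the injection $\pi$ need not exist and the case $|T|<r_i$ would require extra care; the assumption that we have first balanced the market is therefore not cosmetic but essential to this argument.
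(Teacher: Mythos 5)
Your proof is correct and follows essentially the route the paper intends: the paper derives this proposition directly from the essential-core description (\ref{core:descripton}), obtaining (i)--(ii) from single-worker removal and swap coalitions and recovering the general coalitional constraints by comparing the surpluses $a_{ij}-y_j$ of workers inside and outside $\mu(i)$. Your explicit decomposition $T=T_1\cup T_2$ with the injection $\pi\colon T_2\to L$ just makes precise the swap comparison that the paper leaves implicit, and your observation that capacity balance ($|\mu(i)|=r_i$) is what guarantees the injection exists matches the paper's reason for balancing the market first.
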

Notice that the number of constraints is $2n=\sum_{i\in F} 2r_i$ in item $(i)$, $\sum_{i\in F} r_i(n-r_i)=n\sum_{i\in F} r_i-\sum_{i\in F} r_i^2 \leq n^2-\sum_{i\in F} r_i = n^2-n$ in item $(ii)$, and $m\leq n$ in item $(iii)$, altogether at most $n^2+2n$.

Also, given any vector of salaries $y\in\mathbb{R}^W$ that satisfies constraints (i) and (ii) above for some optimal matching $\mu$, the payoff to each firm is uniquely determined. Let us denote by $C(W)$ the set of salaries (or wages) that satisfy (i) and (ii), that is, the projection of the core to the space of workers' payoffs. It is proved in \citep{so02} that $C(W)$ is endowed with a lattice structure under the partial order induced by $\mathbb{R}^W$. In fact, the reader will see that constraints (i) and (ii) in Proposition \ref{core:description_workers} define what is named a 45-degree polytope in \citep{q91}.


Here is an illustrative example of the above core description. 
\begin{example}
\label{example:core_workers}
Consider a many-to-one assignment market $\gamma=(F,W,A,r)$ where $F=\{f_{1},f_{2}\}$, $W=\{w_{1},w_{2},w_{3}\}$ are the set of firms and the set of workers respectively, and the capacities of the firms are $r=(2,1)$. The pairwise valuation matrix is the following:
$$A =\bordermatrix{~ & w_{1} & w_{2} & w_{3} \cr
                  f_{1} & 8 & 6 & 3\cr
                  f_{2} & 7 & 6 & 4}.$$
Since the (unique) optimal matching assigns workers $w_1$ and $w_2$ to firm $f_1$, and $w_3$ to firm $f_2$, in any core allocation $x_1=(8+6)-y_1-y_2$ and $x_2=4-y_3$ hold. Henceforth, in terms of the workers payoffs, the core is described by the following system (given in two equivalent forms):
$$
\begin{array}{ccccl}
y_1 & y_2 & y_3 & \geq & 0 \\
\hline
y_1 &     &     & \leq & 8 \\
    & y_2 &     & \leq & 6 \\
    &     & y_3 & \leq & 4 \\
\hline
-y_1 &     & y_3 & \geq & -5=3-8 \\
    & -y_2 & y_3 & \geq & -3=3-6 \\
\hline
y_1 &     & -y_3 & \geq & 3=7-4 \\
    & y_2 & -y_3 & \geq & 2=6-4 \\
\hline 
\end{array}
\qquad\qquad
\begin{array}{ccccccl}
\hline
0 & \leq & y_1 &     &     & \leq & 8 \\
0 & \leq &     & y_2 &     & \leq & 6 \\
0 & \leq &     &     & y_3 & \leq & 4 \\
\hline
3 & \le & y_1 &     & -y_3 & \le & 5 \\
2 & \le &     & y_2 & -y_3 & \le & 3 \\
\hline
\end{array}
$$
Notice the similarities to the one-to-one assignment case, but due to the capacity $r_1=2$ of firm $f_1$, there is no direct relation between the payoffs to its optimally matched workers, two-way direct pairwise comparisions are only between workers assigned to different firms.  

Figure \ref{fig:core} illustrates the $C(W)$ of this example, where the 45-degree lattice structure can be seen.

\definecolor{qqqqff}{rgb}{0.,0.,1.}
\definecolor{aqaqaq}{rgb}{0.6274509803921569,0.6274509803921569,0.6274509803921569}
\definecolor{uuuuuu}{rgb}{0.26666666666666666,0.26666666666666666,0.26666666666666666}
\definecolor{xdxdff}{rgb}{0.49019607843137253,0.49019607843137253,1.}
\definecolor{ududff}{rgb}{0.30196078431372547,0.30196078431372547,1.}
\definecolor{cqcqcq}{rgb}{0.7529411764705882,0.7529411764705882,0.7529411764705882}
\begin{figure}[h]
\centering
\begin{tikzpicture}[scale=0.68][line cap=round,line join=round,>=triangle 45,x=1.0cm,y=1.0cm]
\clip(-4.261547144867606,-5.333072337441284) rectangle (12.421267575684405,3.501482858715046);
\draw [line width=0.8pt] (-3.,-2.)-- (-3.,2.);
\draw [line width=0.8pt] (-3.,-2.)-- (5.014354222068339,-3.607187767800357);
\draw [line width=0.8pt] (-3.,-2.)-- (3.002828217348916,-0.809909417487408);
\draw [line width=0.8pt] (5.014354222068339,-3.607187767800357)-- (5.014354222068339,0.40014919472662075);
\draw [line width=0.8pt] (-3.,2.)-- (5.014354222068339,0.40014919472662075);
\draw [line width=0.8pt] (5.014354222068339,-3.607187767800357)-- (11.017502142402869,-2.3971291555863288);
\draw [line width=0.8pt] (3.002828217348916,-0.809909417487408)-- (11.017502142402869,-2.3971291555863288);
\draw [line width=0.8pt] (3.002828217348916,-0.809909417487408)-- (3.0185432642607863,3.165997451215829);
\draw [line width=0.8pt] (-3.,2.)-- (3.0185432642607863,3.165997451215829);
\draw [line width=0.8pt] (11.017502142402869,-2.3971291555863288)-- (11.001787095490998,1.5473476192931674);
\draw [line width=0.8pt] (3.0185432642607863,3.165997451215829)-- (11.001787095490998,1.5473476192931674);
\draw [line width=0.8pt] (5.014354222068339,0.40014919472662075)-- (11.001787095490998,1.5473476192931674);
\draw [line width=0.8pt,dash pattern=on 2pt off 2pt,color=aqaqaq] (-0.9574827512165562,-1.5950607856191104)-- (3.0185432642607863,3.165997451215829);
\draw [line width=0.8pt,dash pattern=on 2pt off 2pt,color=aqaqaq] (8.5465886982794E-4,-1.4050656224530043)-- (3.0147543200490867,2.2073945656558065);
\draw [line width=0.8pt,dash pattern=on 2pt off 2pt,color=aqaqaq] (3.0147543200490867,2.2073945656558065)-- (11.005793806365977,0.5416631896736419);
\draw [line width=0.8pt,dash pattern=on 2pt off 2pt,color=aqaqaq] (-0.9574827512165562,-1.5950607856191104)-- (7.011384378493428,-3.204645039934567);
\draw [line width=0.8pt,dash pattern=on 2pt off 2pt,color=aqaqaq] (7.011384378493428,-3.204645039934567)--(11.001787095490998,1.5473476192931674);
\draw [line width=0.8pt,dash pattern=on 2pt off 2pt,color=aqaqaq] (8.00104000042622,-3.005159482633978)--(11.005793806365977,0.5416631896736419);
\draw [line width=0.8pt,dash pattern=on 2pt off 2pt,color=aqaqaq] (8.5465886982794E-4,-1.4050656224530043)-- (8.00104000042622,-3.005159482633978);
\draw [line width=0.8pt,dash pattern=on 2pt off 2pt,color=aqaqaq] (0.004198544585103825,-2.6024579172711846)-- (6.020046759883498,-1.4074370112050412);
\draw [line width=0.8pt,dash pattern=on 2pt off 2pt,color=aqaqaq] (6.020046759883498,-1.4074370112050412)-- (8.991963408663445,0.9614459332207967);
\draw [line width=0.8pt,dash pattern=on 2pt off 2pt,color=aqaqaq] (1.999480781125392,-3.002589121236023)-- (8.012410827053339,-1.8020032284288727);
\draw [line width=0.8pt,dash pattern=on 2pt off 2pt,color=aqaqaq] (8.012410827053339,-1.8020032284288727)-- (11.005793806365977,0.5416631896736419);
\draw [line width=2.4pt,,color=qqqqff] (2.033886234509821,-2.1992696825552556)-- (4.016823964518636,-2.5997915253164083);
\draw [line width=2.4pt,,color=qqqqff] (4.016823964518636,-2.5997915253164083)-- (4.996380001776475,-2.404206153365795);
\draw [line width=2.4pt,,color=qqqqff] (2.033886234509821,-2.1992696825552556)-- (10.044295835790313,1.7414846266734245);
\draw [line width=2.4pt,,color=qqqqff] (10.044295835790313,1.7414846266734245)-- (11.001787095490998,1.5473476192931674);
\draw [line width=2.4pt,,color=qqqqff] (11.001787095490998,1.5473476192931674)-- (11.005793806365977,0.5416631896736419);
\draw [line width=2.4pt,,color=qqqqff] (11.005793806365977,0.5416631896736419)-- (4.996380001776475,-2.404206153365795);
\draw [line width=2.4pt,,color=qqqqff] (10.011739140043158,0.38443414781475027)-- (4.016823964518636,-2.5997915253164083);
\draw [line width=2.4pt,,color=qqqqff] (10.011739140043158,0.38443414781475027)-- (11.005793806365977,0.5416631896736419);
\draw [line width=2.4pt,,color=qqqqff] (10.011739140043158,0.38443414781475027)-- (10.044295835790313,1.7414846266734245);
\draw [line width=2.4pt,,color=qqqqff] (10.011739140043158,0.38443414781475027)-- (11.001787095490998,1.5473476192931674);
\draw [line width=1.2pt,,color=qqqqff] (2.033886234509821,-2.1992696825552556)-- (3.006119596912251,-2.006139943894078);
\draw [line width=1.2pt,,color=qqqqff] (3.006119596912251,-2.006139943894078)-- (8.991963408663445,0.9614459332207967);
\draw [line width=1.2pt,,color=qqqqff] (8.991963408663445,0.9614459332207967)-- (10.044295835790313,1.7414846266734245);
\draw [line width=1.2pt,,color=qqqqff] (8.991963408663445,0.9614459332207967)-- (11.005793806365977,0.5416631896736419);
\draw [line width=1.2pt,,color=qqqqff] (3.006119596912251,-2.006139943894078)-- (4.996380001776475,-2.404206153365795);
\draw [line width=1.2pt,dash pattern=on 2pt off 2pt,color=aqaqaq] (0.004198544585103825,-2.6024579172711846)-- (4.0284626283218925,0.5969560011720245);
\draw [line width=1.2pt,dash pattern=on 2pt off 2pt,color=aqaqaq] (4.0284626283218925,0.5969560011720245)-- (10.044295835790313,1.7414846266734245);
\draw [line width=1.2pt,dash pattern=on 2pt off 2pt,color=aqaqaq] (5.014354222068339,-0.4170332446906455)-- (11.005793806365977,0.5416631896736419);
\draw [line width=1.2pt,dash pattern=on 2pt off 2pt,color=aqaqaq] (1.999480781125392,-3.002589121236023)-- (5.014354222068339,-0.4170332446906455);
\draw [line width=2.pt,,color=qqqqff] (10.011739140043158,0.38443414781475027)-- (11.005793806365977,0.5416631896736419);
\draw [->,line width=1.2pt,] (5.014354222068339,-3.607187767800357) -- (6.150953231922584,-3.8174433525683593);
\draw [->,line width=1.2pt,] (-3.,2.) -- (-3.,3.);
\draw (5.88055870482194,-3.8194892433586103) node[anchor=north west] {$y_1$};
\draw (-3.6936928487376413,3.3136285625850808) node[anchor=north west] {$y_3$};
\draw (-3.6979438450921536,-1.8665134344694385) node[anchor=north west] {$0$};
\draw (-2.513125974929697,-2.102869723308428) node[anchor=north west] {$1$};
\draw (-1.55800446217049,-2.321047867727923) node[anchor=north west] {$2$};
\draw (-0.5434902346047845,-2.5377077991636633) node[anchor=north west] {$3$};
\draw (0.5413276355576718,-2.7740640880026527) node[anchor=north west] {$4$};
\draw (1.5458418631233772,-2.9316349472286455) node[anchor=north west] {$5$};
\draw (2.5306597332858336,-3.1089021638578877) node[anchor=north west] {$6$};
\draw (3.474566675658037,-3.305258452696877) node[anchor=north west] {$7$};
\draw (4.339688188417245,-3.4422220267293686) node[anchor=north west] {$8$};
\draw (-3.557032917301901,-0.7438210624842386) node[anchor=north west] {$1$};
\draw (-3.557032917301901,0.24099680767821752) node[anchor=north west] {$2$};
\draw (-3.5558183469148974,1.284903750050421) node[anchor=north west] {$3$};
\draw (-3.5567292747051497,2.289417977616126) node[anchor=north west] {$4$};
\draw (6.157825921451182,-3.266473023083881) node[anchor=north west] {$1$};
\draw (7.004769289790894,-3.1089021638578877) node[anchor=north west] {$2$};
\draw (8.0092835173566,-2.9119385898253966) node[anchor=north west] {$3$};
\draw (9.033494102325553,-2.675582300986407) node[anchor=north west] {$4$};
\draw (10.097097402101008,-2.4786187269539156) node[anchor=north west] {$5$};
\draw (11.062218914860216,-2.2619587955181756) node[anchor=north west] {$6$};
\begin{scriptsize}
\draw [fill=ududff] (-3.,-2.) circle (0.5pt);
\draw [fill=ududff] (-3.,2.) circle (0.5pt);
\draw [fill=ududff] (5.014354222068339,-3.607187767800357) circle (0.5pt);
\draw [fill=ududff] (3.002828217348916,-0.809909417487408) circle (0.5pt);
\draw [fill=ududff] (5.014354222068339,0.40014919472662075) circle (0.5pt);
\draw [fill=ududff] (11.017502142402869,-2.3971291555863288) circle (0.5pt);
\draw [fill=ududff] (3.0185432642607863,3.165997451215829) circle (0.5pt);
\draw [fill=ududff] (11.001787095490998,1.5473476192931674) circle (0.5pt);
\draw [fill=xdxdff] (-1.981145089079325,-2.2043197872993976) circle (0.5pt);
\draw [fill=xdxdff] (-1.0050955778163269,-2.400055437584214) circle (0.5pt);
\draw [fill=xdxdff] (0.004198544585103825,-2.6024579172711846) circle (0.5pt);
\draw [fill=xdxdff] (0.9892962733578177,-2.800008084870692) circle (0.5pt);
\draw [fill=xdxdff] (1.999480781125392,-3.002589121236023) circle (0.5pt);
\draw [fill=xdxdff] (3.019962370894195,-3.2072351205138796) circle (0.5pt);
\draw [fill=xdxdff] (4.023197222848025,-3.4084224822917415) circle (0.5pt);
\draw [fill=xdxdff] (6.020420031927832,-3.404394397750145) circle (0.5pt);
\draw [fill=xdxdff] (7.011384378493428,-3.204645039934567) circle (0.5pt);
\draw [fill=xdxdff] (8.00104000042622,-3.005159482633978) circle (0.5pt);
\draw [fill=xdxdff] (9.041661473740941,-2.7954007039815343) circle (0.5pt);
\draw [fill=xdxdff] (10.041639259131166,-2.593834501167274) circle (0.5pt);
\draw [fill=xdxdff] (-3.,-1.) circle (0.5pt);
\draw [fill=xdxdff] (-3.,0.) circle (0.5pt);
\draw [fill=xdxdff] (-3.,1.) circle (0.5pt);
\draw [fill=xdxdff] (-1.9944215516257373,-1.8006390657775049) circle (0.5pt);
\draw [fill=xdxdff] (-0.9574827512165562,-1.5950607856191104) circle (0.5pt);
\draw [fill=xdxdff] (8.5465886982794E-4,-1.4050656224530043) circle (0.5pt);
\draw [fill=xdxdff] (1.004554247594701,-1.2060771815109497) circle (0.5pt);
\draw [fill=xdxdff] (2.0016643770515277,-1.008395133678078) circle (0.5pt);
\draw [fill=xdxdff] (3.006803748588471,0.1958999861200208) circle (0.5pt);
\draw [fill=xdxdff] (3.0107166749699954,1.1858703606457541) circle (0.5pt);
\draw [fill=xdxdff] (3.0147543200490867,2.2073945656558065) circle (0.5pt);
\draw [fill=xdxdff] (-1.9770591075282822,2.198178599205096) circle (0.5pt);
\draw [fill=xdxdff] (-0.9695324884308643,2.3933709270356047) circle (0.5pt);
\draw [fill=xdxdff] (0.019913173506579618,2.5850603557361422) circle (0.5pt);
\draw [fill=xdxdff] (0.9854090689059771,2.772109896424841) circle (0.5pt);
\draw [fill=xdxdff] (2.0058499656337423,2.9698041610430597) circle (0.5pt);
\draw [fill=xdxdff] (3.9935735117706646,2.9683043104805584) circle (0.5pt);
\draw [fill=xdxdff] (5.000024835276176,2.7642403610296378) circle (0.5pt);
\draw [fill=xdxdff] (6.005860132946482,2.560301314454753) circle (0.5pt);
\draw [fill=xdxdff] (7.020668939815819,2.3545428358966) circle (0.5pt);
\draw [fill=xdxdff] (8.014470226805303,2.1530437560542435) circle (0.5pt);
\draw [fill=xdxdff] (8.99088995220197,1.9550688904718487) circle (0.5pt);
\draw [fill=xdxdff] (10.044295835790313,1.7414846266734245) circle (0.5pt);
\draw [fill=xdxdff] (11.013620150163108,-1.4227491034062454) circle (0.5pt);
\draw [fill=xdxdff] (11.009487722903831,-0.3855098613277623) circle (0.5pt);
\draw [fill=xdxdff] (11.005793806365977,0.5416631896736419) circle (0.5pt);
\draw [fill=xdxdff] (5.014354222068339,-2.617139812352516) circle (0.5pt);
\draw [fill=xdxdff] (5.014354222068339,-1.6113768099928036) circle (0.5pt);
\draw [fill=xdxdff] (5.014354222068339,-0.4170332446906455) circle (0.5pt);
\draw [fill=xdxdff] (3.9985920013259286,-1.0071096962750126) circle (0.5pt);
\draw [fill=uuuuuu] (5.014354222068338,-1.2082704497945882) circle (0.5pt);
\draw [fill=xdxdff] (6.020046759883498,-1.4074370112050412) circle (0.5pt);
\draw [fill=xdxdff] (7.029779264919688,-1.6074036445553457) circle (0.5pt);
\draw [fill=xdxdff] (8.012410827053339,-1.8020032284288727) circle (0.5pt);
\draw [fill=xdxdff] (9.000460288901632,-1.9976757689125544) circle (0.5pt);
\draw [fill=xdxdff] (10.019748609940669,-2.199534828569305) circle (0.5pt);
\draw [fill=xdxdff] (8.991963408663445,0.9614459332207967) circle (0.5pt);
\draw [fill=uuuuuu] (2.033886234509821,-2.1992696825552556) circle (0.5pt);
\draw [fill=uuuuuu] (4.016823964518636,-2.5997915253164083) circle (0.5pt);
\draw [fill=uuuuuu] (4.996380001776475,-2.404206153365795) circle (0.5pt);
\draw [fill=ududff] (10.011739140043158,0.38443414781475027) circle (0.5pt);
\draw [fill=uuuuuu] (3.006119596912251,-2.006139943894078) circle (0.5pt);
\draw [fill=xdxdff] (4.0284626283218925,0.5969560011720245) circle (0.5pt);
\draw [fill=black] (-1.9982549679031234,1.8000284799159796) circle (0.5pt);
\draw [fill=black] (-0.960524799350616,1.5928734927750765) circle (0.5pt);
\draw [fill=black] (0.023893184605511042,1.3963608529891909) circle (0.5pt);
\draw [fill=black] (1.0332102776042076,1.1948777740951042) circle (0.5pt);
\draw [fill=black] (2.0077055826842116,1.0003459309336955) circle (0.5pt);
\draw [fill=black] (3.009193178687044,0.8004258010591602) circle (0.5pt);
\draw [fill=black] (6.150953231922584,-3.8174433525683593) circle (0.5pt);
\draw [fill=black] (-3.,3.) circle (0.5pt);
\end{scriptsize}
\end{tikzpicture}
\caption{$C(W)$ for the many-to-one market of Example \ref{example:core_workers}} \label{fig:core}
\end{figure}
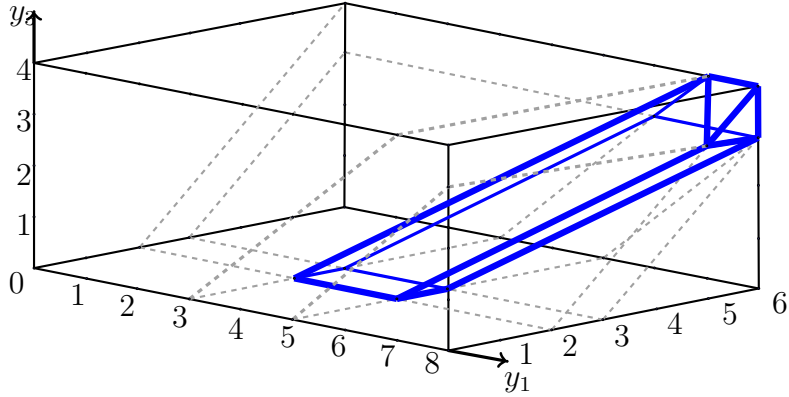
\end{example}

From the description of the core in Proposition \ref{core:description_workers}, it follows straightforwardly that it coincides with the set of competitive equilibrium payoff vectors, that is, $C(W)$ coincides with the set of competitive salary vectors. To see that, we adapt the usual definition of competitive prices to our job market setting with salaries.

Let $\gamma=(F,W,A,r)$ be a many-to-one job market, $\mu\in\mathcal{M}(F,W,r)$ a matching and $y\in\mathbb{R}^W_{+}$ a vector such that $y_j$ is the salary of worker $j\in W$. {\em The pair $(\mu,y)$ is a competitive equilibrium} for this market if and only if:
\begin{enumerate}
\item For each $i\in F$, $\mu(i)\in D_i(y)$, where $D_i(y)$ is the set of $R\subseteq W$, $|R|\le r_i$ such that 
$$\sum_{j\in R}(a_{ij}-y_j)\ge \sum_{j\in S}(a_{ij}-y_j), \mbox{ for all }S\subseteq W\mbox{ such that }|S|\le r_i\; \mbox{ and }$$
\item $y_j=0$ if worker $j$ is not assigned by $\mu$ to any firm $i\in F$.
\end{enumerate}
We then say that $y\in\mathbb{R}^W$ is a vector of competitive salaries and it is compatible with matching $\mu$. It is well known that $\mu$ must be an optimal matching for $\gamma$ and that the competitive salary vector $y$ is  also compatible with any other optimal matching.

On the domain of many-to-one assignment markets, a {\em competitive (equilibrium) rule} $\varphi$ assigns to each market $(F,W,h,t,r)$ a competitive equilibrium $\varphi(h,t)=(\mu, y)$. If we assume $(F,W,r)$ fixed, then a rule assigns a competitive equilibrium to each profile of reported valuations $(h,t)$. Such a rule is {\em non-manipulable} by firm $i\in F$ if reporting its true valuations $h=(h_{ij})_{j\in W}$ is a dominant strategy for firm $i$. Similarly, a rule $\varphi$ is non-manipulable by worker $j\in W$ if reporting his/her true valuation $t_j$ is a dominant strategy.

\section{Core, kernel and bargaining set}
\label{sec:core_geo}

Another relevant result for the one-to-one assignment game is the coincidence between the core and the bargaining set \citep{s99}. The bargaining set is a set-solution concept for coalitional games based on a notion of objections and counterobjections \citep{dm67}. Whenever the core is non-empty, the bargaining set contains the core. The coincidence between the core and the bargaining set, when it holds, is a robustness property of the core since it guarantees that any allocation outside the core has a justified objection (an objection that has no counterobjection) and hence can be dismissed when looking for a cooperative agreement on the distribution of the worth of the grand coalition.

Next, we focus on the relationship between the core and the bargaining set for many-to-one assignment markets. We consider another set-wise solution concept known as the kernel. The \textit{kernel}, introduced by \cite{dm65}, is a non-empty subset of the \textit{classical bargaining set} of \cite{dm67} for all games with a non-empty set of imputations (individually rational and efficient payoff vectors). Whenever the core is non-empty, the intersection between the kernel and the core is non-empty. 


To this end, let us recall the definition of the kernel of a coalitional game. Since our game is zero-monotonic, the kernel is defined as follows:
$$\mathcal{K}(v)=\{z\in I(v)\mid s_{ij}(z)=s_{ji}(z), \mbox{ for all }i,j\in F\cup W\},$$
where $s_{ij}(z)=\max_{i\in S, j\not\in S}\{v(S)-z(S)\}$ is the maximum excess at imputation $z$ of coalitions containing $i$ and not containing $j$. An imputation in the kernel is pairwise balanced in the sense that the maximum excess that agent $i$ can attain with no cooperation of $j$ equals the maximum excess that $j$ can attain without the cooperation of $i$.

Next example shows that the inclusion of the kernel in the core, and also the coincidence of the core with the bargaining set, do not carry over to the many-to-one assignment game. 
We remark that this five-player counter-example is of the smallest size possible, since for any balanced game with at most four players the bargaining set and the core coincide \citep{soly02}, hence the kernel is included in the core \citep{p66}. 
\begin{example}
\label{ex:kernel}
Consider a many-to-one assignment market $\gamma=(F,W,A,r)$ where $F=\{f_{1},f_{2}\}$ is the set of firms, $W=\{w_{1},w_{2},w_{3}\}$ is the set of workers, and the capacities of the firms are $r=(2,2)$. The per-unit pairwise valuation matrix is the following:
 $$A =\bordermatrix{~ & w_{1} & w_{2} & w_{3} \cr
                  f_{1} & 1 & 1  & 1\cr
                  f_{2} & 1 & 1 & 1}.$$
The core contains a unique point, $C(v_{\gamma})=\{(0,0;1,1,1)\}$. One can easily check that imputation $(x,y)=(1,1;1/3,1/3,1/3)$ is not a core allocation, but it lies in the kernel since for any pair of agents the maximum surplus is the same. Let us show that this is indeed the case. The corresponding many-to-one assignment game $(F\cup W,v_{\gamma})$ and the excesses at imputation $(x,y)=(1,1;1/3,1/3,1/3)$, that is, $e(S,(x,y))$, are: $$
\begin{array}{c|cc|ccc|c}
v_{\gamma}(S)  & x_1 & x_2 & y_1 & y_2 & y_3 & e(S,(x,y))\\
\hline
0 & 1 & . & . & . & . &  -1\\ 
0 & . & 1 & . & . & . & -1 \\ 
0 & . & . & 1 & . & . & -1/3 \\ 
0 & . & . & . & 1 & . & -1/3 \\ 
0 & . & . & . & . & 1 & -1/3 \\ 
\hline
0  & 1 & 1 & . & . & . & -2 \\ 
1 & 1 & . & 1 & . & . &  -1/3\\ 
1 & 1 & . & . & 1 & . &  -1/3 \\ 
1 & 1 & . & . & . & 1 &  -1/3 \\ 
1 & . & 1 & 1 & . & . &  -1/3 \\ 
1 & . & 1 & . & 1 & . &  -1/3 \\ 
1 & . & 1 & . & . & 1 &  -1/3\\ 
0  & . & . & 1 & 1 & . &  -2/3\\ 
0  & . & . & 1 & . & 1 &  -2/3 \\ 
0  & . & . & . & 1 & 1 &  -2/3 \\ 
\hline
\end{array} 
\qquad\
\begin{array}{c|cc|ccc|c}
v_{\gamma}(S)  & x_1 & x_2 & y_1 & y_2 & y_3 & e(S,(x,y))\\
\hline
1 & 1 & 1 & 1 & . & . & -4/3 \\ 
1 & 1 & 1 & . & 1 & . & -4/3 \\ 
1 & 1 & 1 & . & . & 1 & -4/3 \\ 
2 & 1 & . & 1 & 1 & . &  1/3 \\ 
2 & 1 & . & 1 & . & 1 & 1/3\\ 
2 & 1 & . & . & 1 & 1 &  1/3\\ 
2 & . & 1 & 1 & 1 & . &  1/3 \\ 
2 & . & 1 & 1 & . & 1 & 1/3 \\ 
2 & . & 1 & . & 1 & 1 &  1/3 \\ 
0 & . & . & 1 & 1 & 1 & -1 \\ 
\hline
2 & 1 & 1 & 1 & 1 & . & -2/3 \\ 
2 & 1 & 1 & 1 & . & 1 & -2/3 \\ 
2 & 1 & 1 & . & 1 & 1 & -2/3\\ 
2 & 1 & . & 1 & 1 & 1 & 0 \\ 
2 & . & 1 & 1 & 1 & 1 & 0 \\ 
\hline
3 & 1 & 1 & 1 & 1 & 1 & 0 \\ 
\hline
\end{array} 
$$ 

Next, we calculate the maximum excess at the point $(x,y)=(1,1;1/3,1/3,1/3)$ for all possible pairs of agents. Since firms (workers) have the same payoff, $x_1=x_2=1$ ($y_1=y_2=y_3=1/3$), it is sufficient to check the maximum excess for the following pairs:
\begin{itemize}
\item If $i=f_1$ and $j=f_2$, then $\max\limits_{\substack{f_{1}\in S\\  f_{2}\notin S}} e(S,(x,y))=1/3=\max\limits_{\substack{f_{2}\in S\\  f_{1}\notin S}} e(S,(x,y))$. 
\item If $i=w_1$ and $j=w_2$, then $\max\limits_{\substack{w_{1}\in S\\  w_{2}\notin S}} e(S,(x,y))=1/3=\max\limits_{\substack{w_{2}\in S\\ w_{1}\notin S}} e(S,(x,y))$. 
\item If $i=f_1$ and $j=w_1$, then $\max\limits_{\substack{f_{1}\in S\\  w_{1}\notin S}} e(S,(x,y))=1/3=\max\limits_{\substack{w_{1}\in S\\  f_{1}\notin S}} e(S,(x,y))$. 
\end{itemize}

Then, for any pair of agents the maximum excess over coalitions containing one of them but not the other is equal at the point $(x,y)=(1,1;1/3,1/3,1/3)$. Hence, the imputation $(x,y)=(1,1;1/3,1/3,1/3)$ lies in the kernel. Since it is not a core allocation, it implies that the kernel is not a subset of the core.

In fact, making use of the facts that (i) the two firms,  as well as any two workers are symmetric players\footnote{Two players  $i$ and $j$ are symmetric in a coalitional game $(N,v)$ if $v(S\cup\{i\})=v(S\cup\{j\})$ for all $S\subseteq N\setminus\{i,j\}$.} in this game, and (ii) symmetric players get the same payoff at any point of the kernel, it can be easily proved that the kernel of this game is the set $\{(\alpha,\alpha;1-\frac{2\alpha}{3}, 1-\frac{2\alpha}{3}, 1-\frac{2\alpha}{3})\mid 0\le \alpha\le 3/2\}$. It is the line segment with the core element $(0,0;1,1,1)$ as one extreme point, and the imputation $(3/2, 3/2; 0,0,0)$ where the firms take all the value as the other extreme point. Notice that the kernel allows for more fair distributions of the value of the market between firms and workers, than the one proposed by the core.\footnote{For example, at the kernel-allocation $(3/4, 3/4; 1/2, 1/2, 1/2)$ the shares of the two sides are equal. In this example, the Shapley value \citep{s53}, is $\phi(v)=( 78/120, 78/120; 68/120, 68/120, 68/120 )$ and it is also in the kernel of the game.}

Since the kernel is a subset of the classical bargaining set, all these points $(\alpha,\alpha;1-\frac{2\alpha}{3}, 1-\frac{2\alpha}{3}, 1-\frac{2\alpha}{3})$ that are in the kernel are also in the bargaining set. This implies that the core and the classical bargaining set do not coincide.
\end{example}

\begin{corollary}
\label{cor:kernel} In the many-to-one assignment game, 
\begin{enumerate}[(i)]
\item The kernel needs not be a subset of the core.
\item The classical bargaining set needs not coincide with the core.
\end{enumerate}
\end{corollary}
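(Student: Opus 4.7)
The plan is to deduce both claims of the corollary directly from Example \ref{ex:kernel}, which appears to have been constructed precisely to witness the two failures, so the proof is essentially an orchestrated reading of that example.

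For item (i), I would first verify that the core of the five-player market in Example \ref{ex:kernel} is the singleton $\{(0,0;1,1,1)\}$. Since all pairwise values equal $1$ and the firm capacities are $(2,2)$, every optimal matching saturates the three workers, so $v_\gamma(N)=3$; combining the core inequalities $x_i+y_j+y_k\geq v_\gamma(\{f_i,w_j,w_k\})=2$ with non-negativity and efficiency forces $x_1=x_2=0$ and $y_1=y_2=y_3=1$. Next, I would display the candidate imputation $z=(1,1;\,1/3,1/3,1/3)$ (it is manifestly efficient and individually rational) and verify the Davis--Maschler balance $s_{ij}(z)=s_{ji}(z)$ for every pair. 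Here I would exploit the symmetries of the game: the two firms are symmetric, as are the three workers, so it suffices to check the condition on three representative pairs, $\{f_1,f_2\}$, $\{w_1,w_2\}$, and $\{f_1,w_1\}$. Reading off the excess table already displayed in the example, in each case the maximum excess equals $1/3$, realised by the three-player coalitions $\{f_i,w_j,w_k\}$ with $v_\gamma=2$ and $z$-payoff $5/3$. Hence $z\in\mathcal{K}(v_\gamma)\setminus C(v_\gamma)$, which proves (i).

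Item (ii) is then immediate from a general fact about coalitional games with non-empty imputation set: the kernel is always a subset of the classical bargaining set of \cite{dm67}. Therefore the point $z$ constructed above lies in the bargaining set of $v_\gamma$ but not in its core, so the bargaining set strictly contains the core in this market, which establishes (ii).

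There is no conceptual obstacle beyond the (mechanical) bookkeeping on the excess table and the careful exploitation of the firm--firm and worker--worker symmetries to collapse the kernel balance condition to the three representative pairs above. The only mildly subtle point is the computation of $s_{f_1 w_1}(z)$ and $s_{w_1 f_1}(z)$: both are realised by coalitions of the form $\{f_i,w_j,w_k\}$ with the chosen agent inside and the other outside, so both equal $2-5/3=1/3$, which is exactly the equality needed.
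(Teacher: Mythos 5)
Your proposal is correct and follows essentially the same route as the paper: it exhibits the five-player all-ones market of Example \ref{ex:kernel}, verifies via the excess table and the firm--firm and worker--worker symmetries that $(1,1;1/3,1/3,1/3)$ satisfies the kernel balance conditions while lying outside the singleton core, and then invokes the inclusion of the kernel in the classical bargaining set to obtain item (ii). The only addition is your explicit derivation that the core is the singleton $\{(0,0;1,1,1)\}$, which the paper asserts without detail, and that derivation is sound.
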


Corollary \ref{cor:kernel} implies that the coincidence between the classical bargaining set and the core cannot be carried over from the one-to-one case to the many-to-one and to the many-to-many case. Our result showing that the coincidence between the core and the classical bargaining set is not satisfied is a remarkable exception among several classes of related combinatorial optimization games. \cite{setal03} showed the coincidence result for permutation games. \cite{sol08} proved (among other variants) that the classical bargaining set coincides with the core for several classes including one-to-one assignment games, tree-restricted superadditive games, and simple network games. \cite{b16} obtained the same result for veto games. Recently, \cite{b21} obtained the coincidence result for the so-called quasi-hyperadditive games which contain one-to-one assignment games. \cite{as18} extended the coincidence result from one-to-one assignment games to a class of multi-sided assignment  games known as the supplier-firm-buyer games.

Although we have learned that the kernel may have imputations outside the core, we know that, whenever the core is non-empty, like in the case of the many-to-one assignment games,  the kernel always contains some core elements. The reason is that, for games with non-empty core, it is well-known that the nucleolus is always in the intersection of the kernel and the core.

As in the one-to-one assignment game \citep{gg92} some simplifications can be done to obtain those core allocations that also belong to the kernel. First, only essential coalitions are to be taken into account, and secondly, not all pairs of agents need to be considered.

We have defined a matching as a set of firm-worker pairs that do not violate the capacities of firms and workers, but we can also understand it as a partition of $F\cup W$ in essential coalitions. If $(i, j_1), (i,j_2), \ldots, (i,j_k)$ are pairs in a matching $\mu$, then the coalition $T=\{i,j_1,j_2,\ldots,j_k\}$ is one element of the partition of $F\cup W$ induced by $\mu$. This fact will simplify notations in the next result.

\begin{proposition}
Let $\gamma=(F,W,A,r)$ be a many-to-one assignment game. Then,
$$\mathcal{K}(v_\gamma)\cap C(v_\gamma)=\{z\in C(v_\gamma)\mid s_{ij}(z)=s_{ji}(z)\mbox{ for all }\{i,j\}\in T\subseteq \Phi(A)\},$$
where $\Phi(A)$ is the set of essential coalitions that belong to all optimal matchings.
\end{proposition}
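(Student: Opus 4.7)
The easy direction $\subseteq$ is immediate: any $z\in\mathcal{K}(v_\gamma)\cap C(v_\gamma)$ satisfies the balance $s_{ij}(z)=s_{ji}(z)$ for every pair of players, in particular for the restricted family $\{i,j\}\subseteq T\in\Phi(A)$. For the converse, I would take $z\in C(v_\gamma)$ for which the balance holds for every pair contained in some $T\in\Phi(A)$ and show that for each remaining pair the balance is automatic. Since the many-to-one assignment game is superadditive (hence zero-monotonic), checking $z\in\mathcal{K}(v_\gamma)$ indeed amounts to verifying the balance for every pair of agents.

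The first ingredient is a reduction of $s_{ij}(z)=\max\{v(S)-z(S):i\in S,\,j\notin S\}$ to essential coalitions. If $S$ is inessential with a nontrivial partition $S=S_1\cup S_2$ and $v(S)\le v(S_1)+v(S_2)$, then whichever block contains $i$, say $S_1$, still excludes $j$ and
\begin{equation*}
v(S_1)-z(S_1)\;\ge\;v(S)-z(S)-(v(S_2)-z(S_2))\;\ge\;v(S)-z(S),
\end{equation*}
because $v(S_2)-z(S_2)\le 0$ on the core. Iterating until each block is essential, the maximum defining $s_{ij}(z)$ is attained on an essential coalition, which by Proposition~\ref{inessentials} is a singleton or a set $\{k\}\cup T$ with $k\in F$ and $|T|\le r_k$. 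Core feasibility gives $v(S)-z(S)\le 0$ on every such $S$, while Proposition~\ref{core:description_workers}(iii) makes the constraint $v(\{k\}\cup\mu(k))-z(\{k\}\cup\mu(k))=0$ tight for every firm $k$ and every optimal matching $\mu$.

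Next I would prove the key lemma that whenever $\{i,j\}$ is not contained in any $T\in\Phi(A)$, there is an optimal matching whose induced partition of $F\cup W$ places $i$ into an essential block disjoint from $j$, and symmetrically with the roles exchanged (possibly via a different optimal matching). Three cases arise: (a) two firms, automatically separated since essential blocks contain at most one firm; (b) a firm $i$ and a worker $j$ with $j\notin\mu(i)$ for some optimal $\mu$, in which case $\{i\}\cup\mu(i)$ witnesses $s_{ij}(z)\ge 0$ and $\{j^\mu\}\cup\mu(j^\mu)$ witnesses $s_{ji}(z)\ge 0$ (using the capacity-balanced assumption so that $j^\mu$ exists); (c) two workers $j,l$ with $j^\mu\ne l^\mu$ for some optimal $\mu$, handled analogously with the tight blocks $\{j^\mu\}\cup\mu(j^\mu)$ and $\{l^\mu\}\cup\mu(l^\mu)$. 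In every case the tight block certifies $s_{ij}(z)\ge 0$ and $s_{ji}(z)\ge 0$, which combined with the upper bound $\le 0$ from core feasibility forces $s_{ij}(z)=s_{ji}(z)=0$.

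The main obstacle is the justification that "$\{i,j\}$ outside every $T\in\Phi(A)$" really implies the existence of a separating optimal matching, notably for two-worker pairs that might be coassigned to the same firm in every individual optimal matching but without any common essential block across matchings. This step appeals to the polyhedral structure of the set of optimal matchings: alternating-cycle swaps on the symmetric difference of two optimal matchings with different coassignments produce a further optimal matching that splits the pair, so the failure of $\{i,j\}\subseteq T$ for every $T\in\Phi(A)$ really does yield a witnessing optimal $\mu$. Once this bridge is secured, the automatic balance for pairs outside $\Phi(A)$ combined with the hypothesis for pairs inside $\Phi(A)$ yields $s_{ij}(z)=s_{ji}(z)$ for every pair, and hence $z\in\mathcal{K}(v_\gamma)$, completing the reverse inclusion.
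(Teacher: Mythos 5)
Your proposal is correct and follows essentially the same route as the paper's proof: reduce the maximum excesses $s_{ij}(z)$ at a core point to essential coalitions using the non-positivity of core excesses, and then, for every pair not covered by the restricted family, use the zero-excess blocks $\{k\}\cup\mu(k)$ of a separating optimal matching to force $s_{ij}(z)=s_{ji}(z)=0$. The one place you go beyond the paper is in explicitly flagging, and sketching an alternating-cycle exchange argument for, the bridge from ``$\{i,j\}$ is contained in no $T\in\Phi(A)$'' to ``some optimal matching separates $i$ and $j$'' (relevant only for worker pairs coassigned in every optimal matching but possibly to different firms); the paper's proof passes over this step silently, and your sketch is the right idea even though you do not carry it out in full.
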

\begin{proof}
Let $z\in C(v_\gamma)$, $S$ an arbitrary coalition of $F\cup W$ and $\mu_S=\{T_1,T_2,\ldots, T_r\}$ an optimal matching for coalition $S$. Then,
$$e(S,z)=v_\gamma(S)-z(S)=\sum_{T\in\mu_S}(v_\gamma(T)-z(T))\le v_\gamma(T_k)-z(T_k), \mbox{ for all }T_k\in\mu_S,$$
where the inequality follows from the fact that excesses at a core allocation are always non-positive. Then, the maximum excess at $z$ over coalitions containing agent $i$ and not containing agent $j$ is always attained at an essential coalition. This implies that only essential coalitions are to be considered to find those core elements that belong to the kernel.

Moreover, if we take two firms $i_1,i_2\in F$, then $$s_{i_1i_2}(z)=e(S,z)=0=e(T,z)=s_{i_2i_1}(z),$$ where $S=\{i_1\}\cup\mu(i_1)$ and $T=\{i_2\}\cup\mu(i_2)$, for any optimal matching $\mu$.

Similarly, if we take two workers that are not assigned to the same firm in an optimal matching $\mu$, that is, $(i_1,j_1)\in\mu$ and $(i_2,j_2)\in\mu$, then
 $$s_{j_1j_2}(z)=e(S,z)=0=e(T,z)=s_{j_2j_1}(z),$$ where $S=\{i_1\}\cup\mu(i_1)$ and $T=\{i_2\}\cup\mu(i_2)$.
 
Finally, if we take a firm $i_1$ and a worker $j_2$ that are not matched in some optimal matching $\mu$, that is, there is $\mu\in\mathcal{M}_{A}(F,W,r)$ and $i_2\in F\setminus\{i_1\}$ such that $(i_2,j_2)\in\mu$, then also $s_{i_1j_2}(z)=e(S,z)=0=e(T,z)=s_{j_2i_1}(z),$ where $S=\{i_1\}\cup\mu(i_1)$ and $T=\{i_2\}\cup\mu(i_2)$.
 
To sum up, only firm-worker pairs that are matched in all optimal matchings and pairs of workers that are matched to the same firm in all optimal matchings are to be considered. 
\end{proof}

As a consequence, if a market $(F,W,A,r)$ is such that there is no essential coalition that belongs to all optimal matchings, then all core elements are in the kernel. This is precisely the case of Example \ref{ex:kernel}. 


\bigskip
To complete this section, we include some remarks on some single-valued solutions: the fair-division point, the nucleolus, and the $\tau$-value.\footnote{The \emph{nucleolus} of a coalitional game $(N,v)$ is the payoff vector $\eta(v)\in\mathbb{R}^N$ that lexicographically minimizes the vector of decreasingly ordered excesses of coalitions among all possible efficient payoff vectors \citep{s69}. The \emph{fair-division point} of a one-to-one assignment market is the midpoint of the buyer-optimal and the seller-optimal core allocations \citep{t81}. The $\tau$-value is a single-valued solution for coalitional games introduced in \citep{tijs81}. It is known that for one-to-one assignment games the $\tau$-value and the fair-division point coincide \citep{nr02}.}

Regarding the computation of the nucleolus of the many-to-one assignment game, we simply point out that only the essential coalitions need to be considered in the lexicographic minimization procedure. This is a notable simplification but still of exponential size.
It remains for further research to see whether the quadratic many essential coalitions which describe the core (Proposition~\ref{core:description_workers}) are also sufficient to determine the nucleolus. 

Secondly, because of 
the existence of an optimal core allocation for each side, which is the worst core allocation for the opposite side, the fair division point can be defined, as in the one-to-one assignment game, as the midpoint of these two core allocations. The convexity of the core guarantees this midpoint is also a core allocation.
For the one-to-one assignment game, the fair division point coincides with the $\tau$-value, which is the only efficient point in the segment between the utopia point, where each agent obtains his/her marginal contribution, and the minimum rights point (see \citep{tijs81} for the formal definition).

Consider the many-to-one assignment market introduced in Example \ref{example:core_workers}. One can calculate, or check with help of the Kohlberg-criterion, that the nucleolus is $(4,2.25$; $5.75,4.25,1.75)$ \footnote{The nucleolus can be calculated efficiently by means of the algorithm introduced by \cite{betal21}.}, the fair division point is $(4.5,2;5.5,4,2)$, and the $\tau$-value is $(143/28,52/28$; $149/28,108/28,52/28)$. The $\tau$-value does not coincide with the fair division point and moreover it does not lie in the core since, for instance, the core constraint for the coalition $\{f_{2},w_{3}\}$ does not hold as $52/28+52/28= 104/28<4=v_{\gamma}(\{f_{2},w_{3}\})$. Thus, we observe that for the many-to-one assignment game, and hence for the transportation game, the $\tau$-value need not be a core allocation. 







\section{Maximum and minimum competitive salaries}
\label{sec:max-min}

In spite of what we observe in Example \ref{ex:kernel} above, generically the core of a many-to-one assignment game contains infinitely many allocations, each of them supported by a vector of competitive salaries. In particular, this is the case when the optimal matching is unique, and then the dimension of the core is $(m+n)-m=n$.

Special attention has been paid to the vectors of maximum and minimum competitive prices (or salaries in our case).  For many-to-one assignment market $\gamma=(F,W,A,r)$, as a particular case of the model in \citep{gs99}, and normalizing at zero the reservation values of the workers, these two extreme vectors of competitive salaries can be obtained from
\begin{eqnarray}\label{maxp}
\overline{y}_j&=&v_{\gamma}(F\cup W)-v_{\gamma}(F\cup (W\setminus\{j\}), \mbox{ for all }j\in W,\\
\label{minp}
\underline{y}_j&=&v_{\gamma^{+j}}(F\cup (W\cup\{j'\}))-v_{\gamma}(F\cup W),\mbox{ for all }j\in W,\end{eqnarray}
where the value $v_{\gamma^{+j}}(F\cup (W\cup\{j'\}))$ is obtained by duplicating in the valuation matrix $A$ the column of worker $j$ and looking for an optimal matching among those that do not allocate the two copies of worker $j$ to the same firm.

It is well known that if all firms have a unitary capacity, then $\underline{y}_j=v_{\gamma}((F\setminus\{j^\mu\})\cup W)-v_{\gamma}((F\setminus\{j^\mu\})\cup (W\setminus\{j\}))$ and the maximum core payoff of the firm assigned to $j$ is its marginal contribution $\overline{x}_{j^\mu}=v_{\gamma}(F\cup W)-v_{\gamma}((F\setminus\{j^\mu\})\cup W)$. In the many-to-one case, those firms with capacity greater than one may not attain their marginal contribution in the core. Take for instance firm $f_1$ in Example \ref{example:core_workers}, since the minimum competitive salaries are $\underline{y}=(3,2,0)$, the maximum core payoff of this firm is $(8-3)+(6-2)=9$ that is below $v_{\gamma}(F\cup W)-v_{\gamma}((F\setminus\{f_1\})\cup W)=11$.

There exist in the literature different axiomatic characterizations of the minimum and/or the maximum competitive equilibrium rules for the one-to-one assignment games, some of them based on non-manipulability properties:\footnote{Other axiomatic characterizations of the optimal competitive rules in the one-to-one case but based on monotonicity properties of the valuations can be found in \citep{vbnr21}.} the firm-optimal (worker-optimal) competitive rules are the only competitive rules that cannot be manipulated by any firm (worker), see for instance \citep{pcs17}. This same reference shows that this characterization can be extended to the worker-optimal competitive rule in the many-to-one assignment market. Moreover, a worker (even with unitary capacity) may manipulate any competitive rule that does not assign this worker the maximum competitive salary.

It is also well known that, in the many-to-one case, the firm-optimal competitive rules (that is, the rules that select the minimum competitive salaries) are manipulable by firms with capacity greater than one, see for instance \citep{so02}. However, the firm-optimal competitive rules cannot be manipulated by a firm $i\in F$ by constantly overreporting its valuations \citep{dn22}, that is, by reporting $h'_{ij}=h_{ij}+c$ for all $j\in W$ and $c>0$. 

We now characterize the minimum competitive rules not by being non-manipulable by workers but by being invariant under the constant decrease of a firm's valuations.

\begin{definition}\label{defINV}
A rule $\varphi=(\mu, y)$ on the domain of many-to-one assignment games is \emph{invariant under the constant decrease of a firm's valuation (INV)} if, for all market $\gamma=(F,W,h,t,r)$, all $i_0\in F$ and $c\ge 0$, if 
\begin{description}
\item{(i)} $h^c_{i_0j}=h_{i_0j}-c$ and $a^c_{i_0j}=\max\{h_{i_0j}-c-t_j,0\}$, for all $j\in W$, 
\item{(ii)} $h^c_{ij}=h_{ij}$, $a_{ij}^c=a_{ij}$ for all $i\in F\setminus\{i_0\}$ and $j\in W$,
\item{(iii)} $c\le a_{i_0j}$ for all $(i_0,j)\in\mu$ and all $\mu\in\mathcal{M}_A(F,W,r)$ and 
\item{(iv)} $\mathcal{M}_A(F,W,r)\subseteq \mathcal{M}_{A^c}(F,W,r)$,
\end{description}
then $y_j(a)=y_j(a^c)$ for all $j\in \mu(i_0)$.
\end{definition}

This invariance states that if a firm decreases all its valuations in a constant amount until a certain threshold $c^*$ such that all initial optimal matchings are still optimal, then the  competitive salaries chosen by the rule for the workers hired by this firm do not change. For instance, if this firm has to pay a fee for each contract it signs, then this fee is entirely paid by the firm and does not affect the salary of the workers that are hired.

\begin{theorem}
The minimum competitive rule is the only competitive rule that is invariant under the constant decrease of a firm's valuations.
\end{theorem}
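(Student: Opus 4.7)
I would approach this as two implications.

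To show that $\underline{y}$ satisfies INV, I would use the closed-form expression $\underline{y}_j(a)=v_{\gamma^{+j}}(F\cup W\cup\{j'\})-v_\gamma(F\cup W)$ from (\ref{minp}) and prove that both terms drop by exactly $c\,r_{i_0}$ under hypotheses (i)--(iv). Under (iv) any $\mu\in\mathcal{M}_A(F,W,r)$ remains optimal in $\gamma^c$; under (iii) every $k\in\mu(i_0)$ satisfies $a^c_{i_0 k}=a_{i_0 k}-c$, and since $|\mu(i_0)|=r_{i_0}$ the total drop is exactly $c\,r_{i_0}$, so $v_{\gamma^c}(F\cup W)=v_\gamma(F\cup W)-c\,r_{i_0}$. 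The same reasoning applied to the duplicated market yields $v_{\gamma^{c,+j}}=v_{\gamma^{+j}}-c\,r_{i_0}$, once one selects an optimal matching $\nu$ of $\gamma^{+j}$ in which $i_0$ hires $r_{i_0}$ workers drawn only from $\bigcup_{\mu\in\mathcal{M}_A(F,W,r)}\mu(i_0)\cup\{j'\}$ (so (iii) still applies to every worker in $\nu(i_0)$, and $\nu$ stays optimal in $\gamma^{c,+j}$). Subtracting gives $\underline{y}_j(a^c)=\underline{y}_j(a)$ for every $j\in\mu(i_0)$.

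For uniqueness, let $\varphi$ be any competitive rule satisfying INV. Fix $j\in\mu(i_0)$ and set $c^{*}=a_{i_0 j}-\underline{y}_j(a)$. Assuming for the moment that $c^{*}$ satisfies (iii) and (iv), in $\gamma^{c^{*}}$ the upper bound in Proposition~\ref{core:description_workers}(i) reads $y_j\le a^{c^{*}}_{i_0 j}=\underline{y}_j(a)$, while the first claim gives $\underline{y}_j(a^{c^{*}})=\underline{y}_j(a)$; together, these pin $y_j=\underline{y}_j(a)$ in every CE of $\gamma^{c^{*}}$. Since $\varphi(\gamma^{c^{*}})$ is such a CE, $\varphi_j(a^{c^{*}})=\underline{y}_j(a)$, and INV propagates this back to $\varphi_j(a)=\underline{y}_j(a)$. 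Iterating over every worker of every firm identifies $\varphi$ with $\underline{y}$.

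The main obstacle is verifying condition (iii) at $c^{*}$, i.e.\ establishing $\underline{y}_j(a)\ge a_{i_0 j}-a_{i_0 k}$ for every $k\in\mu(i_0)$. Proposition~\ref{core:description_workers}(ii) does \emph{not} give this directly, since it only constrains pairs of workers matched to \emph{different} firms; instead I would extract the bound from the full core description by building a path in the tight digraph underlying $\underline{y}$ that starts at $j$, leaves firm $i_0$ through some co-hire at a rival firm, and returns through $k$ to a worker whose salary floor is $0$. Condition (iv) is comparatively routine: along $\mu$ the value drops by exactly $c^{*}r_{i_0}$, whereas any rival matching using $i_0$ at full capacity drops by at most $c^{*}r_{i_0}$ (by (iii) for smaller $c$) and any matching using $i_0$ below capacity suffers additional unmatched-worker loss, so $\mu$ remains optimal.
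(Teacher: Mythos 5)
Your uniqueness argument contains a genuine gap, and it is fatal rather than a missing detail. The inequality you identify as ``the main obstacle,'' namely $\underline{y}_j(a)\ge a_{i_0 j}-a_{i_0 k}$ for two workers $j,k\in\mu(i_0)$ hired by the \emph{same} firm, is false in general, so no path construction in the tight digraph can deliver it. Take $F=\{f_1\}$ with $r_1=2$, $W=\{w_1,w_2\}$, $a_{11}=10$, $a_{12}=1$: the salary-core is $[0,10]\times[0,1]$, hence $\underline{y}=(0,0)$, while $a_{11}-a_{12}=9$. There is no rival firm through which your proposed path could leave $i_0$, and more generally Proposition~\ref{core:description_workers} imposes no relative constraint between co-hires of one firm --- that is precisely the structural novelty of the many-to-one model. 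The largest $c$ compatible with (iii) and (iv) is $\min_{k\in\mu(i_0)}\{a_{i_0k}-\underline{y}_k(a)\}$ (the paper records this immediately after the theorem), so your $c^{*}=a_{i_0 j}-\underline{y}_j(a)$ is admissible only when $j$ attains that minimum. For that single worker your pinning argument does work, but for every other $k\in\mu(i_0)$ the upper bound $a^{c^{*}}_{i_0 k}$ stays strictly above $\underline{y}_k(a)$, the CE salary of $k$ in $\gamma^{c^{*}}$ is not determined, and no further admissible decrease exists (in the example, after $c^{*}=1$ the row becomes $(9,0)$ and (iii) forbids any additional decrease), so $\varphi_{w_1}(a)$ is never identified.

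For comparison, the paper proves uniqueness by a different route: it splits each firm into $r_i$ unit-capacity copies, uses \citep{s92} to transfer competitive salary vectors between the many-to-one market and the induced one-to-one market, notes that INV implies invariance under the simultaneous constant decrease of all copies of a firm, and then invokes Theorem A.4 and Corollary A.5 of \citep{dn22}. Your existence argument (direct computation from (\ref{minp})) is likewise a more self-contained route than the paper's, which simply cites \citep{dn22}; but it too leaves a real step unproved: you must exhibit an optimal matching of the duplicated market $\gamma^{+j}$ in which $i_0$ is assigned only workers covered by hypothesis (iii), since otherwise the drop in $v_{\gamma^{+j}}$ need not equal $c\,r_{i_0}$. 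That selection is asserted, not established.
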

\begin{proof}
It is proved in \citep{dn22} that, for the multiple partners assignment game (where both firms and workers may have capacities greater than one), and under the assumptions of Definition \ref{defINV}, the payoff of any worker for each partnership under any firm-optimal stable rule is invariant. Since in the many-to-one assignment market, the stable payoffs of workers coincide with their competitive equilibrium payoffs, we get that the minimum competitive salary of the workers hired by a firm $i_0$ that decreases its valuations  by the same amount $c$, is invariant, as long as $c$ satisfies  (iii) and (iv):  $\underline{y}_j(a)=\underline{y}_j(a^c)$ for all $j\in\mu(i_0)$, where $\mu$ is compatible with $y$.

To prove uniqueness, let $\varphi=(\mu,y^{\varphi})$ be a competitive rule that is invariant under the constant decrease of a firm's valuations. 
When applying $\varphi$ to any many-to-one market $(F,W,A,r)$, the competitive salaries $y^{\varphi}(a)$ are also competitive in a related one-to-one assignment market $(\tilde{F}, W, \tilde{A})$ where there are as many copies of each firm $i\in F$ as their capacities $r_i$ indicate \citep{s92}. The invariance of the salaries $y^{\varphi}(a)$ under the constant decrease of a firm's valuations implies the invariance of $y^{\varphi}(\tilde{a})=y^{\varphi}(a)$ under the constant decrease of the valuations of all copies of $i\in F$ and, from Theorem A.4 and Corollary A.5 in \citep{dn22}, this implies $y^{\varphi}(\tilde{a})=\underline{y}^{\varphi}(\tilde{a})$. Then,  \citep{s92} 
implies $y^{\varphi}(a)=\underline{y}^{\varphi}(a)$ and hence $\varphi$ is the minimum competitive rule of the many-to-one assignment market.
\end{proof}

It is shown in \citep{dn22}, for more general many-to-many multiple partners assignment markets,  that the maximum $c$ under the conditions (iii) and (iv) in Definition \ref{defINV} is $c^*=\min_{j\in\mu(i)}\{a_{ij}-\underline{y}_j(a)\}$ or, equivalently, it is the minimum $c$ such that the valuation matrix $a^c$ has an optimal matching with a zero entry.


\section{The set of extreme competitive salary vectors}
\label{extreme_vectors_digraph}
Besides the vectors of maximum and minimum competitive salaries, there may be several other extreme points in the set of competitive salaries, which correspond to the set of extreme core allocations. The description of these extreme points gives information about how large this set is, and how many different stable agreements can be attained in the market. 
The digraph we introduce next, associated with each vector of competitive salaries, provides a 
characterization of all extreme vectors of competitive salaries, not just of the maximum and minimum ones.

\begin{definition}
\label{def:tight_digraph}
Let $\gamma=(F,W,A,r)$ be a capacity-balanced many-to-one assignment market and $\mu\in\mathcal{M}_A(F,W,r)$ be an optimal matching. 
For each vector of competitive salaries $y\in C(W)$, we define the \emph{tight digraph} $(W_0,E^{y})$ with set of nodes $W_0:=W\cup\{0\}$, where $0$ is a fictitious worker whose salary is fixed to $y_0=0$ and with the set of arcs $E^{y}$ such that 
$$
(j,k)\in E^{y} \quad \leftrightarrow \quad y_k-y_j = 
		\begin{cases}
		0 & \text{if } j=0, k\in W ;\\
		-a_{j^\mu j} & \text{if } j\in W, k=0 ;\\
        a_{j^\mu k}-a_{j^\mu j} & \text{if } j\in W, k\in W\setminus \mu(j^\mu). \\
		\end{cases}
$$ 
\end{definition}

This tight digraph is inspired by the one introduced in \citep{bg90} and also used in \citep{hetal02} to study extreme core points of the one-to-one assignment game. There, the nodes of the graph consist of the agents on both sides of the market, not just from one side as we do for the many-to-one case. And also, their graph is not directed since it is based on the constraints $x_i+y_j\ge a_{ij}$ where both variables have the same sign. They find that the extreme core points are those core points with a tight graph that has an agent with zero payoff in each component.
In our setting, we replace that property with connectedness to the fixed 0 payoff node, hence connectedness of the underlying undirected graph (the base-graph) of the tight digraph. Besides, we also characterize the maximum and the minimum competitive salary vectors with an (easily verifiable) additional property of the tight digraph.    

Before the general discussion, we illustrate the idea and foreshadow the results on the market situation of Example \ref{example:core_workers}.
\begin{example}
\label{ex:extended_digraph}
We revisit Example \ref{example:core_workers} and introduce a fictitious worker $0$ who is optimally matched to a fictitious firm with capacity 1, denoted $0^\mu$, because in any matching of the extended capacity-balanced market the two fictitious agents are required to be paired. Their payoffs are fixed to 0. The virtual possibility of being matched to the fictitious agent on the other side will represent the outside option of an agent, thus the pairwise surplusses with them are set to zero. The extended market, with the optimally assigned firm-worker pairs boxed, is given on the left below. For brevity, we represent the workers and the firms by their index.  On the right below we present the description of $C(W)$ where all constraints are written in a unified way.
\medskip
$$
\setlength{\extrarowheight}{2pt}
\begin{array}{@{\;}c||@{\,}c@{\,}|@{\,}c@{\,}c@{\,}c@{\,}||@{\,}l}
  & 0 & 1 & 2 & 3 & \\
\hline
\hline
0^\mu & \fbox{0} & 0 & 0 & 0 & r_{0^\mu}=1 \\
\hline
1 & 0 & \fbox{8} & \fbox{6} & 3 & r_{1}=2 \\
2 & 0 & 7 & 6 & \fbox{4} & r_{2}=1 \\
\hline
\end{array}
\qquad\quad
\begin{array}{c@{\;}c@{\;}c@{\;}c@{\;}c@{\;}r}
\hline
-0 & +y_1 &     &     & \geq & 0=0-0 \\
-0 &     & +y_2 &     & \geq & 0=0-0 \\
-0 &     &     & +y_3 & \geq & 0=0-0 \\
\hline
+0 & -y_1 &     &     & \geq & -8=0-8 \\
+0 &     & -y_2 &     & \geq & -6=0-6 \\
+0 &     &     & -y_3 & \geq & -4=0-4 \\
\hline
 & -y_1 &     & +y_3 & \geq & -5=3-8 \\
 &     & -y_2 & +y_3 & \geq & -3=3-6 \\
\hline 
 & +y_1 &     & -y_3 & \geq & 3=7-4 \\
 &     & +y_2 & -y_3 & \geq & 2=6-4 \\
\hline
\end{array}
$$
\medskip

Due to this special structure, we associate a directed graph that represents by arcs the inequalities which are tight (satisfied as equality) at a given $y\in C(W)$  and decide  if $y$ is an extreme point by checking whether the base-graph is connected.

Recall that in this market the minimum competitive salary vector is $(3,2,0)$ that makes the following inequalities tight: $-0+y_3=0$, $y_1-y_3=3$, and $y_2-y_3=2$. The associated digraph is pictured on the left in Figure \ref{fig:digraphs-min-max}.
At the maximum competitive salary vector $(8,6,4)$ the following inequalities are tight: $0-y_1=-8$, $0-y_2=-6$, $0-y_3=-4$, and $y_2-y_3=2$. The associated digraph is pictured on the right in Figure \ref{fig:digraphs-min-max}.

\begin{figure}[h]
\centering
\begin{tikzpicture}[scale=1] 
\path 
	(0,-3.6) node(am0) [rectangle,draw] {$0$}
    (1,-1) node(a11) [rectangle,draw] {1}
    (2.2,-2) node(a22) [rectangle,draw] {2}
    (3.4,-3) node(a33) [rectangle,draw] {3}
 ; 
\draw[-latex] (am0) -- (a33) ;
\draw[-latex] (a33) .. controls +(-0.1,+1.9) .. (a11) ;
\draw[-latex] (a33) .. controls +(-0.2,+0.8) .. (a22) ;
\end{tikzpicture}
\qquad\qquad\qquad\qquad
\begin{tikzpicture}[scale=1] 
\path 
	(0,-3.6) node(a00) [rectangle,draw] {$0$}
    (1,-1) node(a11) [rectangle,draw] {1}
    (2.2,-2) node(a22) [rectangle,draw] {2}
    (3.4,-3) node(a33) [rectangle,draw] {3}
 ; 
\draw[-latex] (a11) -- (a00) ;
\draw[-latex] (a22) -- (a00) ;
\draw[-latex] (a33) -- (a00) ;
\draw[-latex] (a33) .. controls +(-0.2,+0.8) .. (a22) ;
\end{tikzpicture}
\caption{digraph of minimum vector $(3,2,0)$,  
digraph of maximum vector $(8,6,4)$} \label{fig:digraphs-min-max}
\end{figure}
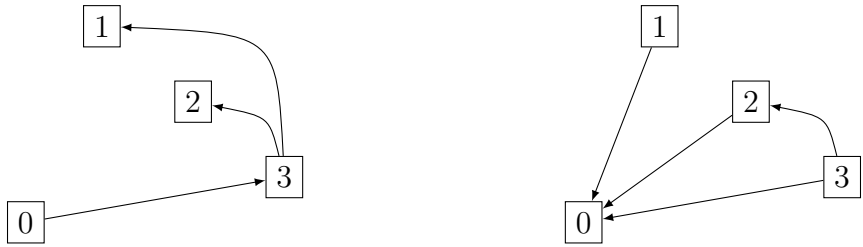

In both cases the base-graph is connected. Notice that in the tight digraph of the minimum competitive salary vector $(3,2,0)$ node 0 is the only source, while in the tight digraph of the maximum competitive salary vector $(8,6,4)$ node 0 is the only sink.

Similarly, the tight digraphs associated with CE vectors $(3,3,0)$ and $(7,6,4)$, pictured, respectively, on the left and on the right in Figure \ref{fig:digraphs-notmin-notmax}, are both connected. Thus, both vectors are also extreme points of $C(W)$. 
However, in neither of these tight digraphs node 0 is the only source or the only sink.

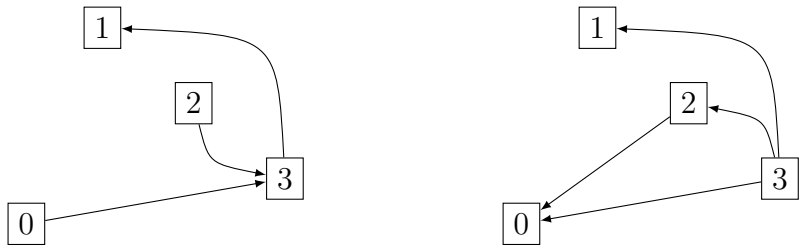
\begin{figure}[h]
\begin{center}
\begin{tikzpicture}[scale=1] 
\path 
	(0,-3.6) node(a00) [rectangle,draw] {$0$}
    (1,-1) node(a11) [rectangle,draw] {1}
    (2.2,-2) node(a22) [rectangle,draw] {2}
    (3.4,-3) node(a33) [rectangle,draw] {3}
 ; 
\draw[-latex] (a00) -- (a33) ;
\draw[-latex] (a22) .. controls +(0.2,-0.8) .. (a33) ;
\draw[-latex] (a33) .. controls +(-0.1,+1.9) .. (a11) ;
\end{tikzpicture}
\qquad\qquad\qquad
\begin{tikzpicture}[scale=1] 
\path 
	(0,-3.6) node(a00) [rectangle,draw] {$0$}
    (1,-1) node(a11) [rectangle,draw] {1}
    (2.2,-2) node(a22) [rectangle,draw] {2}
    (3.4,-3) node(a33) [rectangle,draw] {3}
 ; 
\draw[-latex] (a22) -- (a00) ;
\draw[-latex] (a33) -- (a00) ;
\draw[-latex] (a33) .. controls +(-0.1,+1.9) .. (a11) ;
\draw[-latex] (a33) .. controls +(-0.2,+0.8) .. (a22) ;
\end{tikzpicture}
\end{center}
\caption{digraph of extreme vector $(3,3,0)$
\qquad
digraph of extreme vector $(7,6,4)$} \label{fig:digraphs-notmin-notmax}
\end{figure}

Indeed, in case of $(3,3,0)$, node 2 is also a source, indicating that none of the constraints which contains $-y_2$ is tight, hence $y_2$ can be decreased (with a sufficiently small positive amount) without leaving the feasible solution set. Therefore, $(3,3,0)$ cannot be the minimum competitive salary vector.
Similarly, in case of $(7,6,4)$, node 1 is also a sink, indicating that none of the constraints which contain $+y_1$ is tight, hence $y_1$ can be increased (with a sufficiently small positive amount) without violating any of the lower-bound constraints. Therefore, $(7,6,4)$ cannot be the maximum competitive salary vector. 
\end{example}

We now formally establish the characterization of the extreme vectors of competitive salaries of the many-to-one assignment game by properties of the corresponding tight digraphs (Definition~\ref{def:tight_digraph}).
\begin{theorem}
\label{thm:ext_core_char}
Let $\gamma=(F,W,A,r)$ be a capacity-balanced many-to-one assignment market in which $\mu$ is an optimal matching. Then 
\begin{description}
\item[(A)] $y\in C(W)$ is an extreme vector of competitive salaries if and only if the base-graph of the associated tight digraph $(W_0,E^y)$ is connected. 
\item[(B)]  $y\in C(W)$ is the minimum vector of competitive salaries if and only if its tight digraph $(W_0,E^y)$ contains a 0-sourced directed spanning tree (i.e. all arcs of the spanning tree are directed away from node 0). 
\item[(C)] $y\in C(W)$ is the maximum vector of competitive salaries if and only if its tight digraph $(W_0,E^y)$ contains a 0-sinked directed spanning tree (i.e. all arcs of the spanning tree are directed towards node 0).
\end{description}
\end{theorem}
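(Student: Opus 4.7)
The plan is to unify the description of $C(W)$ by adopting the convention that the fictitious worker $0$ has fixed payoff $y_{0}=0$; then all defining inequalities of $C(W)$ in Proposition~\ref{core:description_workers} take the common form $y_{k}-y_{j}\ge c_{jk}$, with the lower bounds, upper bounds, and cross-inequalities treated uniformly. Such an inequality is tight at $y$ precisely when $(j,k)\in E^{y}$. Viewed over the variables $y\in\mathbb{R}^{|W|}$ (since $y_{0}$ is constant), the active-constraint matrix at $y$ thus has rows of the form $e_{k}-e_{j}$ with the convention $e_{0}:=0$; this is exactly the signed incidence matrix of the base-graph of $(W_{0},E^{y})$ with the column indexed by node $0$ removed.

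For Part (A), a point $y$ is an extreme point of $C(W)$ iff the active-constraint matrix has rank $|W|$. I would invoke the standard fact that the signed incidence matrix of an undirected graph on $n$ vertices with $c$ components has rank $n-c$, and that deleting any one column does not decrease the rank, since within each component the columns are already linearly dependent (they sum to zero). Applied to $(W_{0},E^{y})$, the active-constraint matrix has rank $|W_{0}|-c=|W|+1-c$; hence $y$ is extreme iff $c=1$, that is, iff the base-graph of $(W_{0},E^{y})$ is connected.

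For Part (B), I would use a reachability argument in the directed graph $E^{y}$. For sufficiency, given a $0$-sourced spanning arborescence $T\subseteq E^{y}$ and any $y'\in C(W)$, I would induct on the distance from $0$ in $T$: at node $0$ we have $y'_{0}=0=y_{0}$, and for an arc $(i,j)\in T$ the tightness $y_{j}-y_{i}=c_{ij}$ combined with the valid inequality $y'_{j}-y'_{i}\ge c_{ij}$ and the inductive hypothesis $y'_{i}\ge y_{i}$ gives $y'_{j}\ge y_{j}$. For necessity, let $R\subseteq W_{0}$ be the set of nodes reachable from $0$ via directed paths in $E^{y}$, and suppose $R\subsetneq W_{0}$. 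Since no tight constraint leaves $R$, every cross-constraint $y_{k}-y_{j}\ge c_{jk}$ with $j\in R,\,k\notin R$ has strict slack. Decreasing $y_{j}$ by a small $\epsilon>0$ uniformly for $j\in W_{0}\setminus R$ (possible since $0\in R$, so $y_{0}=0$ is preserved) keeps every constraint valid: cross-constraints from $R$ become less tight, constraints internal to either side are unaffected, cross-constraints from the complement back to $R$ remain strictly satisfied for $\epsilon$ small, and the unilateral lower bound $y_{j}\ge 0$ for $j\notin R$ is respected because a tight such bound would place $j$ in $R$ through the arc $(0,j)$. The resulting feasible $y'\le y$ with $y'\ne y$ contradicts minimality, so $R=W_{0}$, and a BFS tree from $0$ in $E^{y}$ yields the required arborescence. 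Part (C) follows by the symmetric argument: replace ``reachable from $0$'' by ``can reach $0$'', and ``decrease on $W_{0}\setminus R$'' by ``increase''.

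The main technical obstacle is the rank count in Part (A) after the column of node $0$ is removed; one must carefully justify that this deletion does not lower the rank from $|W_{0}|-c$, using the column-sum relation on each component. The reachability arguments in Parts (B) and (C) are routine once it is checked that the unilateral-bound constraints $0\le y_{j}\le a_{j^{\mu}j}$ are automatically handled by the arcs to and from node $0$.
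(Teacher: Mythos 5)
Your proof is correct, but part (A) takes a genuinely different route from the paper, while parts (B) and (C) are close in spirit with one useful reorganization. For (A), the paper argues both directions by hand: for necessity it perturbs a component $W'$ of the base-graph not containing $0$ by $\pm\varepsilon$ (where $\varepsilon$ is the minimum slack of the constraints crossing between $W'$ and $W_0\setminus W'$) and writes $y$ as the midpoint of the two resulting feasible points; for sufficiency it telescopes the tight equalities along an undirected $0$--$j$ path to show every $y_j$ is uniquely determined. You instead observe that the active rows $e_k-e_j$ (with $e_0:=0$) form the incidence matrix of the base-graph with the node-$0$ column deleted, and invoke the rank formula $|W_0|-c$ together with the vertex-iff-rank-$|W|$ criterion. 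Your packaging is shorter and makes the ``why connectedness'' transparent, at the cost of importing two standard facts (the incidence-matrix rank formula and the preservation of rank under deletion of one column per component); the paper's version is self-contained and is essentially an explicit unwinding of the same linear algebra. For (B), your sufficiency argument (induction along the arborescence) is the paper's summation along directed paths in different clothing. Your necessity argument is organized globally around the reachable set $R$ and a uniform decrease on $W_0\setminus R$, whereas the paper argues locally that no worker node can be a source and node $0$ must have an out-arc before concluding reachability; your set-based version closes that last inferential step more tightly. One cosmetic slip: in your perturbation for (B) you attach the wrong justifications to the two families of cross-constraints --- decreasing $y_k$ for $k\notin R$ makes the constraints $y_k-y_j\ge c_{jk}$ with $j\in R$ \emph{tighter} (they survive only because of the strict slack you established), while the constraints with $j\notin R$, $k\in R$ are the ones that become slacker; the argument is unaffected, but the labels should be swapped.
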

\begin{proof}
First, we prove the ``\emph{only if}'' part of characterization (A). Suppose on the contrary that the base-graph of the tight digraph associated with an extreme vector $y\in C(W)$ is not connected. Then let $W'\subset W$ be the node set of a component which does not contain node 0. Now, let us define $\varepsilon=\min\{(y_k-y_j)-(a_{j^\mu k}-a_{j^\mu j}) : j\in W', k\in W_0\setminus W'\}$, with $y_k=a_{j^\mu k}=0$ if $k=0$. Since there are no arcs between $W'$ and the rest of the nodes $W_0\setminus W'$, we have $\varepsilon>0$. If we define $y'_{j}=y_{j}+\varepsilon$, $y''_{j}=y_{j}-\varepsilon$ for all $j\in W'$, and $y'_{k}=y''_{k}=y_{k}$ for all $k\in W_0\setminus W'$, both vectors $y'$ and $y''$ also belong to $C(W)$. However, $y=\frac{1}{2}y'+\frac{1}{2}y''\in C(W)$, which contradicts the assumption that $y$ is an extreme point. 

Second, we prove the ``\emph{if}'' part of (A). If the base-graph of the tight digraph associated with a vector $y\in C(W)$ is connected, then there is a path from node 0 to any node $j\in W$, that is a sequence of nodes $0=j_0,j_1,\ldots, j_k=j$ with $k\geq 1$ such that any two consecutive nodes are the two endpoints of an arc. If $(j_{h}j_{h+1})\in E^\mu$ $(0\leq h\leq k-1)$ then it is called a forward arc, if $(j_{h+1}j_{h})\in E^\mu$ $(0\leq h\leq k-1)$ then it is called a backward arc. In case nodes $j_h$ and $j_{h+1}$ $(0\leq h\leq k-1)$ are connected by both types of arcs in the tight digraph, we choose one of them arbitrarily for the path. If we add the equations $y_{h+1}-y_{h}=a_{h^\mu h+1}-a_{h^\mu h}$ related to the forward arcs in this path, and subtract the sum of the equations $y_{h}-y_{h+1}=a_{(h+1)^\mu h}-a_{(h+1)^\mu h+1}$ related to the backward arcs, all variables $y_h$, $1\leq h\leq k-1$ (if any) cancel out, only $y_j-y_0=y_j$ remains on the left side. Thus, we get $y_j=\sum_{(h,h+1)\in E^\mu} (a_{h^\mu h+1}-a_{h^\mu h}) - \sum_{(h+1,h)\in E^\mu} (a_{(h+1)^\mu h}-a_{(h+1)^\mu h+1})$. Since all salaries $y_j$ $(j\in W)$ are uniquely determined by the tight constraints, their vector $y$ is an extreme point of $C(W)$.

We only prove the characterization for the minimum vector of competitive salaries in (B), the proof for the maximum vector in (C) goes in an analogous way.

Assume first that $y\in C(W)$ is the minimum vector of competitive salaries. If there is a node $j\in W$ with no incoming arc then all constraints in which $y_j$ appears with $+1$ coefficient are satisfied as strict inequalities, so while keeping all other variables fixed, we can decrease $y_j$ with a sufficiently small positive amount without violating any of these constraints. Besides, we actually increase the left hand side of those greater-or-equal inequalities in which $y_j$ appears with $-1$ coefficient, and we do not change the left hand side of the rest of the constraints. This would contradict the minimality of $y_j$. Thus, none of the nodes $j\in W$ can be a source at the minimum vector of competitive salaries. A similar argument shows that there must be an arc going out from node 0, for otherwise we could decrease all salaries $y_j$ $(j\in W)$ with a sufficiently small positive amount without violating any CE constraint, again a contradiction to the minimality of vector $y$. Combining these two observations with the finiteness of the number of nodes, we conclude that there must exist a directed path (i.e. containing only forward arcs) from node 0 to any node $j\in W$, implying that the tight digraph $(W_0,E^y)$ must contain a 0-sourced directed spanning tree.

To see the converse implication in (B) for the minimum vector of competitive salaries, assume that for an arbitrary $y\in C(W)$ the associated tight digraph $(W_0,E^y)$ contains a 0-sourced directed spanning tree. Then there exits a directed path $0=j_0,j_1,\ldots, j_k=j$ with $k\geq 1$ from node 0 to any node $j\in W$ containing only forward arcs. Thus, if we add the related inequalities $y'_{h+1}-y'_{h}\geq a_{h^\mu h+1}-a_{h^\mu h}$ along this directed path, we get $y'_j \geq \sum_{(h,h+1)\in E^\mu} (a_{h^\mu h+1}-a_{h^\mu h})$ for any feasible vector $y'\in C(W)$. For the selected $y\in C(W)$, all these constraints hold as equalities, thus, $y_j=\min\{ y'_j: y'\in C(W)\}$ for all $j\in W$, implying that $y\in C(W)$ is the minimum vector of competitive salaries.
\end{proof}

We remark that a tight digraph might contain a directed cycle, which might even contain node 0, but only if there are alternative optimal matchings in the many-to-one assignment market. If the optimal matching is unique, like in Example~\ref{ex:extended_digraph}, node 0 is either a source or a sink (but not both) in the tight digraph associated with any extreme salary vector.

We conclude this section with an immediate consequence of Theorem~\ref{thm:ext_core_char}. 
\begin{corollary}
\label{corollary:FOCAWOCA}
Let $\gamma=(F,W,A,r)$ be a capacity-balanced many-to-one assignment market in which $\mu$ is an optimal matching, and let $y\in C(W)$. If $y$ is the vector of minimum (resp. maximum) competitive salaries, then there is a worker $j\in W$ with salary $y_{j}=0$ (resp. $y_{j}=a_{j^\mu j}$).
\end{corollary}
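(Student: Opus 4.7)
The plan is to deduce the corollary directly from parts (B) and (C) of Theorem \ref{thm:ext_core_char} by reading off what the existence of a $0$-sourced (resp.\ $0$-sinked) directed spanning tree forces on the arcs incident to the node $0$, and then translating each such arc back into the tight linear constraint it encodes.

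For the minimum vector, I would argue as follows. By Theorem \ref{thm:ext_core_char}(B), the tight digraph $(W_0,E^y)$ contains a $0$-sourced directed spanning tree. Since $W$ is non-empty in any non-degenerate market, such a spanning tree must contain at least one arc of the form $(0,k)$ with $k\in W$. By Definition \ref{def:tight_digraph}, the presence of such an arc means exactly that the constraint $y_k-y_0=0$ is tight; recalling the convention $y_0=0$, this yields $y_k=0$, which is the desired conclusion.

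For the maximum vector the argument is completely symmetric: Theorem \ref{thm:ext_core_char}(C) gives a $0$-sinked directed spanning tree, so there is at least one arc $(j,0)$ with $j\in W$. By Definition \ref{def:tight_digraph}, this arc corresponds to the tight equality $y_0-y_j=-a_{j^\mu j}$, and once again using $y_0=0$ we obtain $y_j=a_{j^\mu j}$.

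There is no real obstacle here: the only thing to check is the (essentially trivial) observation that a spanning tree on $W_0$ must use at least one arc at node $0$, and that the cases $j=0$ and $k=0$ in Definition \ref{def:tight_digraph} are precisely the ones producing $y_k=0$ and $y_j=a_{j^\mu j}$ respectively. The corollary is therefore a one-line consequence of the digraph characterization, modulo naming the arc at $0$ and reading the tight constraint it encodes.
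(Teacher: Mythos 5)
Your argument is correct and is exactly the reasoning the paper intends: the corollary is stated there as an immediate consequence of Theorem \ref{thm:ext_core_char}, with the proof left implicit, and your reading of the arcs incident to node $0$ in the $0$-sourced (resp.\ $0$-sinked) spanning tree via Definition \ref{def:tight_digraph} supplies precisely the missing one-line deduction. Nothing further is needed.
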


\section{The max-min salary vectors}
\label{sec:extreme_char}

In this section we intend to compute the set of extreme core allocations or, equivalently, extreme competitive salary vectors of the many-to-one assignment markets. A natural first approach is to consider the relationship between the extreme core allocations and some  lexicographic allocation procedures. This approach has been applied by (i) \cite{hetal02} to show that each extreme core allocation of a one-to-one assignment game is a marginal payoff vector and by (ii) \cite{ietal07} to see that each such extreme point is the result of a  lexicographic minimization procedure on the set of rational allocations: for each order on the set of agents, let the payoff to the first player in the order be zero and, for each following agent, compute the minimum payoff that satisfies all core inequalities with his/her predecessors while preserving the payoffs that they have already been allocated.
More recently,  (iii), \cite{ns17} prove that each extreme core allocation of the one-to-one assignment game is the result of a lexicographic maximization over the set of dual rational allocations (lemarals).  However, it is easy to find examples (see Example \ref{ex:not_INTO_ONTO} in the Appendix) showing that none of these three procedures allows to describe all the extreme core allocations of many-to-one assignment markets.  


The characterization of the extreme competitive salary vectors of the many-to-one assignment game by means of the tight digraphs given in Theorem \ref{thm:ext_core_char} will allow to describe a procedure to obtain all these extreme points. We will see that the extreme competitive salary vectors of these games also correspond to a sequence of  lexicographic optimization, 
where, for each given order, some workers maximize their salary while some other workers minimize it, always preserving what  has been allocated to their predecessors.



There are two main differences between the following definition of the max-min salary vectors and the lexicographic procedures applied to the one-to-one assignment game: only workers are now considered and each order on the set of workers must be completed with an indication of whether the worker in this position maximizes or minimizes his/her salary.

Let $\theta:\{1,\ldots,n=|W|\}\longrightarrow W$ be an order on the set of workers, where $\theta(i)$ is the worker in the $i$th-position, and we can also write $\theta=(j_1,j_2,\ldots, j_n)$. We denote by $\Sigma_W$ the set of all orders on $W$. Given a worker $j\in W$, $P_j^{\theta}=\{k\in W\mid \theta^{-1}(k)<\theta^{-1}(j)\}$ is the set of predecessors of $j$ according the order $\theta$.

Then, an extension of the order $\theta$ is 
$$\begin{array}{rcl}\tilde{\theta}:\{1,\ldots,n=|W|\} &\longrightarrow & W\times\{\min,\max\}\\
                                         i & \mapsto & \tilde{\theta}(i)=\left\{\begin{array}{l}
                                               (\theta(i), \min)=\underline{\theta}(i)\\  \mbox{ or }\\
                                               (\theta(i),\max)=\overline{\theta}(i),\end{array}\right.
\end{array}$$
where $\underline{\theta}(i)$ means that worker is in $i$th position and will minimize his/her salary under some constraints. Similarly, $\overline{\theta}(i)$ means that the $i$th player in the order will maximize his/her salary under some constraints. We denote by $\widetilde{\Sigma}_W$ the set of all extended orders on $W$.
Clearly, $|\Sigma_W|=n!$ and $|\widetilde{\Sigma}_W|=n!\cdot 2^n$, where $n=|W|$ is the number of workers.

\begin{definition}\label{def:maxmin}
Let $(F,W,A,r)$ be a capacity-balanced many-to-one assignment game, $\mu$ an optimal matching, $\theta=(j_1,j_2,\ldots, j_n)$ an order on $W$ and $\tilde{\theta}$ an extension of $\theta$. The related {\em max-min salary vector} $y^{\tilde{\theta}}$ satisfies
$$y_{j_1}^{\tilde{\theta}}=\left\{\begin{array}{lll}
0 & \mbox{ if } & \tilde{\theta}(1)=\underline{\theta}(1)\\
a_{j_1^\mu j_1} & \mbox{ if } & \tilde{\theta}(1)=\overline{\theta}(1),\end{array}\right.$$
and for all $1<r\le n$, 
$$y^{\tilde{\theta}}_{j_r}=\left\{\begin{array}{lll}
\max_{j\in P^{\theta}_{j_r}, j^\mu\ne j_r^\mu }\{y_j-a_{j^\mu j}+a_{j^\mu j_r},0\} & \mbox{ if } & \tilde{\theta}(r)=\underline{\theta}(r)\\
\min_{j\in P^{\theta}_{j_r}, j^\mu\ne j_r^\mu }\{y_j-a_{j_r^\mu j}+a_{j_r^\mu j_r}, a_{j_r^\mu j_r}\} & \mbox{ if } & \tilde{\theta}(r)=\overline{\theta}(r).\end{array}\right.$$
\end{definition}

To give an interpretation to these vectors, recall from Proposition \ref{core:description_workers} that the core constraints worker $j_r$ must satisfy are $0\le y_{j_r}\le a_{j_r^\mu j_r}$  and  
$$a_{j^\mu j_r}-a_{j^\mu j}  \le y_{j_r}-y_j\le a_{j_r^\mu j_r}-a_{j_r^\mu j}, \mbox{ for all }j\in W, \, j^\mu\ne j_r^\mu .$$
Then, when we reach worker $j_r$ following order $\theta$, the max-min vector procedure only considers the core constraints with variables from $P^{\theta}_{j_r}\cup\{j_r\}$ and determines a payoff (salary) for $j_r$ that satisfies (in a tight way) either one lower core bound or one upper core bound, depending on whether the extended order $\tilde{\theta}$ determines $j_r$ is a maximizer or a minimizer. Since all $y_j$ values for $j\in P^{\theta}_{j_r}$ have already been set, finding $y_{j_r}$ amounts to the elementary optimization problems given in the above definition. It is not surprising that a max-min salary vector may not be in $C(W)$, since one half of the core constraints are not checked during the procedure that builds such vector. However, we show next that if a max-min salary vector is competitive, then it is an extreme competitive vector. This same property (the fact that when they are in the core, they are extreme core points) is satisfied by the marginal worth vectors in arbitrary coalitional games and by the max-payoffs vectors in one-to-one assignment games, which are also collections of vectors that are defined for each possible order on a player set.

\begin{proposition}
Let $\gamma=(F,W,A,r)$ be a capacity-balanced many-to-one assignment market, $\mu$ an optimal matching, $\theta$ an order on $W$ and $\tilde{\theta}$ an extension of $\theta$. If  $y^{\tilde{\theta}}\in C(W)$, then $y^{\tilde{\theta}}\in Ext(C(W))$.
\end{proposition}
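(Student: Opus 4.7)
The plan is to invoke characterization (A) of Theorem~\ref{thm:ext_core_char}: it suffices to show that whenever $y^{\tilde{\theta}}$ is competitive, the base-graph of its tight digraph $(W_0,E^{y^{\tilde{\theta}}})$ is connected. To do this, I will argue by induction on the position $r$ in the order $\theta=(j_1,\ldots,j_n)$ that node $j_r$ lies in the same component as the fixed node $0$.

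The key observation is that Definition~\ref{def:maxmin} is engineered so that $y^{\tilde{\theta}}_{j_r}$ is obtained by making one of the core constraints binding. I would first inspect the base case $r=1$: if $\tilde{\theta}(1)=\underline{\theta}(1)$ then $y_{j_1}=0$, so the constraint $y_{j_1}-y_0\ge 0$ is tight and $(0,j_1)\in E^{y^{\tilde\theta}}$; if $\tilde{\theta}(1)=\overline{\theta}(1)$ then $y_{j_1}=a_{j_1^\mu j_1}$, so the constraint $y_0-y_{j_1}\ge -a_{j_1^\mu j_1}$ is tight and $(j_1,0)\in E^{y^{\tilde\theta}}$. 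In either case, $j_1$ is adjacent to $0$ in the base-graph.

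For the inductive step, I would translate each of the two branches in Definition~\ref{def:maxmin} into an arc of $E^{y^{\tilde\theta}}$. In the minimization branch, the maximum defining $y^{\tilde\theta}_{j_r}$ is attained either at the constant $0$, in which case the constraint $y_{j_r}\ge 0$ is tight and $(0,j_r)\in E^{y^{\tilde\theta}}$, or at some $j^*\in P^{\theta}_{j_r}$ with $j^{*\mu}\ne j_r^\mu$, yielding $y_{j_r}-y_{j^*}=a_{j^{*\mu}j_r}-a_{j^{*\mu}j^*}$, i.e.\ $(j^*,j_r)\in E^{y^{\tilde\theta}}$. In the maximization branch, symmetrically, either $y_{j_r}=a_{j_r^\mu j_r}$ giving the arc $(j_r,0)$, or there is a $j^*\in P^{\theta}_{j_r}$ with $j^{*\mu}\ne j_r^\mu$ satisfying $y_{j^*}-y_{j_r}=a_{j_r^\mu j^*}-a_{j_r^\mu j_r}$, giving the arc $(j_r,j^*)$. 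In each case, $j_r$ is adjacent in the base-graph to some node of $\{0\}\cup P^{\theta}_{j_r}$, which by the induction hypothesis already lies in the component of $0$.

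Consequently, every $j_r$ is connected to $0$ in the base-graph of $(W_0,E^{y^{\tilde\theta}})$, so the base-graph is connected, and Theorem~\ref{thm:ext_core_char}(A) yields $y^{\tilde\theta}\in Ext(C(W))$. The only subtlety I anticipate is bookkeeping the correspondence between a tight inequality and the direction of the arc it produces in Definition~\ref{def:tight_digraph}; once this is set up carefully, the inductive argument is routine and does not need the competitiveness of $y^{\tilde\theta}$ beyond the hypothesis of the proposition (which is used only to guarantee that the constraints we are making tight are meaningful feasible constraints).
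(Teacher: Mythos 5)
Your proof is correct, but it takes a different route from the paper's. The paper argues directly: at each step of the max-min procedure one core constraint becomes tight, these $n$ tight equations are linearly independent because each one involves a worker whose salary does not appear in any previous equation, and $n$ linearly independent active constraints on the $n$-dimensional polytope $C(W)$ force $y^{\tilde\theta}$ to be a vertex. You instead translate each tight constraint into an arc of the tight digraph of Definition~\ref{def:tight_digraph} and invoke Theorem~\ref{thm:ext_core_char}(A), showing by induction that every $j_r$ is joined in the base-graph to a node of $\{0\}\cup P^{\theta}_{j_r}$, so the base-graph contains a spanning tree rooted at $0$. The underlying combinatorial fact is the same in both proofs (the triangular structure of the tight constraints along the order $\theta$), and indeed the ``if'' direction of Theorem~\ref{thm:ext_core_char}(A) is itself proved by summing tight equations along paths from $0$, so your argument is in effect the paper's linear-independence argument filtered through the graph characterization. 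What your version buys is a more structural picture --- it exhibits explicitly the $0$-rooted spanning tree inside $E^{y^{\tilde\theta}}$, which is precisely the object that the converse result (Theorem~\ref{thm:extremes are maxmin}) reconstructs --- at the cost of depending on Theorem~\ref{thm:ext_core_char}, whereas the paper's proof is self-contained. Your bookkeeping of arc directions is accurate (minimizer steps produce arcs into $j_r$, maximizer steps produce arcs out of $j_r$), and you correctly note that the hypothesis $y^{\tilde\theta}\in C(W)$ is needed only so that the tight digraph is well defined and Theorem~\ref{thm:ext_core_char}(A) applies.
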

\begin{proof}
Let $y^{\tilde{\theta}}\in C(W)$. By definition of the max-min vectors, at each step of the procedure one core constraint is tight at $y^{\tilde{\theta}}$. Moreover, these equations are linearly independent since each of them involves a new worker whose salary does not take part in the previous equations. Since the membership in $C(W)$ is guaranteed by the assumption, the fact that $n$ linearly independent constraints are tight at $y^{\tilde{\theta}}$ implies that this is an extreme point of $C(W)$.
\end{proof}

Now the question is whether all extreme points of $C(W)$ in a many-to-one assignment market are of this type, that is, all are max-min salary vectors related to some extended order on the set of workers. Let us consider again the market of Example \ref{example:core_workers}.

\begin{example}\label{first_ex_revisited}
Consider again the many-to-one assignment market $\gamma=(F,W,A,r)$ with set of firms $F=\{f_{1},f_{2}\}$ with capacities $r=(2,1)$, set of workers $W=\{w_{1},w_{2},w_{3}\}$ with unitary capacity and pairwise valuation matrix 
$$A =\bordermatrix{~ & w_{1} & w_{2} & w_{3} \cr
                  f_{1} & 8 & 6 & 3\cr
                  f_{2} & 7 & 6 & 4}.$$
We can obtain the extreme core points from the picture of the salary-core in Figure \ref{fig:core} and then check that all core vertices are supported by max-min salary vectors. 
Another approach is to compute for each of the $3!\cdot 2^3=48$ extended orders the associated max-min salary vectors and check their core membership. The result of this tedious but computationally straightforward excercise is given in Appendix \ref{ex:all_extended_payoffs}. It shows that all 9 extreme core vectors in this market are supported by max-min salary vectors, they are obtained from 28 extended orders, while the remaining 20 extended orders determine max-min salary vectors outside the core.

In the following table we indicate for each core vertex all extended orders such that the related max-min salary vector supports  that vertex.
\renewcommand{\arraystretch}{1.2}
$$
\begin{array}{|c@{\;}c|c@{\;}c@{\;}c||l}
x_1 & x_2 & y_1 & y_2 & y_3 & \textrm{extended order } \\
\hline 
\hline 
0 & 0 & 8 & 6 & 4 & (\overline{1}, \overline{2}, \overline{3})  \textrm{ in any permutation, } (\overline{1},  \overline{3}, \underline{2}), (\overline{3},  \overline{1}, \underline{2}), (\overline{3}, \underline{2}, \overline{1})   \\
0 & 1 & 8 & 6 & 3 & (\overline{1}, \overline{2}, \underline{3}),(\overline{2}, \overline{1}, \underline{3}), (\overline{1}, \underline{3}, \overline{2}), (\overline{2}, \underline{3}, \overline{1})     \\
1 & 1 & 8 & 5 & 3 & (\overline{1}, \underline{3}, \underline{2})   \\
\hline 
1 & 0 & 7 & 6 & 4 & (\overline{2}, \overline{3}, \underline{1}), (\overline{3}, \overline{2}, \underline{1}), (\overline{3}, \underline{1}, \overline{2}), (\overline{3}, \underline{1}, \underline{2}), (\overline{3}, \underline{2}, \underline{1})  \\
2 & 1 & 6 & 6 & 3 & (\overline{2}, \underline{3}, \underline{1}) \\
\hline 
\hline 
9 & 4 & 3 & 2 & 0 & (\underline{3}, \underline{2}, \underline{1}),   (\underline{3}, \underline{1}, \underline{2})\\ 
7 & 4 & 5 & 2 & 0 & (\underline{3}, \underline{2}, \overline{1}), (\underline{3}, \overline{1}, \underline{2})  \\
6 & 4 & 5 & 3 & 0 & (\underline{3}, \overline{2}, \overline{1}), (\underline{3}, \overline{1}, \overline{2})    \\
8 & 4 & 3 & 3 & 0 & (\underline{3}, \overline{2}, \underline{1}), (\underline{3}, \underline{1}, \overline{2})   \\
\hline 
\hline 
\end{array}
$$
\renewcommand{\arraystretch}{1}
Characterization (A) in Theorem \ref{thm:ext_core_char} offers explanations not just for the various multiplicity a given extreme core vector appears as max-min salary vector, but, more importantly, why the full enumeration of min-max salary vectors will always provide all extreme core vectors (see Theorem \ref{thm:extremes are maxmin} below). 

Take for instance extreme competitive salary vector $(7,6,4)$ and consider its tight graph drawn in Example \ref{ex:extended_digraph}. It has 4 arcs on 4 nodes, so its connected base-graph admits multiple spanning trees. We see that any of these spanning trees contains arc $(3,1)$ and two of the other three arcs (which form a cycle). 
For instance, the spanning tree with arcs $(2,0)$, $(3,2)$, $(3,1)$ allows only the order $(2,3,1)$ and make arcs $(2,0)$ and $(3,2)$ backward arcs and arc $(3,1)$ a forward arc. If, starting from node 0, we reach a node with a backward (resp. forward) arc, we set to maximize (resp. minimize) the payoff for that worker. Thus, in this case we get the extended order $(\overline{2}, \overline{3}, \underline{1})$. The related max-min salary vector  is computed as follows:
$$\begin{array}{rcl}
y_2&=&a_{12}=6,\\
y_3&=&\min\{y_2-a_{22}+a_{23}, a_{23}\}=\min\{4,4\}=4,\\
y_1&=&\max\{y_3-a_{23}+a_{21}, 0\}=\max\{7, 0\}=7.
\end{array}$$
On the other hand, the spanning tree with backward arc $(3,0)$ and forward arcs $(3,2)$, $(3,1)$ allows two extended orders compatible with the partial order induced by this 0-rooted spanning tree, namely $(\overline{3}, \underline{2}, \underline{1})$ and $(\overline{3}, \underline{1}, \underline{2})$. 

This example also shows that not all max-min salary vectors belong to the core, and hence they may not lead to an extreme core allocation. Take for instance the extended order $\tilde{\theta}=(\underline{1}, \overline{2}, \underline{3})$. Then, 
$$\begin{array}{rcl}
y_1&=&0,\\
y_2&=&a_{12}=6,\\
y_3&=&\max\{y_1-a_{11}+a_{13}, y_2-a_{12}+a_{13}, 0\}=\max\{-5, 3, 0\}=3.
\end{array}$$
The related max-min salary vector is $y^{\tilde{\theta}}=(0, 6, 3)$ and it does not lead to a core payoff since the constraint $y_3-y_1\le a_{23}-a_{21}$, which were ignored when $y_3$ were minimized, is not satisfied.
\end{example}

Next theorem shows that although a max-min salary vector may not be an extreme core allocation, the converse inclusion always holds. As the example above illustrates, every extreme core point is supported by a max-min salary vector related to one extended order, or maybe to several of them.

\begin{theorem}\label{thm:extremes are maxmin}
Let $\gamma=(F,W,A,r)$ be a capacity-balanced many-to-one assignment market and $\mu$ an optimal matching. Then, 
$$Ext(C(W))\subseteq \{y^{\tilde{\theta}}\}_{\tilde{\theta}\in \widetilde{\Sigma}_W}.$$
\end{theorem}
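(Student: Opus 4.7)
The plan is to bootstrap Theorem~\ref{thm:ext_core_char}(A): given an extreme vector $y\in Ext(C(W))$, the base-graph of the tight digraph $(W_0,E^y)$ is connected, so I will fix any spanning tree $T$ of this base-graph rooted at node $0$, and read off a suitable extended order $\tilde\theta$ directly from $T$. The intuition is that each non-root node $j\in W$ of $T$ is connected to its parent $p(j)$ by exactly one arc in $E^y$; that arc encodes which of the two core inequalities between $j$ and $p(j)$ is tight at $y$, and hence whether $j$ should be declared a minimizer or a maximizer in the extended order.

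Concretely, choose an order $\theta=(j_1,\ldots,j_n)$ on $W$ that is a BFS or DFS traversal of $T$ starting from node $0$, so that for every $r\ge 1$ the parent $p(j_r)$ of $j_r$ in $T$ lies in $\{0,j_1,\ldots,j_{r-1}\}$. For each $r$ define
$$
\tilde\theta(r)=\begin{cases}\underline{\theta}(r)&\text{if the tree-arc between $j_r$ and $p(j_r)$ is $(p(j_r),j_r)\in E^y$,}\\ \overline{\theta}(r)&\text{if the tree-arc between $j_r$ and $p(j_r)$ is $(j_r,p(j_r))\in E^y$.}\end{cases}
$$
By Definition~\ref{def:tight_digraph}, a forward tree-arc $(p(j_r),j_r)$ means that the lower-bound constraint that witnesses $y_{j_r}\ge 0$ (if $p(j_r)=0$) or $y_{j_r}\ge y_{p(j_r)}+a_{p(j_r)^\mu j_r}-a_{p(j_r)^\mu p(j_r)}$ (if $p(j_r)\in W$) is tight at $y$; symmetrically, a backward tree-arc $(j_r,p(j_r))$ means that the upper-bound constraint $y_{j_r}\le a_{j_r^\mu j_r}$ or $y_{j_r}\le y_{p(j_r)}-a_{j_r^\mu p(j_r)}+a_{j_r^\mu j_r}$ is tight.

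It remains to show, by induction on $r$, that $y^{\tilde\theta}_{j_r}=y_{j_r}$ for all $r$. For $r=1$, the parent of $j_1$ in $T$ must be the root $0$, so the tree-arc gives directly $y_{j_1}=0$ or $y_{j_1}=a_{j_1^\mu j_1}$, matching the two cases of the first step in Definition~\ref{def:maxmin}. For the inductive step, assume $y^{\tilde\theta}_{j_l}=y_{j_l}$ for all $l<r$. Since $p(j_r)\in\{0,j_1,\ldots,j_{r-1}\}$, the parent-term appearing in the max (resp.\ min) expression for $y^{\tilde\theta}_{j_r}$ evaluates to exactly $y_{j_r}$ by the tightness at $y$. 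Every other term in that expression is a valid lower bound (resp.\ upper bound) for $y_{j_r}$ because $y\in C(W)$, so it is $\le y_{j_r}$ (resp.\ $\ge y_{j_r}$). Hence the maximum (resp.\ minimum) in Definition~\ref{def:maxmin} is attained and equals $y_{j_r}$, completing the induction and giving $y^{\tilde\theta}=y$.

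The main obstacle I expect is of a bookkeeping nature: I must check that the parent-term is always among the terms considered in Definition~\ref{def:maxmin}, that is, that $p(j_r)^\mu\neq j_r^\mu$ whenever $p(j_r)\in W$. This is automatic from the very definition of $E^y$, since arcs between two worker-nodes $j,k\in W$ are defined only for $k\in W\setminus\mu(j^\mu)$; two workers assigned to the same firm can never be connected by a tight-digraph arc, so they can never be parent and child in $T$. With this observation the induction goes through, and the theorem follows.
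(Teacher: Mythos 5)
Your proposal is correct and follows essentially the same route as the paper's proof: both invoke Theorem~\ref{thm:ext_core_char}(A) to get connectedness of the base-graph of the tight digraph, order the workers along a $0$-rooted spanning tree so that each worker is preceded by a neighbour already placed, declare each worker a minimizer or maximizer according to the direction of the connecting tight arc, and then argue inductively that the tight constraint pins the max (resp.\ min) in Definition~\ref{def:maxmin} to the value $y_{j_r}$ while all other terms are dominated because $y\in C(W)$. The only cosmetic difference is that you fix the spanning tree up front and traverse it by BFS/DFS, whereas the paper grows the tree greedily during the induction; your explicit check that tree-arcs never join two workers matched to the same firm is a correct and worthwhile detail that the paper leaves implicit.
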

\begin{proof}
Let $y\in Ext(C(W))$ and consider the related tight digraph $(W_0,E^y)$. From Theorem \ref{thm:ext_core_char} (A), the base-graph is connected, hence there exists $j_1\in W$ such that at least one of $(0,j_1)\in E^y$, meaning $y_{j_1}=0$, or $(j_1,0)\in E^y$, meaning $y_{j_1}=a_{j_1^\mu j_1}$, holds. If both relations hold, we pick one of them. In the first case define $\tilde{\theta}(1)=\underline{\theta}(1)=j_1$ and in the second case $\tilde{\theta}(1)=\overline{\theta}(1)=j_1$. Notice that in both cases $y^{\tilde{\theta}}_{j_1}=y_{j_1}$.

For $1<r\le n-1$, assume by induction hypothesis that there exists $\tilde{\theta}\in \widetilde{\Sigma}_W$ with $\tilde{\theta}(k)=j_k$ for all $1\le k\le r$ such that $y^{\tilde{\theta}}_{\theta(k)}=y^{\tilde{\theta}}_{j_k}=y_{j_k}$, and show this also holds for $r+1$.
\smallskip

\noindent 
{\em Case 1}: There exists some $j\in W\setminus\{j_1,j_2,\ldots, j_r\}$ and some $j_k\in\{j_1,j_2,\ldots, j_r\}$ such that $(j_k, j)\in E^{y}$.

In this case, $y_j-y_{j_k}=a_{j_k^\mu j} - a_{j_k^\mu j_k}$, which implies $y_j=y_{j_k}+a_{j_k^\mu j}-a_{j_k^\mu j_k}$. Then, set $\tilde{\theta}(r+1)=\underline{\theta}(r+1)=j$, that is, $j_{r+1}=j$, and notice that, since $y$ is a vector of competitive salaries, the inequalities $y_j\ge 0$ and $y_j\ge y_{j_h}+a_{j_h^\mu j}-a_{j_h^\mu j_h}$ hold for all $j_h\in\{j_1,\ldots,j_r\}$ with $j_h^\mu\ne j^\mu$. This guarantees that $y_{j_{r+1}}=y^{\tilde{\theta}}_{j_{r+1}}$.
\medskip

\noindent 
{\em Case 2}: There exists some $j\in W\setminus\{j_1,j_2,\ldots, j_r\}$ and some $j_k\in\{j_1,j_2,\ldots, j_r\}$ such that $(j, j_k)\in E^{y}$.

In this case, $y_j-y_{j_k}=a_{j^\mu j}-a_{j^\mu j_k}$, which implies $y_j=y_{j_k}+a_{j^\mu j}-a_{j^\mu j_k}$. Then, set $\tilde{\theta}(r+1)=\overline{\theta}(r+1)=j$, that is, $j_{r+1}=j$, and notice that, since $y$ is a vector of competitive salaries , the inequalities $y_j\le a_{j^\mu j}$ and $y_j\le y_{j_h}+a_{j^\mu j}-a_{j^\mu j_h}$, for all $j_h\in\{j_1,\ldots,j_r\}$ with $j_h^\mu\ne j^\mu$, hold. This shows that $y_{j_{r+1}}=y^{\tilde{\theta}}_{j_{r+1}}$.
\medskip

By connectedness of the base-graph at least one of the above two cases holds, if both hold, we pick one of them, and continue building the spanning tree till all nodes in $W$ are reached. An extended order is constructed such that the associated min-max vector coincides with the extreme core vector $y$, and our inductive proof ends.
\end{proof}





A consequence of the above theorem is that each extreme core point of a many-to-one assignment game is the result of a (computationally very simple) lexicographic optimization procedure carried out by the workers over the core.  This somehow resembles the one-to-one assignment game, where each extreme core point can be obtained from a lexicographic maximization or also from a lexicographic minimization over the core. But, in both cases, all agents, firms and workers, take part in the optimization procedure.

In particular, given a market $(F,W,A,r)$,  if we take any order $\theta$ on $W$ and consider the extended order $\tilde{\theta}=(\underline{\theta}(1),\underline{\theta}(2),\ldots,\underline{\theta}(n))$,  the related max-min salary vector $y^{\tilde{\theta}}$ satisfies $y^{\tilde{\theta}}_{\theta(k)}\le y_{\theta(k)}$ for all $(x,y)\in C(v_{\gamma})$.  This is because each worker's payoff at $y^{\tilde{\theta}}$ tightly satisfies some lower core bound,  given the payoff of his/her predecessors.  As a consequence,  whenever $y^{\tilde{\theta}}$ belongs to the core,  it is the worst core allocation for workers,  and hence the vector of minimum competitive salaries, that supports the firm-optimal core allocation. Similarly, the worker-optimal core allocation follows from some $y^{\tilde{\theta}}$ where $\tilde{\theta}=(\overline{\theta}(1),\overline{\theta}(2),\ldots,\overline{\theta}(n))$. In the next section we discuss a special class of many-to-one assignment markets, where the max-min vector for both the all-min and the all-max extended orders always belong to the core, and can be directly obtained from the matrix without any computation.   



\section{Dominant diagonal markets}
\label{dominant_diagonal_markets}
A natural question arises after Corollary~\ref{corollary:FOCAWOCA}: 
for which many-to-one markets all workers attain a zero competitive salary and for which markets all workers $j$  attain $a_{j^\mu j}$ as a competitive salary? To this end we generalize the known condition for one-to-one assignment games due to \cite{sr01}.

\begin{definition}\label{def:dd}
A capacity-balanced many-to-one assignment market $(F,W,A,r)$ has a dominant diagonal if and only if there exists an optimal matching $\mu$ such that
\begin{enumerate}
\item $\sum_{j\in\mu(i)}a_{ij}\ge \sum_{j\in T}a_{ij}$ for all $T\subseteq W$ with $|T|\le r_i$, and 
\item $a_{j^\mu j}\ge a_{ij}$ for all $j\in W$ and $i\in F$.
\end{enumerate}
\end{definition}

Notice that when each worker $j\in W$ attains $a_{j^\mu j}$ as a competitive salary, then all firms have a zero payoff. The above property characterizes those many-to-one markets where each agent attains a zero payoff in the core.

\begin{proposition}
A capacity-balanced many-to-one assignment market $(F,W,A,r)$ has a dominant diagonal if and only if every agent attains a zero payoff in the core. 
\end{proposition}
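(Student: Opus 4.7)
The plan is to verify both directions using the reduced description of $C(W)$ in Proposition~\ref{core:description_workers}. Fix an optimal matching $\mu$. The key observation is that the salary vectors $y \equiv 0$ and $\hat y$ (defined by $\hat y_j := a_{j^\mu j}$) are exactly the candidates for ``every agent attaining $0$ payoff'': at $y \equiv 0$ each worker gets salary $0$, and at $\hat y$ each firm gets $x_i = \sum_{j \in \mu(i)}(a_{ij} - a_{ij}) = 0$ by Proposition~\ref{core:description_workers}(iii).

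For necessity ($\Rightarrow$), I will show that under the dominant diagonal property both $0 \in C(W)$ and $\hat y \in C(W)$. Checking Proposition~\ref{core:description_workers}(ii) at $y = 0$ reduces to $a_{j^\mu j} \geq a_{j^\mu k}$ whenever $j^\mu \neq k^\mu$, which follows from condition~1 of Definition~\ref{def:dd} applied to firm $i = j^\mu$ with swap set $T = (\mu(j^\mu) \setminus \{j\}) \cup \{k\}$ after cancellation. Checking (ii) at $\hat y$ reduces to $a_{k^\mu k} \geq a_{j^\mu k}$, which is precisely condition~2 of Definition~\ref{def:dd}. Condition~(i) is trivial in both cases.

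For sufficiency ($\Leftarrow$), assume each agent attains $0$ in some core allocation. By the lattice structure of $C(W)$ (from \citep{so02}), the firm-optimal salary vector $\underline{y}$ realises the coordinate-wise minimum over $C(W)$; since each worker can attain salary $0$, and $y_j \geq 0$ in any case, $\underline{y} \equiv 0$, so $0 \in C(W)$. Symmetrically, for each firm $i$ pick a core witness $(x^i, y^i)$ with $x^i_i = 0$; Proposition~\ref{core:description_workers}(iii) gives $\sum_{j \in \mu(i)} y^i_j = \sum_{j \in \mu(i)} a_{ij}$, and the term-wise bound $y^i_j \leq a_{ij}$ for $j \in \mu(i)$ forces equality $y^i_j = a_{j^\mu j}$ for each such $j$. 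Ranging over $i \in F$ saturates the upper bound at every worker, so the worker-optimal salary vector equals $\hat y$, giving $\hat y \in C(W)$.

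The two conditions of Definition~\ref{def:dd} now read off directly. Condition~2 is constraint~(ii) of Proposition~\ref{core:description_workers} at $\hat y$, together with the trivial case $i = k^\mu$. Condition~1 follows from the core allocation $(x, 0)$ with $x_i = \sum_{j \in \mu(i)} a_{ij}$ satisfying the single-firm core inequality $x_i \geq v_\gamma(i, T) = \sum_{j \in T} a_{ij}$ for every $T \subseteq W$ with $|T| \leq r_i$ (where $v_\gamma(i, T) = \sum_{j \in T} a_{ij}$ by non-negativity of $A$), which rearranges to condition~1. The main subtlety lies in the sufficiency direction, where the lattice structure of $C(W)$ is essential for bundling the per-agent zero-payoff witnesses into the single extremal competitive salary vectors $0$ and $\hat y$.
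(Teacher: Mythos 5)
Your proof is correct and follows essentially the same route as the paper: identify the two candidate allocations $(x^f,0)$ and $(0,\hat y)$ and match their core membership, via the reduced description of $C(W)$ (Proposition~\ref{core:description_workers}), against the two conditions of Definition~\ref{def:dd}. The only difference is that you make explicit the bundling step the paper leaves implicit --- using the lattice structure of $C(W)$ to pass from per-agent zero-payoff witnesses to the single side-optimal allocations --- which is a welcome clarification rather than a different argument.
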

\begin{proof}
Take an optimal matching $\mu$. If there is a core allocation where all workers $j\in W$ get salaries $y_j=0$, then, each firm $i\in F$ gets $x_i=\sum_{j\in\mu(i)}a_{ij}$. Now, such a payoff vector belongs to the core if and only if the core constraints for essential coalitions are satisfied, which means that for all $i\in F$, $\sum_{j\in\mu(i)}a_{ij}+0\ge \sum_{j\in T}a_{ij}$ for all $T\subseteq W$ with $|T|\le r_i$.

Similarly, when all firms get zero in a core allocation, then each worker gets the salary $y_j=a_{j^\mu j}$. This allocation belongs to the core if and only if for all $i\in F$,  $0+\sum_{j\in T}a_{j^\mu j}\ge \sum_{j\in T}a_{ij}$, for all $T\subseteq W$ with $|T|\le r_i$. This is equivalent to $a_{j^\mu j}\ge a_{ij}$ for all $j\in W$ and all $i\in F$.
\end{proof}

Notice that the market of Example \ref{example:core_workers} satisfies condition (2) of Definition \ref{def:dd}, but it does not satisfy condition (1), since for instance $\sum_{j\in\mu(2)}a_{ij}=a_{23}+a_{20}<a_{21}+a_{22}$. This is the reason why we observe in Figure \ref{fig:core} there is no core element where all workers receive zero salary.

It is important to remark that, since the above proposition shows that the dominant diagonal property is equivalent to a property of the core, which does not depend on the optimal matching we select for its representation, the diagonal dominant property is also independent of the optimal matching.

Let us also point out, without entering into formal details, that, as in the one-to-one assignment game, the dominant diagonal property is a necessary condition for the (von Neumann-Morgenstern) stability of the core. Indeed, if the point $(x,y)$, where $x_i=0$ for all $i\in F$ and $y_j=a_{j^\mu j}$ for all $j\in W$, is not in the core, then it cannot be dominated by any core allocation, which implies the core is not a stable set. The same happens if the point $(x,y)$, with $x_i=\sum_{j\in\mu(i)}a_{ij}$ for all $i\in F$ and $y_j=0$ for all $j\in W$, does not belong to the core.

On the other hand, in contrast to the one-to-one assignment game, the dominant diagonal property is not sufficient for stability of the core in the many-to-one assignment game.
Consider the following example. 
\begin{example}
\label{ex:dd_core_not_stable}
Let the market $\gamma=(F,W,B,r)$ be given by the pairwise valuation matrix
$$B =\bordermatrix{~ & w_{1} & w_{2} & w_{3} \cr
                  f_{1} & 6 & 4 & 1\cr
                  f_{2} & 5 & 4 & 5},$$
and firm-capacity vector $r=(2,1)$. In this capacity-balanced market, under the unique optimal matching $\mu$, workers $w_{1}$ and $w_{2}$ are matched with firm $f_{1}$, and $w_{3}$ with $f_{2}$. 
The market is clearly dominant diagonal, thus the firm-optimal core allocation is $(10,5;0,0,0)$ and the worker-optimal core allocation is $(0,0;6,4,5)$.

Take imputation $(x;y)=(0,5;6,4,0)$. It satisfies the  efficiency conditions $x_1+y_1+y_2=10$ and $x_2+y_3=5$, but it is not in the core because $x_1+y_3=0<1=v_{\gamma}(f_1,w_3)$.
We claim that no core allocation dominates this imputation.
Recall the general facts that (i) if an imputation dominates another imputation, then the first dominates the second via an essential coalition; and (ii) if a core allocation dominates an imputation, then it should be on the boundary of the core. 

Notice that imputation $(x;y)=(0,5;6,4,0)$ can only be dominated via $\{f_1,w_3\}$, $\{f_1,w_1,w_3\}$, or $\{f_1,w_2,w_3\}$ among the essential coalitions. However, a core allocation could dominate $(x;y)$ only via $\{f_1,w_3\}$, because the payoff of both $w_1$ and $w_2$ are at their core maximums $y_1=6$ and $y_2=4$.   
Suppose core allocation $(x';y')$ dominates imputation $(x;y)$ via $\{f_1,w_3\}$. Then it should be of the form 
$(x'_1=\varepsilon, x'_2=4+\varepsilon ;y'_1=6-\varepsilon_1,y'_2=4-\varepsilon_2,y'_3=1-\varepsilon)$ where $0<\varepsilon<1$, $0\leq\varepsilon_1,\varepsilon_2$  and $\varepsilon_1+\varepsilon_2=\varepsilon$.
Consider the core inequalities related to $\{f_1,w_1,w_3\}$ and $\{f_1,w_2,w_3\}$,

\qquad
$(x'_1=\varepsilon)+(y'_1=6-\varepsilon_1)+(y'_3=1-\varepsilon)=7-\varepsilon_1 \geq 7,$

\qquad
$(x'_1=\varepsilon)+(y'_2=4-\varepsilon_2)+(y'_3=1-\varepsilon)=5-\varepsilon_2 \geq 5.$
\\
Obviously, at least one of them is violated, because at least one of $\varepsilon_1>0$ or $\varepsilon_2>0$ holds.
We conclude that imputation $(x;y)=(0,5;6,4,0)$ can not be dominated by a core allocation in this dominant diagonal many-to-one assignment game.
\end{example}

Therefore, a characterization of core stability in terms of matrix properties requires further research.
Convexity of the game is a well-known sufficient condition.
The reader can easily show that this game property has the following matrix characterization.
\begin{proposition}
A many-to-one assignment game is convex if and only if no row or column of the underlying matrix contains more positive entries than the capacity of the corresponding agent. 
\end{proposition}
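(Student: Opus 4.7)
The plan is to use the standard equivalence between convexity and supermodularity: $(N,v)$ is convex if and only if $v(S \cup T) + v(S \cap T) \geq v(S) + v(T)$ for all $S, T \subseteq N$. I verify the two directions separately.

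For the only-if direction, I would argue contrapositively: assuming the matrix condition fails, explicit coalitions violating supermodularity can be constructed. First, suppose some column $j$ has two positive entries $a_{ij}, a_{kj} > 0$ with $i \neq k$. Taking $S = \{f_i, w_j\}$ and $T = \{f_k, w_j\}$, we have $v(S) = a_{ij}$, $v(T) = a_{kj}$, $v(S \cap T) = v(\{w_j\}) = 0$, and $v(S \cup T) = \max\{a_{ij}, a_{kj}\}$, since $w_j$ can be assigned to at most one firm. Supermodularity would force $\max\{a_{ij}, a_{kj}\} \geq a_{ij} + a_{kj}$, impossible when both entries are positive. Second, suppose some row $i$ has at least $r_i + 1$ positive entries; label them $a_{ij_1}, \ldots, a_{ij_{r_i+1}}$ so that $a_{ij_{r_i+1}} = \min_k a_{ij_k}$, and take $S = \{f_i, w_{j_1}, \ldots, w_{j_{r_i}}\}$ and $T = \{f_i, w_{j_2}, \ldots, w_{j_{r_i+1}}\}$. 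A direct computation, which uses the fact that the firm capacity $r_i$ binds only in $v(S \cup T)$ (where the smallest of the $r_i + 1$ entries must be dropped), yields
\[
v(S) + v(T) - v(S \cup T) - v(S \cap T) \;=\; \min_{1 \leq k \leq r_i + 1} a_{ij_k} \;>\; 0,
\]
contradicting convexity.

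For the if-direction, assume no row $i$ contains more than $r_i$ positive entries and no column contains more than one positive entry. The key structural observation is that the coalitional function simplifies: since each column has at most one positive entry, each worker $j$ has at most one firm $i_j$ for which $a_{i_j j} > 0$; since each row $i$ has at most $r_i$ positive entries, no firm-capacity constraint is ever binding in any submarket. Consequently, for every $S \subseteq N$,
\[
v(S) \;=\; \sum_{\substack{(i,j):\, a_{ij} > 0 \\ i \in S,\; j \in S}} a_{ij}.
\]
Supermodularity now reduces to a pointwise check: for any fixed pair $(i,j)$ with $a_{ij} > 0$, the set function $S \mapsto \mathbf{1}_{i \in S} \cdot \mathbf{1}_{j \in S}$ is supermodular, which is an immediate four-case verification according to whether $i, j$ lie in $S \setminus T$, $T \setminus S$, or $S \cap T$. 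Multiplying by the non-negative weights $a_{ij}$ and summing over all positive entries preserves the inequality, yielding convexity of $v$.

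I expect the main (and relatively minor) obstacle to be the reduction in the if-direction, namely writing $v(S)$ in the explicit additive form above and carefully justifying that the firm and worker capacity constraints are never binding under the hypothesis. Once this reduction is in place, both directions collapse to routine arithmetic.
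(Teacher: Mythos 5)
Your proof is correct; note that the paper itself gives no argument for this proposition (it is explicitly left to the reader), so there is nothing to compare against, but your supermodularity check is exactly the intended ``easy'' verification. Both directions are sound: the two violating coalition pairs in the only-if direction compute correctly (the surplus deficit being $\min_k a_{ij_k}$ in the row case), and in the if direction the additive formula $v(S)=\sum_{a_{ij}>0,\,i,j\in S}a_{ij}$ is justified because the column condition makes the greedy matching optimal and the row condition makes it feasible, reducing $v$ to a non-negative combination of two-player unanimity games, which are convex.
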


In contrast to the general case, in dominant diagonal many-to-one assignment games the tau-value is a core element, in fact, as in one-to-one assignment games, it is the midpoint of the line segment joining the two side-optimal core allocations.
\begin{proposition}
Let $(F,W,A,r)$ be a capacity-balanced many-to-one assignment market with dominant diagonal optimal matching $\mu$. 
Then
\begin{enumerate}
\item the payoff vector $(x^{f};y^{f})$ given by $x_i^{f}=\sum_{j\in\mu(i)}a_{ij}$ for all $i\in F$, $y_j^{f}=0$ for all $j\in W$, is a core allocation, (the \emph{firm-optimal core allocation}, where all firms get their marginal payoff to the grand coalition); 
\item the payoff vector $(x^{w};y^{w})$ given by $x_i^{w}=0$ for all $i\in F$, $y_j^{w}=a_{j^\mu j}$ for all $j\in W$, is a core allocation, (the \emph{worker-optimal core allocation}, where all workers get their marginal payoff to the grand coalition); 
\item the tau-value is a core allocation, the average of the two side-optimal core allocations, $\tau(v_{\gamma})=(x^{f}/2;y^{w}/2)\in\mathbf{C}(w_{\gamma})$.
\end{enumerate}
\end{proposition}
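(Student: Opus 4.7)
The plan is to handle parts (1)--(2) together using the characterization of dominant diagonal in the preceding proposition, and then build on them to compute the utopia vector and the minimum rights vector needed for the $\tau$-value.

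For the core memberships in (1) and (2), the proposition stated just above the theorem already asserts that both the all-workers-zero payoff $(x^f;y^f)$ and the all-firms-zero payoff $(x^w;y^w)$ belong to $\mathbf{C}(v_\gamma)$ whenever the market has a dominant diagonal, so no further work is needed there. To identify $x_i^f$ with the marginal contribution $v_\gamma(N) - v_\gamma(N\setminus\{i\})$ of firm $i$, I would combine the general core bound $x_i \leq v_\gamma(N) - v_\gamma(N\setminus\{i\})$ (from coalitional rationality at $N\setminus\{i\}$) with the lower bound obtained by noting that the restriction of $\mu$ to $F\setminus\{i\}$ is a feasible matching on $N\setminus\{i\}$ of value $v_\gamma(N) - x_i^f$, so $v_\gamma(N\setminus\{i\}) \geq v_\gamma(N) - x_i^f$. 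The analogous argument on $N\setminus\{j\}$ for a worker $j$, using the matching $\mu\setminus\{(j^\mu,j)\}$, gives $y_j^w = v_\gamma(N)-v_\gamma(N\setminus\{j\})$.

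For part (3), I would use the standard definition of the $\tau$-value as the unique efficient point on the segment from the minimum rights vector $m^{v_\gamma}$ to the utopia vector $M^{v_\gamma}$. By (1) and (2) the utopia vector is $M^{v_\gamma}=(x^f;y^w)$, and $\sum_p M^{v_\gamma}_p = v_\gamma(N)+v_\gamma(N)=2v_\gamma(N)$. The key computation is to show that $m^{v_\gamma}=0$. The bound $m^{v_\gamma}_p \geq 0$ follows by taking the singleton coalition $\{p\}$. For $m^{v_\gamma}_p \leq 0$, pick any coalition $S\ni p$ with optimal matching $\nu$ on $S$. If $p=i_0\in F$, I would bound $v_\gamma(S) = \sum_{i\in S\cap F}\sum_{j\in \nu(i)} a_{ij} \leq \sum_{j\in S\cap W} a_{j^\mu j}$ using the diagonal-dominance inequality $a_{ij}\leq a_{j^\mu j}$ from Definition~\ref{def:dd} and the fact that each worker is matched at most once in $\nu$; this is in turn dominated by $M^{v_\gamma}(S\setminus\{i_0\})$, since the latter adds the non-negative firm marginals $x_{i'}^f$ for $i'\in (S\cap F)\setminus\{i_0\}$. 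If $p=j_0\in W$, I would instead use condition~(1) of Definition~\ref{def:dd}, which gives $\sum_{j\in\nu(i)} a_{ij}\leq x_i^f$ for each firm $i\in S\cap F$, hence $v_\gamma(S)\leq \sum_{i\in S\cap F} x_i^f \leq M^{v_\gamma}(S\setminus\{j_0\})$.

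Once $m^{v_\gamma}=0$ and $\sum_p M^{v_\gamma}_p=2v_\gamma(N)$ are in hand, the efficiency requirement fixes the mixing coefficient at $1/2$, yielding $\tau(v_\gamma) = \tfrac12 M^{v_\gamma} = (x^f/2;y^w/2)$. This point is the midpoint of the two core allocations $(x^f;y^f)$ and $(x^w;y^w)$ from (1) and (2), and so it lies in $\mathbf{C}(v_\gamma)$ by convexity. The main obstacle is the verification that $m^{v_\gamma}=0$: the firm case and the worker case each exploit a different part of Definition~\ref{def:dd}, and one has to keep careful track of the capacities when $S$ contains several firms along with various workers, so that the two different bounds on $v_\gamma(S)$ can be pushed through.
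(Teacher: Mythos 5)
Your proposal is correct and follows essentially the same route as the paper: core membership of the two side-optimal vectors via the dominant-diagonal characterization, identification of the utopia vector with $(x^{f};y^{w})$, the observation that the minimum rights vector vanishes so the efficiency scalar is $1/2$, and the final appeal to convexity of the core. The only difference is one of detail: the paper dismisses the computation of the lower vector as ``easily seen,'' whereas you spell out the bound $v_\gamma(S)\leq M^{v_\gamma}(S\setminus\{p\})$ using the two parts of Definition~\ref{def:dd}, which is a worthwhile addition but not a different argument.
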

\begin{proof}
The first claim follows from the fact that if the optimal matching $\mu$ is dominant diagonal, then under $\mu$ any firm $i\in F$ is matched with her most profitable $r_i$ workers, that is $\min_{j\in\mu(i)} a_{ij} \geq \max_{k\in W\setminus\mu(i)} a_{ik}$. The second claim comes analogously, under $\mu$ any worker $j\in W$ is matched with his most profitable firm, that is $a_{j^\mu j}\geq \max_{i\in F} a_{ij}$.

It follows from the first two statements that the upper vector is $(x^{f};y^{w})$. It is easily seen that the lower  vector is $(x^{w}=0;y^{f}=0)$. Since $x^{f}(F)+y^{w}(W)=2v_{\gamma}(F\cup W)$, the efficiency scalar is $\kappa=1/2$.
Therefore, the tau-value is $\tau(w_A)=\frac{1}{2}(x^{f};y^{w})+\frac{1}{2}(x^{w};y^{f})=\frac{1}{2}(x^{f};y^{f})+\frac{1}{2}(x^{w};y^{w})=(x^{f}/2;y^{w}/2)$. By convexity of the core, $\tau(v_{\gamma})\in\mathbf{C}(v_{\gamma})$.
\end{proof}

Although in dominant diagonal many-to-one assignment games for all players their marginal contributions to the grand coalition are attained in the core, in fact simultaneously for both  sides, it is not true that all extreme core allocations are marginal vectors.
Consider Example~\ref{ex:dd_core_not_stable}, and take payoff vector $(5,4;\, 1,4,1)$, where no player receives 0 payoff. It is easily checked that it is an  extreme core allocation. Since the game is 0-normalized, for each order of the players the first player in the corresponding marginal vector gets 0 payoff. Thus, the given extreme core allocation cannot be a marginal vector.
It remains for future research to find out whether there is a larger subclass of many-to-one assignment games than the one-to-one assignment games where the COMA-property \citep{hetal02} holds.

\section{Kaneko's many-to-one buyer-seller market}\label{sec:kaneko}

The first many-to-one assignment game in the literature appears in \citep{k76}, as a market between buyers and sellers where each buyer demands only one unit while each seller may have several units on sale, even from different types. If we assume for simplicity that the goods owned by a seller are of the same type,  Kaneko's many-to-one assignment game is analogous to our job market assignment game from the perspective of the core and the theory of coalitional games. 

Let $B$ and $S$ be the finite and disjoint sets of buyers and sellers respectively, $A=(a_{ij})_{(i,j)\in B\times S}$ the pairwise valuation matrix and $r=(r_j)_{j\in S}$ the capacities of the sellers. Assume the market is capacity-balanced, that is $\sum_{j\in S}r_j=|B|$. By projecting the core of this game to the payoffs of the buyers (which is now the side with unitary capacity agents) analogously to Proposition \ref{core:description_workers}  we obtain that $(x,y)\in\mathbb{R}^{B}\times \mathbb{R}^{S}$ is in the core of the associated game $C(v_{\gamma})$, where $\gamma=(B,S,A,r)$, if and only if, for any optimal matching $\mu$, 
\begin{enumerate}[(i)]
\item $0\le x_{i}\leq a_{i\mu(i)}$ \, for any $i\in B$;
\item $x_k-x_i\ge a_{k\mu(i)}-a_{i\mu(i)}$ \, for any $i,k\in B$ such that $\mu(k)\ne \mu(i)$;
\item $y_{j}=\sum\limits_{i\in j^{\mu}}(a_{ij}-x_{i})$ for all $j\in S$.
\end{enumerate}

From this description of the core of Kaneko's assignment market, that we may call the buyers core $C(B)$, it follows the possibility of defining the  tight digraph associated with each core element. Now this graph at $x\in C(B)$ will have set of nodes $B$ and directed arcs related to those core inequalities  that are tight at $x$, in a way analogous to Definition \ref{def:tight_digraph}. As a consequence, we obtain a characterization of the extreme core allocations by means of the connectedness of its base-digraph, and characterizations of the buyers-optimal core element and the sellers-optimal core element parallel to those in Theorem \ref{thm:ext_core_char}: $x\in C(B)$ is the minimum core payoff vector for buyers if and only if its tight digraph contains a 0-sourced directed spanning tree, and it is the maximum core payoff vector for buyers if its tight digraph contains a 0-sinked directed spanning tree.

Also, a set of max-min payoff vectors $\{x^{\tilde{\theta}}\}_{\tilde{\theta}\in\tilde{\Sigma}}$ can be defined, one for each extended order on the set of buyers, and each extreme element of $C(B)$ is proved to be of this type, in a result parallel to Theorem \ref{thm:extremes are maxmin}.

However, regarding the set of competitive equilibrium payoff vectors, the two models clearly differ. \cite{k76} already shows by means of an example that although every competitive equilibrium payoff vector is in the core, not all core elements are supported by competitive prices. This is quite straightforward since in the above core description, two units from the same seller $j\in S$ that are sold to two different buyers $i,k\in B$ may have different price: $a_{ij}-x_i$ and $a_{kj}-x_k$. It is easy to see that the subset of core elements where the units of each seller are sold at the same price is the set of competitive equilibria payoff vectors.

\begin{proposition}\label{KanekoCE}
Let $\gamma=(B,S,A,r)$ be a capacity-balanced many-to-one assignment market where buyers have unitary capacity and $\mu$ an optimal matching. Then, $(x,y)\in\mathbb{R}^{B}\times \mathbb{R}^{S}$ is a competitive equilibrium payoff vector if and only if
\begin{enumerate}[(i)]
\item $0\le x_{i}\leq a_{i\mu(i)}$ \, for any $i\in B$;
\item $x_k-x_i\ge a_{k\mu(i)}-a_{i\mu(i)}$ \, for any $i,k\in B$;
\item $y_{j}=\sum\limits_{i\in j^{\mu}}(a_{ij}-x_{i})$ for all $j\in S$.
\end{enumerate}
\end{proposition}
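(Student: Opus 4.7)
The plan is to unpack the definition of a competitive equilibrium in Kaneko's many-to-one market, where each seller $j\in S$ posts a single per-unit price $p_j\geq 0$ for her identical units. Given an optimal matching $\mu$ and a price vector $p$, the induced payoffs are $x_i=a_{i\mu(i)}-p_{\mu(i)}$ for each buyer $i\in B$ (with $x_i=0$ if unmatched, which does not occur under the capacity-balanced convention of Section~\ref{sec:model}) and $y_j=|j^\mu|\,p_j=\sum_{i\in j^\mu}p_j$ for each seller. The core description just above the proposition already captures inequalities $x_i+y_j\ge v_\gamma(i,j)$ coming from buyers and sellers not matched to each other; the additional content of being a competitive equilibrium is precisely that the implicit ``per-unit prices'' $a_{i\mu(i)}-x_i$ must agree across all buyers served by the same seller. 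I claim this common-price requirement is what turns condition (ii) on pairs with $\mu(i)\neq\mu(k)$ into condition (ii) on \emph{all} pairs.

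For the ``only if'' direction, assume $(\mu,p)$ is a competitive equilibrium with payoffs $(x,y)$. Condition (i) follows from $x_i=a_{i\mu(i)}-p_{\mu(i)}$ combined with $p_{\mu(i)}\geq 0$ (giving $x_i\leq a_{i\mu(i)}$) and the buyer's outside option of not purchasing (giving $x_i\geq 0$). Condition (iii) is an immediate summation over $j^\mu$. For (ii), utility maximization gives $a_{i\mu(i)}-p_{\mu(i)}\geq a_{ij}-p_j$ for every $j\in S$ and every $i\in B$; taking $j=\mu(k)$ and substituting $p_{\mu(k)}=a_{k\mu(k)}-x_k$ yields $x_i-x_k\geq a_{i\mu(k)}-a_{k\mu(k)}$ for every $k\in B$ (including $\mu(k)=\mu(i)$), which after relabeling is exactly condition (ii) without the restriction $\mu(k)\neq\mu(i)$.

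For the ``if'' direction, given $(x,y)$ satisfying (i)--(iii), I will define prices by choosing, for each $j\in S$, any $i\in j^\mu$ and setting $p_j:=a_{ij}-x_i$. The critical step is well-definedness: for two buyers $i,k\in j^\mu$, applying (ii) both to the ordered pair $(i,k)$ and to $(k,i)$ forces $x_k-x_i=a_{kj}-a_{ij}$, i.e.\ $a_{ij}-x_i=a_{kj}-x_k$, so $p_j$ is independent of the choice of $i$. Condition (i) gives $p_j\geq 0$, and condition (iii) then reads $y_j=|j^\mu|\,p_j$, i.e.\ the total revenue of seller $j$. To verify buyer $i$'s demand condition, for any $j\in S$ pick $k\in j^\mu$; then (ii) applied to $(k,i)$ gives $x_i\geq x_k+a_{i\mu(k)}-a_{k\mu(k)}=a_{ij}-p_j$, while $x_i\geq 0$ ensures $\mu(i)$ is preferred to not buying. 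Market clearing is automatic from capacity-balancedness, since $\mu$ fills every seller's capacity and assigns every buyer.

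The main obstacle, as already hinted above, is precisely this well-definedness of seller prices, which is the single substantive use of condition (ii) on pairs $(i,k)$ with $\mu(i)=\mu(k)$. This explains both (a) why the set of competitive equilibrium payoff vectors is a subset of the core (every CE-payoff satisfies the core inequalities, the restriction $\mu(i)\neq\mu(k)$ in the core description being merely a weakening) and (b) why the inclusion may be strict: core allocations allow the implicit per-unit prices $a_{ij}-x_i$ for $i\in j^\mu$ to differ, while competitive equilibria require them to coincide.
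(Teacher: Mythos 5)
Your proof is correct and follows essentially the same route as the paper, which states the result without a formal proof, relying on the observation that competitive equilibria are exactly the core allocations in which all units of a given seller carry the same per-unit price, and that condition (ii) applied to two buyers $i,k$ with $\mu(i)=\mu(k)=j$ in both orders forces $a_{ij}-x_i=a_{kj}-x_k$. Your write-up simply makes this explicit: the well-definedness of $p_j$ is precisely the paper's equal-price remark, and the remaining verifications (demand maximization from (ii) across sellers, nonnegativity and the outside option from (i), revenue from (iii)) are the routine unpacking the paper leaves to the reader.
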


Notice that the difference with the core, and with the CE equilibrium payoffs of our initial many-to-one job market, lies in the fact that inequality (ii) is required for each pair of buyers, not just for those that are not optimally matched to the same seller. This implies that if $i,k\in B$ are such that $\mu(i)=\mu(k)=j$, then (ii) gives $x_k-x_i=a_{k\mu(k)}-a_{i\mu(i)}$  which means that both units are sold at the same price: $p_j=a_{k\mu(k)}-x_k=a_{i\mu(i)}-x_i$.

\begin{example}\label{ex:Kaneko}
Consider the market $\gamma=(B,S,A,r)$ where the set of buyers is $B=\{b_1,b_2,b_3\}$, the set of sellers is $S=\{s_1,s_2\}$, the capacities of the sellers are $r=(2,1)$ and the valuation matrix is
$$A =\bordermatrix{~ & s_{1} & s_{2} \cr
                  b_{1} & 8 & 7 \cr
                  b_{2} & 6 & 6 \cr
                  b_{3} & 3 & 4 }.$$
 There is only one optimal matching $\mu=\{(b_1,s_1), (b_2,s_1), (b_3,s_3)\}$ and the core of this market consists of the set of payoff vectors $(x,y)\in\mathbb{R}^3\times\mathbb{R}^2$ such that 
 \begin{equation}\label{buyers core}\begin{array}{lll}
 0\le x_1\le 8  &  \quad 3\le x_1-x_3\le 5 & \quad y_1=(8-x_1)+(6-x_2)\\
 0\le x_2\le 6  &  \quad 2\le x_2-x_3\le 3 & \quad y_2=4-x_3\\
 0\le x_3\le 4 & & \end{array}\end{equation}       
                  
Notice that the valuation matrix is the transposed of Example \ref{example:core_workers}, and the capacities of sellers coincide with those of firms in that initial example. As a consequence notice that $C(B)$ coincides with $C(W)$ there. Hence, in our buyer-seller market, $(\overline{x}, \underline{y})=(8,6,4; 0,0)$ is the best core allocation for buyers while $(\underline{x},\overline{y})=(3,2,0; 9,4)$ is the best core allocation for sellers. However, in $(\underline{x},\overline{y})$, $s_1$ sells one unit to $b_1$ at the price $p_{1b_1}=a_{11}-x_1=5$ and sells a second unit to $b_2$ at the price $p_{1b_2}=a_{21}-x_2=4$, which means that $(3,2,0;9,4)$ is not supported by a competitive equilibrium.

To obtain the set of CE payoff vectors of this example, $CE(B)$, we only need to add to the set of inequalities (\ref{buyers core}) the fact that the two units of $s_1$ are sold at the same price,  $8-x_1=6-x_2$, that is $x_1-x_2=2$. By representing $CE(B)$, it is easy to check that it is the polytope spanned by the following four extreme vectors: $(4,2,0)$, $(5,3,0)$, $(8,6,3)$, and $(8,6,4)$. Then, the minimum CE payoff vector for the buyers is $(4,2,0)$, related to the CE prices $p_1=p_2=4$. 

Notice that the maximum payoff of the buyers in the core, $(8,6,4)$ satisfies the additional equation $x_1-x_2=2$ and hence it is supported by a competitive equilibrium and it is also the maximum CE payoff for buyers related with the minimum CE prices that are $p_1=p_2=0$.

\end{example}

We can provide a sufficient condition in terms of the pairwise valuation matrix that guarantees that all core allocations are supported by competitive prices.

\begin{proposition}
Let $(B,S,A,r)$ be a capacity-balanced many-to-one assignment market where buyers have unitary capacity, and $\mu$ an optimal matching. Then $C(B)=CE(B)$ if for all $j,j'\in S$ and $i,k,i'\in B$ such that $\mu(i)=\mu(k)=j$ and $\mu(i')=j'\neq j$ it holds
\begin{eqnarray}
a_{kj'}+a_{i'j}&\ge &a_{kj}+a_{i'j'},\label{eq1}\\
a_{ij'}+a_{i'j}&\ge &a_{ij}+a_{i'j'},\label{eq2}
\end{eqnarray}
\end{proposition}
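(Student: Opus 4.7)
The plan is to prove the inclusion $C(B)\subseteq CE(B)$ (the reverse inclusion always holds, since the CE conditions in Proposition~\ref{KanekoCE} strictly extend the core conditions). Comparing the core description and the CE description, the only difference lies in item (ii): the core only requires $x_k-x_i\ge a_{k\mu(i)}-a_{i\mu(i)}$ for buyers matched to \emph{different} sellers, while CE requires it for \emph{all} pairs. So, given $x\in C(B)$, it is enough to show that for any seller $j\in S$ and any two buyers $i,k\in B$ with $\mu(i)=\mu(k)=j$, we have $x_k-x_i\ge a_{kj}-a_{ij}$ (the case $i=k$ is trivial, so assume $i\neq k$, in particular $r_j\ge 2$).

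The key idea is to bridge $i$ and $k$ through a third buyer $i'$ matched to a different seller $j'\neq j$, which exists provided the market has at least two sellers (the degenerate single-seller case is handled separately; if $|S|=1$ and $r_j=1$ the claim is vacuous, and if $|S|=1$ with $r_j\ge 2$ the hypothesis has to be read vacuously and the statement has no content). Since $\mu(k)=j\neq j'=\mu(i')$ and $\mu(i')=j'\neq j=\mu(i)$, the core inequalities applied to $x\in C(B)$ yield
\begin{equation*}
x_k-x_{i'}\ \ge\ a_{kj'}-a_{i'j'} \qquad\text{and}\qquad x_{i'}-x_i\ \ge\ a_{i'j}-a_{ij}.
\end{equation*}
Summing these two inequalities telescopes the $x_{i'}$ term and gives
$x_k-x_i\ \ge\ a_{kj'}+a_{i'j}-a_{i'j'}-a_{ij}$.
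Now hypothesis \eqref{eq1} rearranges as $a_{kj'}+a_{i'j}-a_{i'j'}\ge a_{kj}$, so we obtain $x_k-x_i\ge a_{kj}-a_{ij}$, which is precisely the CE inequality for the ordered pair $(i,k)$.

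Applying the same argument with the roles of $i$ and $k$ swapped — bridging through the same $i'$ via the core inequalities $x_i-x_{i'}\ge a_{ij'}-a_{i'j'}$ and $x_{i'}-x_k\ge a_{i'j}-a_{kj}$ — and using hypothesis \eqref{eq2} in place of \eqref{eq1} yields $x_i-x_k\ge a_{ij}-a_{kj}$, i.e.\ the CE inequality for the ordered pair $(k,i)$. Hence both CE inequalities hold, which (combined) force the equality $a_{ij}-x_i=a_{kj}-x_k$; this is exactly the statement that the units sold by $j$ carry a common market price. Since $i,k,j$ were arbitrary, $x$ satisfies all the CE constraints of Proposition~\ref{KanekoCE} and belongs to $CE(B)$. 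The main conceptual step is therefore choosing the correct ``witness'' buyer $i'$ matched to a different seller and verifying that \eqref{eq1}--\eqref{eq2} are tailored exactly to compensate for the value lost through this detour; the rest is linear algebra on the core inequalities. No substantial obstacle is expected beyond checking that such an $i'$ always exists, i.e.\ that the hypothesis is not considered in a vacuous single-seller regime.
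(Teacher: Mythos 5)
Your proof is correct and follows essentially the same route as the paper: bridging two buyers matched to the same seller through a third buyer $i'$ matched to a different seller, summing the two core inequalities, and using \eqref{eq1} and \eqref{eq2} to obtain the two opposite CE inequalities, hence the equality $x_k-x_i=a_{kj}-a_{ij}$. The only difference is that you explicitly flag the existence of the witness $i'$ and the degenerate single-seller case, which the paper leaves implicit.
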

\begin{proof}
Take $x\in C(B)$. From the core constraints,  together with (\ref{eq1}) and (\ref{eq2}), we get
\begin{eqnarray*}
x_k-x_i&=&(x_k-x_{i'})+(x_{i'}-x_i)\ge a_{kj'}-a_{i'j'}+a_{i'j}-a_{ij}\ge a_{kj}-a_{ij}, \mbox{ and }\\
x_i-x_k &=& (x_i-x_{i'})+(x_{i'}-x_k)\ge a_{ij'}-a_{i'j'}+a_{i'j}-a_{kj}\ge a_{ij}-a_{kj},\end{eqnarray*}
which proves that $x_k-x_i=a_{kj}-a_{ij}$. 
\end{proof}
Our previous characterization (Theorem \ref{thm:ext_core_char}) of the extreme competitive salaries of the multiple-partners job market can be straightforwardly extended to the extreme competitive buyers' payoffs of Kaneko's buyer-seller market, simply defining the tight digraph of a CE payoff vector using all the inequalities in Proposition \ref{KanekoCE}. Take, for instance, the extended tight digraph of the minimum  CE payoff vector for buyers in Example \ref{ex:Kaneko} (see Figure \ref{fig:Kaneko-minCE}) and notice that it has a unique source.

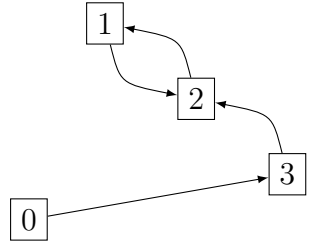
\begin{figure}[h]
\begin{center}
\begin{tikzpicture}[scale=1] 
\path 
	(0,-3.6) node(a00) [rectangle,draw] {$0$}
    (1,-1) node(a11) [rectangle,draw] {1}
    (2.2,-2) node(a22) [rectangle,draw] {2}
    (3.4,-3) node(a33) [rectangle,draw] {3}
 ; 
\draw[-latex] (a00) -- (a33) ;
\draw[-latex] (a11) .. controls +(+0.2,-0.8) .. (a22) ;
\draw[-latex] (a22) .. controls +(-0.2,+0.8) .. (a11) ;
\draw[-latex] (a33) .. controls +(-0.2,+0.8) .. (a22) ;
\end{tikzpicture}
\end{center}
\caption{Extended tight digraph of the minimum CE payoff vector (4,2,0).} \label{fig:Kaneko-minCE}
\end{figure}

Similarly, the definition of the max-min vectors in Definition \ref{def:maxmin} can be modified by the omission of the condition $j^{\mu}\ne j_r^{\mu}$. Then, a result analogous to Theorem \ref{thm:extremes are maxmin} guarantees that each extreme point of $CE(B)$ coincides with one of these max-min vectors.

\section{Concluding remarks}
\label{sec:remarks}
The core of many-to-one assingment markets had been studied to some extent, but little was known about its structure. In this paper, we have expanded the known results on (dis)similarities between the one-to-one case and the many-to-one case (and hence the many-to-many case). First, we have studied the relationship between the core and other solution concepts. We have shown the  kernel may not be included in the core, and remarkably, the coincidence between the core and the bargaining set does not hold. Our Example \ref{ex:kernel} represents a phenomenon that also happens in one-to-one markets: the short side of the market gets all the profit in any core allocation, without taking into account that the cooperation of the agents on the large side is needed to attain this profit. In the many-to-one markets, this situation may be solved by considering the kernel or the bargaining set to look for more fair distributions of the value of the market. In particular, in our example, the Shapley value is one of these kernel distributions that lie outside the core.

Secondly, the described procedure of the max-min salary vectors provides all extreme core allocations. Since all orders on the set of workers must be considered, this is not a very efficient procedure. Nevertheless, compared to the well-studied maximum and minimum competitive salary vectors, it allows to find other combinations of competitive salaries where the payoff of some workers is maximized while for others it is minimized, everything according to a given order of priority.

Let us finally point out that the negative results that have been presented, such as the non coincidence between core and bargaining set, or that not all extreme core points are lemaral vectors, trivially apply also to the class of many-to-many assignment markets.


\appendix
\section{Appendix}
\label{ex:not_INTO_ONTO}
This example shows that, for many-to-one assignment markets, neither all lemaral vectors are extreme core allocations nor all extreme core points can be obtained as lemaral vectors for some given order on the agents.
Consider the market $\gamma=(F,W,A,r)$ where $F=\{f_{1},f_{2}\}$, $W=\{w_{1},w_{2}\}$ are the set of firms and the set of workers respectively, and the capacities of the firms are $r=(2,1)$. The per-unit pairwise valuations are given in the following matrix:
 $$A =\bordermatrix{~ & w_{1} & w_{2} \cr
                  f_{1} & 4 & 3  \cr
                  f_{2} & 3 & 2 \cr}.$$
The corresponding many-to-one assignment game $(N,v_{\gamma})$ and its dual game are:
$$
\begin{array}{c|cccc|c}
v  & x_1 & x_2 & y_1 & y_2 & v^*\\
\hline
0 & 1 & . & . & . & 4 \\ 
0 & . & 1 & . & . & 0 \\ 
0 & . & . & 1 & . & 4 \\ 
0 & . & . & . & 1 & 3 \\ 
\hline
0 & 1 & 1 & . & . & 7 \\ 
4 & 1 & . & 1 & . & 5 \\ 
3 & 1 & . & . & 1 &  4 \\ 
3 & . & 1 & 1 & . & 4 \\ 
2 & . & 1 & . & 1 & 3 \\ 
0 & . & . & 1 & 1 & 7 \\ 
\hline
4  & 1 & 1 & 1 & . & 7 \\ 
3 & 1 & 1 & . & 1 & 7 \\ 
7 & 1 & . & 1 & 1 & 7 \\ 
3 & . & 1 & 1 & 1 & 7 \\ 
\hline
7 & 1 & 1 & 1 & 1 & 7 \\ 
\hline
\end{array} 
$$
The marginal payoff of firm $f_{1}$ is $4=v^{*}(f_{1})=v(N)-v(N\setminus f_{1})$ but it is not achievable in the core since the core-maximum for firm $f_{1}$ is $\max_{C}x_1=2=v^{*}(\{f_{2}\})+v^{*}(\{f_{1},w_{1}\})+v^{*}(\{f_{1},w_{2}\})-v^{*}(N)$. This shows that the marginal payoff of a player to the grand coalition may not be the core maximum payoff of the corresponding player for the many-to-one assignment game. 

Now, take any order that starts with the firm $f_{1}$, $\sigma=(f_{1}, arbitrary)$. For that given order, the payoff of $f_{1}$ is 4 which cannot be attained at a core allocation. Hence, a lemaral obtained by an order $\sigma=(f_{1}, arbitrary)$ cannot be a core allocation. 

Next, take the extreme core allocation $(2,0;3,2)$. Notice that $\min_{C}y_{2}=2=v(\{f_{2},w_{2}\})+v(\{f_{1},w_{1},w_{2}\})-v(N)$ and both $f_{1}$ and $f_{2}$ obtain their core maximum allocations, and hence $(2,0;3,2)$ is an extreme core allocation. We will try to construct a lemaral vector $(x,y)\in \mathbb{R}^{N}$ that coincide with the aforementioned extreme core allocation. First notice that $f_{2}$ is the only player that is paid her marginal payoff. Hence, we only take into account orders that start with player $f_{2}$:
\begin{itemize}
\item Player 2 achieves her marginal payoff under an order $\sigma=(f_{2}, arbitrary)$: $x_{2}=0$,
\item $\sigma=(f_{2},f_{1}, \ldots)$: Then, $$x_{1}=\min\{v^{*}(f_{1}),v^{*}(\{f_{1},f_{2}\})-x_{2}\}=\min\{4,7-0\}=4\neq 2=x_1,$$
\item $\sigma=(f_{2},w_{1}, \ldots)$: Then, $$y_{1}=\min\{4,4-0\}=4\neq 3=y_{1},$$
\item $\sigma=(f_{2},w_{2}, \ldots)$: Then, $$y_{2}=\min\{3,3-0\}=3\neq 2=y_{2}.$$
\end{itemize}
As a consequence, there does not exist an order to construct a lemaral vector that coincides with the extreme core allocation  $(2,0;3,2)$. 

 

\section{Appendix}
\label{ex:all_extended_payoffs}

All max-min salary vectors in the market from Example \ref{first_ex_revisited}:
\renewcommand{\arraystretch}{1.2}
$$
\begin{array}{|c|c@{\;}c@{\;}c|c|}
\textrm{ext. order} & y_1 & y_2 & y_3 & \textrm{in  core?}\\
\hline 
(\underline{1}, \underline{2}, \underline{3}) & 0 & 0 & 0 & - \\
(\underline{1}, \underline{2}, \overline{3}) & 0 & 0 & -3 & - \\
(\underline{1}, \overline{2}, \underline{3}) & 0 & 6 & 3 & - \\
(\underline{1}, \overline{2}, \overline{3}) & 0 & 6 & -3 & - \\
\hline 
(\overline{1}, \underline{2}, \underline{3}) & 8 & 0 & 3 & - \\
(\overline{1}, \underline{2}, \overline{3}) & 8 & 0 & -2 & - \\
(\overline{1}, \overline{2}, \underline{3}) & 8 & 6 & 3 & + \\
(\overline{1}, \overline{2}, \overline{3}) & 8 & 6 & 4 & + \\
\hline 
\end{array}
\qquad\qquad
\begin{array}{|c|c@{\;}c@{\;}c|c|}
\textrm{ext. order} & y_1 & y_2 & y_3 & \textrm{in  core?}\\
\hline 
(\underline{1}, \underline{3}, \underline{2}) & 0 & 2 & 0 & - \\
(\underline{1}, \underline{3}, \overline{2}) & 0 & 3 & 0 & - \\
(\underline{1}, \overline{3}, \underline{2}) & 0 & 5 & -3 & - \\
(\underline{1}, \overline{3}, \overline{2}) & 0 & 6 & -3 & - \\
\hline 
(\overline{1}, \underline{3}, \underline{2}) & 8 & 5 & 3 & + \\
(\overline{1}, \underline{3}, \overline{2}) & 8 & 6 & 3 & + \\
(\overline{1}, \overline{3}, \underline{2}) & 8 & 6 & 4 & + \\
(\overline{1}, \overline{3}, \overline{2}) & 8 & 6 & 4 & + \\
\hline 
\end{array}
$$
$$
\begin{array}{|c|c@{\;}c@{\;}c|c|}
\textrm{ext. order} & y_1 & y_2 & y_3 & \textrm{in  core?}\\
\hline 
(\underline{2}, \underline{1}, \underline{3}) & 0 & 0 & 0 & - \\
(\underline{2}, \underline{1}, \overline{3}) & 0 & 0 & -3 & - \\
(\underline{2}, \overline{1}, \underline{3}) & 8 & 0 & 3 & - \\
(\underline{2}, \overline{1}, \overline{3}) & 8 & 0 & -2 & - \\
\hline 
(\overline{2}, \underline{1}, \underline{3}) & 0 & 6 & 3 & - \\
(\overline{2}, \underline{1}, \overline{3}) & 0 & 6 & 4 & - \\
(\overline{2}, \overline{1}, \underline{3}) & 8 & 6 & 3 & + \\
(\overline{2}, \overline{1}, \overline{3}) & 8 & 6 & 4 & + \\
\hline 
\end{array}
\qquad\qquad
\begin{array}{|c|c@{\;}c@{\;}c|c|}
\textrm{ext. order} & y_1 & y_2 & y_3 & \textrm{in  core?}\\
\hline 
(\underline{2}, \underline{3}, \underline{1}) & 3 & 0 & 0 & - \\
(\underline{2}, \underline{3}, \overline{1}) & 5 & 0 & 0 & - \\
(\underline{2}, \overline{3}, \underline{1}) & 1 & 0 & -2 & - \\
(\underline{2}, \overline{3}, \overline{1}) & 3 & 0 & -2 & - \\
\hline 
(\overline{2}, \underline{3}, \underline{1}) & 6 & 6 & 3 & + \\
(\overline{2}, \underline{3}, \overline{1}) & 8 & 6 & 3 & + \\
(\overline{2}, \overline{3}, \underline{1}) & 7 & 6 & 4 & + \\
(\overline{2}, \overline{3}, \overline{1}) & 8 & 6 & 4 & + \\
\hline 
\end{array}
$$
$$
\begin{array}{|c|c@{\;}c@{\;}c|c|}
\textrm{ext. order} & y_1 & y_2 & y_3 & \textrm{in  core?}\\
\hline 
(\underline{3}, \underline{1}, \underline{2}) & 3 & 2 & 0 & + \\
(\underline{3}, \underline{1}, \overline{2}) & 3 & 3 & 0 & + \\
(\underline{3}, \overline{1}, \underline{2}) & 5 & 2 & 0 & + \\
(\underline{3}, \overline{1}, \overline{2}) & 5 & 3 & 0 & + \\
\hline 
(\overline{3}, \underline{1}, \underline{2}) & 7 & 6 & 4 & + \\
(\overline{3}, \underline{1}, \overline{2}) & 7 & 6 & 4 & + \\
(\overline{3}, \overline{1}, \underline{2}) & 8 & 6 & 4 & + \\
(\overline{3}, \overline{1}, \overline{2}) & 8 & 6 & 4 & + \\
\hline 
\end{array}
\qquad\qquad
\begin{array}{|c|c@{\;}c@{\;}c|c|}
\textrm{ext. order} & y_1 & y_2 & y_3 & \textrm{in  core?}\\
\hline 
(\underline{3}, \underline{2}, \underline{1}) & 3 & 2 & 0 & + \\
(\underline{3}, \underline{2}, \overline{1}) & 5 & 2 & 0 & + \\
(\underline{3}, \overline{2}, \underline{1}) & 3 & 3 & 0 & + \\
(\underline{3}, \overline{2}, \overline{1}) & 5 & 3 & 0 & + \\
\hline 
(\overline{3}, \underline{2}, \underline{1}) & 7 & 6 & 4 & + \\
(\overline{3}, \underline{2}, \overline{1}) & 8 & 6 & 4 & + \\
(\overline{3}, \overline{2}, \underline{1}) & 7 & 6 & 4 & + \\
(\overline{3}, \overline{2}, \overline{1}) & 8 & 6 & 4 & + \\
\hline 
\end{array}
$$
\renewcommand{\arraystretch}{1}

Notice that since workers 1 and 2 are optimally matched to the same firm, thus the difference between their core payoffs is not constrained, whenever they occupy consecutive positions in an extended order the associated max-min vector is the same. Based on this observation, the full enumeration process can be somewhat shortened.

\bibliography{many2one}

\begin{thebibliography}{39}
\newcommand{\enquote}[1]{``#1''}
\providecommand{\natexlab}[1]{#1}
\providecommand{\url}[1]{\texttt{#1}}
\providecommand{\urlprefix}{URL }
\providecommand{\bibAnnoteFile}[1]{%
  \IfFileExists{#1}{\begin{quotation}\noindent\textsc{Key:} #1\\
  \textsc{Annotation:}\ \input{#1}\end{quotation}}{}}
\providecommand{\bibAnnote}[2]{%
  \begin{quotation}\noindent\textsc{Key:} #1\\
  \textsc{Annotation:}\ #2\end{quotation}}

\bibitem[{Atay and Solymosi(2018)}]{as18}
Atay, A. and T.~Solymosi (2018), \enquote{On bargaining sets of
  supplier-firm-buyer games.} \emph{Economics Letters}, 167, 99--103.
\bibAnnoteFile{as18}

\bibitem[{Bahel(2016)}]{b16}
Bahel, E. (2016), \enquote{On the core and bargaining set of a veto game.}
  \emph{International Journal of Game Theory}, 45, 543--566.
\bibAnnoteFile{b16}

\bibitem[{Bahel(2021)}]{b21}
Bahel, E. (2021), \enquote{Hyperadditive games and applications to networks or
  matching problems.} \emph{Journal of Economic Theory}, 191, 105168.
\bibAnnoteFile{b21}

\bibitem[{Ba{\"\i}ou and Balinski(2000)}]{bb00}
Ba{\"\i}ou, M. and M.~Balinski (2000), \enquote{The stable admissions
  polytope.} \emph{Mathematical Programming}, 87, 427--439.
\bibAnnoteFile{bb00}

\bibitem[{Ba{\"\i}ou and Balinski(2002)}]{bb02}
Ba{\"\i}ou, M. and M.~Balinski (2002), \enquote{The stable allocation (or
  ordinal transportation) problem.} \emph{Mathematics of Operations Research},
  27, 485--503.
\bibAnnoteFile{bb02}

\bibitem[{Balinski and Gale(1990)}]{bg90}
Balinski, M.~L. and D.~Gale (1990), \enquote{On the core of the assignment
  game.} In \emph{Functional Analysis, Optimization and Mathematical
  Economics}, 274--289, Oxford University Press.
\bibAnnoteFile{bg90}

\bibitem[{Benedek et~al.(2021)Benedek, Fliege, and Nguyen}]{betal21}
Benedek, M., J.~Fliege, and T.-D. Nguyen (2021), \enquote{Finding and verifying
  the nucleolus of cooperative games.} \emph{Mathematical Programming}, 190,
  135--170.
\bibAnnoteFile{betal21}

\bibitem[{Davis and Maschler(1965)}]{dm65}
Davis, M. and M.~Maschler (1965), \enquote{The kernel of a cooperative game.}
  \emph{Naval Research Logistics Quarterly}, 12, 223--259.
\bibAnnoteFile{dm65}

\bibitem[{Davis and Maschler(1967)}]{dm67}
Davis, M. and M.~Maschler (1967), \enquote{Existence of stable payoff
  configurations for cooperative games.} In \emph{Essays in Game Theory and
  Mathematical Economics in Honor of Oskar Morgenstern} (M.~Shubik, ed.),
  39--52, Princeton University Press.
\bibAnnoteFile{dm67}

\bibitem[{Demange(1982)}]{d82}
Demange, G. (1982), \enquote{{S}trategyproofness in the assignment market
  game.} \emph{Laboratorie d'{\'E}conom{\'e}trie de l'{\'E}cole Polytechnique,
  Mimeo, Paris}.
\bibAnnoteFile{d82}

\bibitem[{Dom\`{e}nech and N\'{u}\~{n}ez(2022)}]{dn22}
Dom\`{e}nech, G. and M.~N\'{u}\~{n}ez (2022), \enquote{Axioms for the optimal
  stable rules and the fair division rules in a multiple-partners job market.}
  \emph{Games and Economic Behavior}, 136, 469--484.
\bibAnnoteFile{dn22}

\bibitem[{Granot and Granot(1992)}]{gg92}
Granot, D. and F.~Granot (1992), \enquote{On some network flow games.}
  \emph{Mathematics of Operations Research}, 17, 792--841.
\bibAnnoteFile{gg92}

\bibitem[{Gul and Stachetti(1999)}]{gs99}
Gul, F. and E.~Stachetti (1999), \enquote{Walrasian equilibrium with gross
  substitutes.} \emph{Journal of Ecomomics Theory}, 87, 95--124.
\bibAnnoteFile{gs99}

\bibitem[{Hamers et~al.(2002)Hamers, Klijn, Solymosi, Tijs, and
  Villar}]{hetal02}
Hamers, H., F.~Klijn, T.~Solymosi, S.~Tijs, and J.~P. Villar (2002),
  \enquote{Assignment games satisfy the {C}o{M}a-property.} \emph{Games and
  Economic Behavior}, 38, 231--239.
\bibAnnoteFile{hetal02}

\bibitem[{Izquierdo et~al.(2007)Izquierdo, N{\'u}{\~n}ez, and Rafels}]{ietal07}
Izquierdo, J.~M., M.~N{\'u}{\~n}ez, and C.~Rafels (2007), \enquote{A simple
  procedure to obtain the extreme core allocations of an assignment market.}
  \emph{International Journal of Game Theory}, 36, 17--26.
\bibAnnoteFile{ietal07}

\bibitem[{Kaneko(1976)}]{k76}
Kaneko, Mamoru (1976), \enquote{On the core and competitive equilibria of a
  market with indivisible goods.} \emph{Naval Research Logistics Quarterly},
  23, 321--337.
\bibAnnoteFile{k76}

\bibitem[{Leonard(1983)}]{l83}
Leonard, H.~B. (1983), \enquote{{E}licitation of honest preferences for the
  assignment of individuals to positions.} \emph{Journal of Political Economy},
  91, 461--479.
\bibAnnoteFile{l83}

\bibitem[{N\'{u}\~{n}ez and Rafels(2002)}]{nr02}
N\'{u}\~{n}ez, M. and C.~Rafels (2002), \enquote{{T}he assignment game: the
  $\tau$-value.} \emph{International Journal of Game Theory}, 31, 411--422.
\bibAnnoteFile{nr02}

\bibitem[{N{\'u}{\~n}ez and Rafels(2015)}]{nr15}
N{\'u}{\~n}ez, M. and C.~Rafels (2015), \enquote{A survey on assignment
  markets.} \emph{Journal of Dynamics and Games}, 2, 227--256.
\bibAnnoteFile{nr15}

\bibitem[{N{\'u}{\~n}ez and Solymosi(2017)}]{ns17}
N{\'u}{\~n}ez, M. and T.~Solymosi (2017), \enquote{Lexicographic allocations
  and extreme core payoffs: the case of assignment games.} \emph{Annals of
  Operations Research}, 254, 211--234.
\bibAnnoteFile{ns17}

\bibitem[{Peleg(1966)}]{p66}
Peleg, B. (1966), \enquote{The kernel of the general-sum four-person game.}
  \emph{Canadian Journal of Mathematics}, 18, 673--677.
\bibAnnoteFile{p66}

\bibitem[{P\'{e}rez-Castrillo and Sotomayor(2017)}]{pcs17}
P\'{e}rez-Castrillo, D. and M.~Sotomayor (2017), \enquote{On the manipulability
  of competitive equilibrium rules in many-to-many buyer-seller markets.}
  \emph{International Journal of Game Theory}, 46, 1137--1161.
\bibAnnoteFile{pcs17}

\bibitem[{Quint(1991)}]{q91}
Quint, T. (1991), \enquote{Characterization of cores of assignment games.}
  \emph{International Journal of Game Theory}, 19, 413--420.
\bibAnnoteFile{q91}

\bibitem[{S\'{a}nchez-Soriano et~al.(2001)S\'{a}nchez-Soriano, Lopez, and
  Garcia-Jurado}]{ssetal01}
S\'{a}nchez-Soriano, J., M.~A. Lopez, and I.~Garcia-Jurado (2001),
  \enquote{{O}n the core of transportation games.} \emph{Mathematical Social
  Sciences}, 41, 215--225.
\bibAnnoteFile{ssetal01}

\bibitem[{Schmeidler(1969)}]{s69}
Schmeidler, D. (1969), \enquote{{T}he nucleolus of a characteristic function
  game.} \emph{SIAM Journal of Applied Mathematics}, 17, 1163--1170.
\bibAnnoteFile{s69}

\bibitem[{Shapley and Shubik(1971)}]{ss71}
Shapley, L. and M.~Shubik (1971), \enquote{{T}he assignment game {I}: {T}he
  core.} \emph{International Journal of Game Theory}, 1, 111--130.
\bibAnnoteFile{ss71}

\bibitem[{Shapley(1953)}]{s53}
Shapley, L.S. (1953), \enquote{A value for n-person games.} In
  \emph{Contributions to the theory of games II}, 307--317, Princeton
  University Press.
\bibAnnoteFile{s53}

\bibitem[{Solymosi(1999)}]{s99}
Solymosi, T. (1999), \enquote{On the bargaining set, kernel and core of
  superadditive games.} \emph{International Journal of Game Theory}, 28,
  229--240.
\bibAnnoteFile{s99}

\bibitem[{Solymosi(2002)}]{soly02}
Solymosi, T. (2002), \enquote{The bargaining set of four-person balanced
  games.} \emph{International Journal of Game Theory}, 31, 1--11.
\bibAnnoteFile{soly02}

\bibitem[{Solymosi(2008)}]{sol08}
Solymosi, T. (2008), \enquote{Bargaining sets and the core in partitioning
  games.} \emph{Central European Journal of Operations Research}, 16, 425--440.
\bibAnnoteFile{sol08}

\bibitem[{Solymosi and Raghavan(2001)}]{sr01}
Solymosi, T. and T.E.S. Raghavan (2001), \enquote{Assignment games with stable
  core.} \emph{International Journal of Game Theory}, 30, 177--185.
\bibAnnoteFile{sr01}

\bibitem[{Solymosi et~al.(2003)Solymosi, Raghavan, and Tijs}]{setal03}
Solymosi, T., T.E.S. Raghavan, and S.~Tijs (2003), \enquote{Bargaining sets and
  the core in permutation.} \emph{Central European Journal of Operations
  Fesearch}, 11, 93--101.
\bibAnnoteFile{setal03}

\bibitem[{Sotomayor(1992)}]{s92}
Sotomayor, M. (1992), \enquote{The multiple partners game.} In
  \emph{Equilibrium and Dynamics}, 322--354, Springer.
\bibAnnoteFile{s92}

\bibitem[{Sotomayor(2002)}]{so02}
Sotomayor, M. (2002), \enquote{{A} labor market with heterogeneous firms and
  workers.} \emph{International Journal of Game Theory}, 31, 269--283.
\bibAnnoteFile{so02}

\bibitem[{Sotomayor(2007)}]{s07}
Sotomayor, M. (2007), \enquote{Connecting the cooperative and competitive
  structures of the multiple-partners assignment game.} \emph{Journal of
  Economic Theory}, 134, 155--174.
\bibAnnoteFile{s07}

\bibitem[{Thompson(1981)}]{t81}
Thompson, G.~L. (1981), \enquote{{A}uctions and market games.} In \emph{Essays
  in Game Theory and Mathematical Economics in Honor of Oskar Morgenstern}
  (R.~Aumann, ed.), 181--196, Bibliographisches Institut, Mannheim.
\bibAnnoteFile{t81}

\bibitem[{Tijs(1981)}]{tijs81}
Tijs, S. (1981), \enquote{{B}ounds for the core of a game and the
  $\tau$-value.} In \emph{Game Theory and Mathematical Economics} (O.~Moeschlin
  and D.~Pallaschke, eds.), 123--132, North-Holland Publishing Company,
  Amsterdam.
\bibAnnoteFile{tijs81}

\bibitem[{Trudeau and Vidal-Puga(2017)}]{tvp17}
Trudeau, C. and J.~Vidal-Puga (2017), \enquote{On the set of extreme core
  allocations for minimal cost spanning tree problems.} \emph{Journal of
  Economic Theory}, 169, 425--452.
\bibAnnoteFile{tvp17}

\bibitem[{van~den Brink et~al.(2021)van~den Brink, N{\'u}{\~n}ez, and
  Robles}]{vbnr21}
van~den Brink, R., M.~N{\'u}{\~n}ez, and F.~Robles (2021), \enquote{Valuation
  monotonicity, fairness and stability in assignment problems.} \emph{Journal
  of Economic Theory}, 195, 105277.
\bibAnnoteFile{vbnr21}

\end{thebibliography}
\bibliographystyle{te}
\end{document}